\def\eqref#1{equation~\ref{#1}}
\def\1{\bm{1}}
\DeclareMathAlphabet{\mathsfit}{\encodingdefault}{\sfdefault}{m}{sl}
\SetMathAlphabet{\mathsfit}{bold}{\encodingdefault}{\sfdefault}{bx}{n}
\newcommand{\cmark}{\ding{51}}%
\newcommand{\xmark}{\ding{55}}%
\theoremstyle{definition}
\newtheorem{definition}{Definition}[section]
\newtheorem{theorem}{Theorem}[section]
\newtheorem{lemma}{Lemma}[section]
\newtheorem{property}{Property}[section]
\title{Post-Training Detection of Backdoor Attacks for Two-Class and Multi-Attack Scenarios\vspace{-0.15in}}
\author{Zhen Xiang, David J. Miller \& George Kesidis\\
School of EECS\\
Pennsylvania State University\\
\texttt{\{zux49, djm25, gik2\}@psu.edu}
\vspace{-0.15in}
}
\begin{document}

\maketitle
\vspace{-0.1in}
\begin{abstract}
\vspace{-0.15in}
Backdoor attacks (BAs) are an emerging threat to deep neural network classifiers. A victim classifier will predict to an attacker-desired target class whenever a test sample is embedded with the same backdoor pattern (BP) that was used to poison the classifier's training set. Detecting whether a classifier is backdoor attacked is not easy in practice, especially when the defender is, e.g., a downstream user without access to the classifier's training set. This challenge is addressed here by a reverse-engineering defense (RED), which has been shown to yield state-of-the-art performance in several domains. However, existing REDs are not applicable when there are only {\it two classes} or when {\it multiple attacks} are present. These scenarios are first studied in the current paper, under the practical constraints that the defender neither has access to the classifier's training set nor to supervision from clean reference classifiers trained for the same domain. We propose a detection framework based on BP reverse-engineering and a novel {\it expected transferability} (ET) statistic. We show that our ET statistic is effective {\it using the same detection threshold}, irrespective of the classification domain, the attack configuration, and the BP reverse-engineering algorithm that is used. The excellent performance of our method is demonstrated on six benchmark datasets. Notably, our detection framework is also applicable to multi-class scenarios with multiple attacks. Code is available at \url{https://github.com/zhenxianglance/2ClassBADetection}.
\vspace{-0.035in}
\end{abstract}

\vspace{-0.2in}
\section{Introduction}\label{sec:introduction}
\vspace{-0.1in}

Despite the success of deep neural network (DNN) classifiers in many research areas, their vulnerabilities have been recently exposed \cite{Review2, Review}. One emerging threat to DNN classifiers is a backdoor attack (BA) \cite{BAsurvey}. Here, a classifier will predict to the attacker's target class when a test sample is embedded with the same backdoor pattern (BP) that was used to poison the classifier's training set. On the other hand, the classifier's clean test set accuracy (i.e., on samples without an embedded BP) is largely uncompromised \cite{BadNet, Targeted, Trojan}, which makes the attack difficult to detect. 
\vspace{-0.035in}

Early BA defenses aim to cleanse the possibly poisoned training set of the victim classifier \cite{SS, AC}. But deployment of these defenses is not feasible when the defender is the user/consumer of the classifier, without access to its training set or to any prior knowledge of the backdoor pattern \cite{DataLimited}. A major defense approach suitable for this scenario is a reverse-engineering defense (RED). In general, a RED treats each class as a putative BA target class and trial-reverse-engineers a BP. Then, detection statistics are derived from the estimated pattern for each putative target class. If there is a BA, the pattern estimated for the true BA target class will likely be correlated with the true BP used by the attacker, such that the associated statistics will likely be anomalous referenced against the statistics for the other (non-BA) classes \cite{NC, Post-TNNLS}. Notably, RED-based anomaly detection does not require supervision from clean classifiers trained for the same domain like, e.g., \cite{meta_sup}.
\vspace{-0.035in}

Although existing REDs have achieved leading performance in many practical detection tasks, a fundamental limitation still remains. 
RED-based anomaly detection typically requires estimating a null distribution used to assess anomalies, with the number of detection statistics (samples) used to estimate this null a function of the number of classes $K$.  \cite{NC} uses $O(K)$ statistics, and \cite{Post-TNNLS} uses even more statistics -- $O(K^2)$. These methods are not applicable for domains with $K=2$, e.g. sentiment classification \cite{sentiment}, disease diagnosis \cite{disease}, etc., because there are insufficient statistics for estimating the null. More generally, their accuracy is affected when the number of classes is small ($K > 2$, but small).
\vspace{-0.035in}

In this paper, we focus on BA detection for the {\it two-class} (and possibly {\it multi-attack}) scenario, under two practical constraints: a) the defender has {\it no access} to the training set of the victim classifier (or to any BP used by the attacker); b) supervision from clean classifiers trained for the same domain is {\it not available}. This scenario is clearly more challenging than the one considered by existing REDs, for which there is at most one BA target class and a sufficient number of non-target classes to learn an accurate null. As the first to address this difficult problem, our main contributions are as follows:
\begin{itemize}[leftmargin=0.12in]
\vspace{-0.125in}
\setlength{\itemsep}{-2pt}
\item We propose a detection framework that involves BP reverse-engineering in order to address constraint a) above. However, instead of performing anomaly detection as existing REDs do, we process each class independently using a novel detection statistic dubbed {\it expected transferability} (ET). This allows our method to be applicable to the two-class, multi-attack scenario (as well as to the multi-class and (possibly) multi-attack scenario).	
\item We show that for ET there is a large range of effective choices for the detection threshold, which {\it commonly} contains a particular threshold value effective for detecting BAs irrespective of the classification domain or particulars of BA. This common threshold is mathematically derived and is based on properties for general classification tasks that have been verified empirically by many existing works; thus, constraint b) that no domain-specific supervision is needed is well-addressed.
\item Our ET statistic is obtained by BP reverse-engineering and does not strongly depend on the type of BP, or the particular RED objective function and optimization technique used to reverse-engineer putative BPs. Thus, our detection framework can incorporate existing reverse-engineering techniques and potentially their future advances.
\item We show the effectiveness of our detection framework on six popular benchmark image datasets.
\end{itemize}



\vspace{-0.15in}
\section{Related Work}\label{sec:related_work}
\vspace{-0.125in}

{\bf Backdoor Attack (BA).} For a classification task with sample space ${\mathcal X}$ and label space ${\mathcal C}$, a BA aims to have the victim classifier $f:{\mathcal X}\rightarrow{\mathcal C}$ predict to the target class $t\in{\mathcal C}$ whenever a test sample ${\bf x}\in{\mathcal X}$ is embedded with the backdoor pattern (BP) \cite{BadNet}. BAs were initially proposed for image classification, but have also been proposed for other domains and tasks \cite{PCBA, zhai2021backdoor, li2022few}. While we focus on images experimentally, our detector is also applicable to other domains. Major image BPs include: 1) an additive perturbation ${\bf v}$ embedded by
\begin{equation}\label{eq:additive_perturbation}
\tilde{\bf x} = [{\bf x} + {\bf v}]_{\rm c}
\vspace{-0.05in}
\end{equation}
where $||{\bf v}||_2$ is small (for imperceptibility) and $[\cdot]_{\rm c}$ is a clipping function \cite{Targeted, Haoti, Post-TNNLS}; or 2) a patch ${\bf u}$ inserted using an image-wide binary mask ${\bf m}$ via 
\begin{equation}\label{eq:patch_replacement}
\tilde{\bf x} = (1 - {\bf m})\odot{\bf x} + {\bf m}\odot{\bf u},
\vspace{-0.05in}
\end{equation}
where $||{\bf m}||_1$ is small for imperceptibility) and $\odot$ represents element-wise multiplication \cite{BadNet, NC, MAMF}. Typically\footnote{A clean-label BA with a different strategy \cite{Clean_Label_BA} is also detectable by our method (Apdx. \ref{sec:clean-label}).}, BAs are launched by inserting into the training set a small set of samples labeled to the target class and embedded with the same BP that will be used during testing. Such data poisoning can be achieved {\it e.g.} when DNN training is outsourced to parties that are possibly malicious \cite{BadNet}. BAs are also stealthy since they do not degrade the classifier's accuracy on clean test samples; hence they are not detectable based on validation set accuracy.
\vspace{-0.035in}

{\bf Backdoor Defense.} Some BA defenses aim to separate backdoor training samples from clean ones during the training process \cite{SS, AC, Differential_Privacy, CI, huang2022backdoor}. However, their deployment is not feasible in many practical scenarios where the defender is a downstream user who has no access to the training process. A family of pruning-based method ``removes'' the backdoor mapping by inspecting neuron activations (and removing some neurons) \cite{FP, NAD}. These methods cause non-negligible degradation to classification accuracy, they may remove neurons even when there is no backdoor, and moreover they do not make explicit detection inferences. \cite{meta_sup} and \cite{meta_unsup} train a binary classifier to classify models as ``with'' and ``without'' BAs; but such training requires clean classifiers from the same domain or a significant number of labeled samples and heavy computation to train these clean classifiers. \cite{SentiNet} and \cite{STRIP} can detect triggering of a backdoor by an observed test sample. Unlike these methods, the existing RED-based detector (introduced next) reliably detects backdoors without the need to observe any test samples.
\vspace{-0.035in}

{\bf Reverse-Engineering Defense (RED).} RED is a family of BA defenses that do not need access to the classifier's training set, nor to any prior knowledge of the BP that may be used in an attack. Without knowing whether the classifier has been backdoor-attacked, a RED trial-reverse-engineers a BP for each putative BA target class (or possibly for each (source, target) BA class pair) using a small, clean dataset independently collected by the defender. Then, for each class, a detection statistic is derived from the estimated pattern and used to infer if the class is a BA target class or not. For example, to detect BAs with an additive perturbation BP (Eq. (\ref{eq:additive_perturbation})), \cite{Post-TNNLS} finds, for each putative target class, the perturbation with the minimum $l_2$ norm that induces a large misclassification fraction to this class when added to images from another class (the source class). Since BPs embedded by Eq. (\ref{eq:additive_perturbation}) usually have a very small $l_2$ norm for imperceptibility, the pattern estimated for the true BA target class (if a BA exists) will likely have a much smaller $l_2$ norm than for non-target classes. 
Thus, unsupervised anomaly detection based on these perturbation sizes is performed -- when there is one BA target class and a sufficient number of non-target classes, the statistic for the BA target class will likely be detected as an anomaly compared with the others.
\vspace{-0.04in}

Except the RED above proposed by \cite{Post-TNNLS}, existing REDs also include Neural Cleanse (NC) proposed by \cite{NC}, which reverse-engineers BPs embedded by Eq. (\ref{eq:patch_replacement}) and uses the $l_1$ norm of the estimated mask as the detection statistic. \cite{Tabor} adds constraints to NC's BP reverse-engineering problem by considering various properties of BPs. \cite{ABS} proposed a novel objective function for BP reverse-engineering leveraging the abnormal internal neuron activations caused by the backdoor mapping. \cite{DeepInspect} constructs clean images for detection using model inversion. \cite{NC_blackbox} queries the classifier for BP reverse-engineering to address the ``black-box'' scenario. \cite{DataLimited} proposes a detection statistic based on the similarity between universal and sample-wise BP estimation.
\vspace{-0.04in}

{\bf Limitation of Existing REDs} The anomaly detection of REDs heavily relies on the assumption that there is a relatively large number of non-target classes, thus providing a sufficient number of statistics to inform estimation of a null distribution. But this assumption does not hold for domains with only two classes. For example, \cite{Post-TNNLS} and \cite{NC} exploit $O(K^2)$ and $O(K)$ statistics in estimating a null respectively, where $K$ is the number of classes. Both of these methods are unsuitable for 2-class problems ($K=2$).

\vspace{-0.125in}
\section{Method}\label{sec:method}
\vspace{-0.1in}

Our main goals are detecting whether a given classifier is backdoor attacked or not and, if so, finding out all the BA target classes. We focus on a practical ``post-training'' scenario: {\bf S1}) The defender has no access to the classifier's training set, nor any prior knowledge of the true BP used by an attacker. {\bf S2}) There are no clean classifiers trained for the same domain for reference; and the defender is not capable of training such clean classifiers. {\bf S3}) The classification domain has two classes that can both be BA targets. While we focus on two-class scenarios in this section, our method is more generally applicable to multi-class scenarios, with arbitrary number of attacks (see experiments in Sec. \ref{subsec:exp_multi_class}).
\vspace{-0.04in}

To address {\bf S1}, our detection framework involves BP reverse-engineering using a small, clean dataset independently collected by the defender, like existing REDs. To address {\bf S3}, unlike existing REDs that perform anomaly detection involving statistics from {\it all} classes, we inspect each class {\it independently} using a novel detection statistic called {\it expected transferability} (ET), which can be empirically estimated for each class independently. To address {\bf S2}, we show that ET possesses a theoretically-grounded detection threshold value for distinguishing BA target classes from non-target classes, {\it one which depends neither on the domain nor on the attack configuration}. This is very different from existing REDs, for which a suitable detection threshold for their proposed statistics may be both domain and attack-dependent. For example, the range of the $l_1$ norm of the estimated mask used by \cite{NC} depends on the image size.
The practical import here is that the detection threshold is a {\it hyperparameter}, but setting this threshold in a supervised fashion (e.g. to achieve a specified false positive rate on a group of clean classifiers) is generally infeasible due to {\bf S2}. {\it Use of ET thus obviates the need for such hyperparameter setting.}
\vspace{-0.04in}

In the following, we define ET in Sec. \ref{subsec:ET_definition}, and then, the constant threshold on ET for BA detection is derived in Sec. \ref{subsec:ET_analysis}. Finally, our detection procedure, which consists of ET estimation and an inference step, is in Sec. \ref{subsec:procedure}. Note that our detection framework is effective for various types of BPs (as will be shown experimentally). Solely for clarity, in this section, we focus on BAs with {\it additive perturbation} BPs (Eq. (\ref{eq:additive_perturbation})). Similar derivation for BPs embedded by Eq. (\ref{eq:patch_replacement}) is deferred to Apdx. \ref{sec:patch_replacement}.

\vspace{-0.1in}
\subsection{Expected Transferability (ET)}\label{subsec:ET_definition}
\vspace{-0.1in}

Consider a classifier $f:{\mathcal X}\rightarrow{\mathcal C}$ to be inspected with category space ${\mathcal C}=\{0, 1\}$ and {\it continuous} sample distribution $P_i$ on ${\mathcal X}$ for class $i\in{\mathcal C}$. For any ${\bf x}\in{\mathcal X}$ from any class, the optimal solution to
\vspace{-0.00in}
\begin{equation}\label{eq:pert_est_raw}
\underset{{\bf v}}{\text{minimize}} \, ||{\bf v}||_2 \quad\quad
\text{subject to} \, f({\bf x}+{\bf v})\neq f({\bf x})
\vspace{-0.05in}
\end{equation} 
is defined as ${\bf v}^{\ast}({\bf x})$. In practice, (\ref{eq:pert_est_raw}) can be viewed as a typical BP reverse-engineering problem and solved using methods in existing REDs like \cite{Post-TNNLS}. It can also be practically solved by creating an adversarial example for ${\bf x}$ using methods in, e.g., \cite{DeepFool, CW}. We present the following definition for the set of {\it practical} solutions to (\ref{eq:pert_est_raw}).
\vspace{-0.03in}

\begin{definition}\label{df:e_solution_set}
	{\bf ($\boldsymbol{\epsilon}$-solution set)} For any sample ${\bf x}$ from any class, regardless of the method being used, the $\epsilon$-solution set to problem (\ref{eq:pert_est_raw}), is defined by
	\begin{equation}
		{\mathcal V}_{\epsilon}({\bf x}) \triangleq \{{\bf v}\in{\mathcal X}\big| ||{\bf v}||_2-||{\bf v}^{\ast}({\bf x})||_2\leq \epsilon, f({\bf x}+{\bf v})\neq f({\bf x})\},
		\vspace{-0.05in}
	\end{equation}
	where $\epsilon>0$ is the ``quality gap'' of practical solutions, which is usually small for existing methods.
	\vspace{-0.05in}
\end{definition}

A practical solution to (\ref{eq:pert_est_raw}) for sample ${\bf x}$ may or may not cause a misclassification when embedded in another sample ${\bf y}$ from the same class. In the following, we first present the definition regarding such a ``transferability'' property. Then, we define the ET statistic for any class $i\in{\mathcal C}$.
\vspace{-0.02in}

\begin{definition}\label{df:transferable_set}
{\bf (Transferable set)} The transferable set for any sample ${\bf x}$ and $\epsilon>0$ is defined by
\begin{equation}
{\mathcal T}_{\epsilon}({\bf x}) \triangleq \{{\bf y}\in{\mathcal X}\big|f({\bf y})=f({\bf x}), \exists{\bf v}\in{\mathcal V}_{\epsilon}({\bf x}) \, {\rm s.t.} \, f({\bf y}+{\bf v})\neq f({\bf y})\}.
\vspace{-0.05in}
\end{equation}
\end{definition}

\begin{definition}\label{df:ET_stat}
{\bf (ET statistic)} For any class $i\in{\mathcal C}=\{0, 1\}$ and $\epsilon>0$, considering i.i.d. random samples ${\bf X}, {\bf Y} \sim P_i$, the ET statistic for class $i$ is defined by ${\rm ET}_{i, \epsilon} \triangleq {\mathbb E}\big[ {\rm P}({\bf Y}\in{\mathcal T}_{\epsilon}({\bf X}) | {\bf X}) \big]$.
\end{definition}

\vspace{-0.1in}
\subsection{Using ET for BA Detection}\label{subsec:ET_analysis}
\vspace{-0.05in}

Here, we show that for any $i\in{\mathcal C}=\{0, 1\}$ and small $\epsilon$, we will likely have ${\rm ET}_{i, \epsilon}>\frac{1}{2}$ when class $(1-i)$ is a BA target class; and ${\rm ET}_{i, \epsilon}\leq\frac{1}{2}$ otherwise. Note that this {\it constant} threshold does not rely on any specific data domain, classifier architecture, or attack configuration; even for BPs embedded by Eq. (\ref{eq:patch_replacement}), the {\it same} threshold can be obtained following a similar derivation (see Apdx. \ref{sec:patch_replacement}).
\vspace{-0.04in}

We first present the following theorem showing the connection between ET and the threshold $\frac{1}{2}$ regardless of the presence of any BA. Then, we discuss the attack and non-attack cases, respectively.
\begin{theorem}\label{thm:main}
	For any class $i\in{\mathcal C}$ and $\epsilon>0$:
	\begin{equation}\label{eq:thm1_main}
		{\rm ET}_{i, \epsilon} = \frac{1}{2} + \frac{1}{2}({\rm P}_{{\rm MT,}i} - {\rm P}_{{\rm NT,}i}),
		\vspace{-0.025in}
	\end{equation}
	with ${\rm P}_{{\rm MT,}i}$ a ``mutual-transfer probability'' and ${\rm P}_{{\rm NT,}i}$ a ``non-transfer probability'', defined by
	\begin{equation}
		{\rm P}_{{\rm MT,}i} \triangleq {\rm P}({\bf Y}\in{\mathcal T}_{\epsilon}({\bf X}), {\bf X}\in{\mathcal T}_{\epsilon}({\bf Y})) \quad {\rm and} \quad
		{\rm P}_{{\rm NT,}i} \triangleq {\rm P}({\bf Y}\notin{\mathcal T}_{\epsilon}({\bf X}), {\bf X}\notin{\mathcal T}_{\epsilon}({\bf Y}))
		\vspace{-0.05in}
	\end{equation}
	respectively, where ${\bf X}$ and ${\bf Y}$ are i.i.d. random samples following distribution $P_i$ for class $i$.
	\vspace{-0.05in}
\end{theorem}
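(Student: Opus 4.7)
The proof reduces to a clean symmetry argument on a pair of i.i.d.\ random samples, so my plan is to set up the relevant events and exploit exchangeability of $(\mathbf{X}, \mathbf{Y})$ under the joint distribution $P_i \times P_i$.

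The plan is first to unpack the definition of the ET statistic. By the tower property,
$$\mathrm{ET}_{i,\epsilon} \;=\; \mathbb{E}\bigl[\mathrm{P}(\mathbf{Y} \in \mathcal{T}_\epsilon(\mathbf{X}) \mid \mathbf{X})\bigr] \;=\; \mathrm{P}(\mathbf{Y} \in \mathcal{T}_\epsilon(\mathbf{X})).$$
Then I would introduce the two events $A = \{\mathbf{Y} \in \mathcal{T}_\epsilon(\mathbf{X})\}$ and $B = \{\mathbf{X} \in \mathcal{T}_\epsilon(\mathbf{Y})\}$, so that $\mathrm{ET}_{i,\epsilon} = \mathrm{P}(A)$, $\mathrm{P}_{\mathrm{MT},i} = \mathrm{P}(A \cap B)$, and $\mathrm{P}_{\mathrm{NT},i} = \mathrm{P}(\bar A \cap \bar B)$.

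The key step is exchangeability. Since $\mathbf{X}$ and $\mathbf{Y}$ are i.i.d., the joint law of $(\mathbf{X},\mathbf{Y})$ is invariant under the swap $(\mathbf{X},\mathbf{Y}) \mapsto (\mathbf{Y},\mathbf{X})$. The event $B$ is exactly the image of $A$ under this swap, so $\mathrm{P}(A) = \mathrm{P}(B)$, and analogously $\mathrm{P}(A \cap \bar B) = \mathrm{P}(\bar A \cap B)$. I would then decompose the sample space into the four disjoint pieces $A\cap B,\ A\cap\bar B,\ \bar A\cap B,\ \bar A\cap\bar B$, whose probabilities sum to $1$, and use the symmetry identity to get
$$1 \;=\; \mathrm{P}_{\mathrm{MT},i} + 2\,\mathrm{P}(A\cap \bar B) + \mathrm{P}_{\mathrm{NT},i}.$$
Solving for $\mathrm{P}(A\cap \bar B)$ and substituting into $\mathrm{P}(A) = \mathrm{P}(A\cap B) + \mathrm{P}(A\cap \bar B)$ yields
$$\mathrm{ET}_{i,\epsilon} \;=\; \mathrm{P}_{\mathrm{MT},i} + \tfrac{1}{2}\bigl(1 - \mathrm{P}_{\mathrm{MT},i} - \mathrm{P}_{\mathrm{NT},i}\bigr) \;=\; \tfrac{1}{2} + \tfrac{1}{2}\bigl(\mathrm{P}_{\mathrm{MT},i} - \mathrm{P}_{\mathrm{NT},i}\bigr),$$
which is exactly \eqref{eq:thm1_main}.

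There is no serious obstacle here; the result is essentially a two-line consequence of exchangeability once the events are named. The only subtlety worth checking is measurability of the events (so that the conditional probability in Definition~\ref{df:ET_stat} is well-defined and the tower property applies) and that the set $\mathcal{T}_\epsilon(\mathbf{x})$ is a Borel-measurable function of $\mathbf{x}$; this follows from the fact that $f$ is a fixed (deterministic) classifier and $\mathcal{V}_\epsilon(\mathbf{x})$ is defined by closed/open inequalities, so no additional regularity hypotheses are needed beyond the continuity of $P_i$ already assumed in Section~\ref{subsec:ET_definition}. Note also that the argument never uses $|\mathcal{C}|=2$, which foreshadows the multi-class applicability asserted later.
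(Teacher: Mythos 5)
Your proof is correct and follows essentially the same route as the paper's: both reduce ${\rm ET}_{i,\epsilon}$ to ${\rm P}({\bf Y}\in{\mathcal T}_{\epsilon}({\bf X}))$ via the tower property, invoke exchangeability of the i.i.d.\ pair to equate the two marginal transfer probabilities, and then apply an elementary identity (your four-event partition is just a rearrangement of the paper's inclusion--exclusion step) to obtain Eq.~(\ref{eq:thm1_main}). Your added remarks on measurability and on the argument not using $|{\mathcal C}|=2$ are correct observations the paper leaves implicit.
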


{\bf 1) Non-attack case:} class $(1-i)$ is {\it not} a BA target class. We first focus on ${\rm P}_{{\rm NT,}i}$.
\vspace{-0.015in}

\begin{property}\label{prop:PNT}
For general two-class domains in practice and small $\epsilon$, if class $(1-i)$ is {\it not} a BA target class, ${\rm P}_{{\rm NT,}i}$ for class $i$ will likely be larger than $\frac{1}{2}$ (see Sec. \ref{subsec:exp_verification} for empirical support).
\vspace{-0.04in}
\end{property}
For any $i\in{\mathcal C}$, ${\rm P}_{{\rm NT,}i}$ is the probability that two independent samples from class $i$ are mutually ``not transferable''. For such a pair of samples, the event associated with ${\rm P}_{{\rm NT,}i}$ is that the pattern estimated for one (by solving (\ref{eq:pert_est_raw})) does not induce the other to be misclassified and vice versa. Accordingly, if we solve a problem similar to (\ref{eq:pert_est_raw}) but requiring a common ${\bf v}$ that induces both samples to be misclassified, the solution should have a larger norm than the solution to (\ref{eq:pert_est_raw}) for each of them. Property \ref{prop:PNT} has been verified by many existing works for general classification tasks commonly using highly non-linear classifiers. For example, the universal adversarial perturbation studied by \cite{DeepFool_Univ} can be viewed as the solution to problem (\ref{eq:pert_est_raw}) (in absence of BA) for a group of samples instead of one. Compared with the minimum sample-wise perturbation required for each individual to be misclassified, the minimum universal perturbation required for high group misclassification typically has a much larger norm. Similar empirical results have also been shown by \cite{Post-TNNLS} and inspired their proposed RED. Despite the evidence in existing works, we also verify this property experimentally in Sec \ref{subsec:exp_verification}.
\vspace{-0.04in}

Next, we focus on ${\rm P}_{{\rm MT,}i}$. Note that ${\rm P}_{{\rm MT,}i}$ is upper bounded by $1 - {\rm P}_{{\rm NT,}i}$; thus it will likely be smaller than $\frac{1}{2}$ (and even possibly close to 0) for the non-attack case based on Property \ref{prop:PNT}. The asymptotic behavior of ${\rm P}_{{\rm MT,}i}$ for small $\epsilon$ can also be approached by the following theorem. Note that in practice, a small $\epsilon$ is not difficult to achieve, since a solution to problem (\ref{eq:pert_est_raw}) using, e.g., algorithms for generating adversarial samples \cite{Szegedy_seminal} usually have a small norm.

\begin{theorem}\label{thm:PMT}
For any class $i\in{\mathcal C}$ with continuous sample distribution $P_i$, ${\rm P}_{{\rm MT,}i}\rightarrow0$ as $\epsilon\rightarrow0$.
\vspace{-0.1in}
\end{theorem}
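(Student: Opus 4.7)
The plan is to bound $\mathrm{P}_{\mathrm{MT},i}$ from above by a probability that depends only on the norms of the minimum-norm perturbations $\|{\bf v}^*({\bf X})\|_2$ and $\|{\bf v}^*({\bf Y})\|_2$, and then send $\epsilon\to 0$ by continuity of measure. First I would derive the deterministic necessary condition for mutual transfer. If ${\bf Y}\in\mathcal{T}_\epsilon({\bf X})$, then by Definitions~\ref{df:e_solution_set} and~\ref{df:transferable_set} there exists ${\bf v}$ with $\|{\bf v}\|_2\le\|{\bf v}^*({\bf X})\|_2+\epsilon$ and $f({\bf Y}+{\bf v})\neq f({\bf Y})$, and optimality of ${\bf v}^*({\bf Y})$ in (\ref{eq:pert_est_raw}) forces $\|{\bf v}^*({\bf Y})\|_2\le\|{\bf v}^*({\bf X})\|_2+\epsilon$. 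The symmetric condition ${\bf X}\in\mathcal{T}_\epsilon({\bf Y})$ gives $\|{\bf v}^*({\bf X})\|_2\le\|{\bf v}^*({\bf Y})\|_2+\epsilon$. Writing $Z:=\|{\bf v}^*({\bf X})\|_2$ and $W:=\|{\bf v}^*({\bf Y})\|_2$, the mutual-transfer event is therefore contained in $\{|Z-W|\le\epsilon\}$, so
\begin{equation*}
\mathrm{P}_{\mathrm{MT},i}\;\le\;\mathrm{P}(|Z-W|\le\epsilon).
\end{equation*}

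Next I would exploit that ${\bf X}$ and ${\bf Y}$ are i.i.d.\ draws from the continuous distribution $P_i$, so $Z$ and $W$ are i.i.d.\ real random variables. Under a mild regularity condition (implicit in calling $P_i$ ``continuous'' together with the classifier being non-pathological), the map ${\bf x}\mapsto\|{\bf v}^*({\bf x})\|_2$ is not constant on any set of positive $P_i$-mass, so the law of $Z$ is atomless. The events $\{|Z-W|\le\epsilon\}$ are monotonically decreasing in $\epsilon$ with intersection $\{Z=W\}$, and continuity of measure yields
\begin{equation*}
\lim_{\epsilon\downarrow 0}\mathrm{P}(|Z-W|\le\epsilon)\;=\;\mathrm{P}(Z=W)\;=\;\mathbb{E}\bigl[\mathrm{P}(Z=W\mid W)\bigr]\;=\;0,
\end{equation*}
where the last equality uses independence of $Z,W$ together with atomlessness of $Z$. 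Combined with the bound above, this gives $\mathrm{P}_{\mathrm{MT},i}\to 0$ as $\epsilon\to 0$.

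The main obstacle I anticipate is rigorously justifying atomlessness of the law of $Z$. Continuity of the input distribution $P_i$ on $\mathcal{X}$ does not by itself preclude $\|{\bf v}^*(\cdot)\|_2$ from being constant on some positive-$P_i$-measure subset of $\mathcal{X}$, which would produce an atom in the distribution of $Z$ and break the continuity-of-measure step. In the full proof I would address this either by folding the atomlessness of the pushforward into the regularity hypothesis on $(f,P_i)$, or by arguing it from the structure of common classifiers (e.g., for a piecewise-linear ReLU network the distance-to-boundary is a piecewise algebraic function on a locally finite cell decomposition, whose pushforward under an absolutely continuous $P_i$ is atomless). Beyond this regularity point, the argument is a short containment plus a standard continuity-of-measure step, so no further technical difficulty is expected.
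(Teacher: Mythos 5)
Your proof is correct and follows essentially the same route as the paper's: both bound ${\rm P}_{{\rm MT},i}$ above by ${\rm P}\big(\big|\,\|{\bf v}^{\ast}({\bf X})\|_2-\|{\bf v}^{\ast}({\bf Y})\|_2\,\big|\leq\epsilon\big)$ and then send $\epsilon\rightarrow 0$ using continuity (atomlessness) of the law of the minimal perturbation norm. Your derivation of the containment is slightly more direct --- you get it straight from optimality of ${\bf v}^{\ast}$, whereas the paper first proves an intermediate lemma that mutual transferability is equivalent to ${\mathcal V}_{\epsilon}({\bf X})\cap{\mathcal V}_{\epsilon}({\bf Y})\neq\emptyset$ --- and the atomlessness issue you correctly flag as the real gap is exactly the point the paper disposes of only in a footnote, as an idealization asserted to hold for practical domains and highly non-linear classifiers.
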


In summary, for the non-attack case with small $\epsilon$, we will likely have ${\rm P}_{{\rm NT,}i}\geq\frac{1}{2}\geq{\rm P}_{{\rm MT,}i}$; and thus, ${\rm ET}_{i, \epsilon}\leq\frac{1}{2}$ (based on Thm. \ref{thm:main}) -- this will be shown by our experiments. Moreover, in Apdx. \ref{sec:toy}, for a simplified (yet still relatively general) domain and a linear prototype classifier, we show that $\frac{1}{2}$ is a {\it strict} upper bound of the ET statistic when there is no BA.
\vspace{-0.04in}


{\bf 2) Attack case:} class $(1-i)$ is the target class of a successful BA. Suppose ${\bf v}_0$ is the BP for this attack, such that for any ${\bf X}\sim P_i$, $f({\bf X})=i$, while $f({\bf X}+{\bf v}_0)\neq f({\bf X})$. Intuitively, ${\rm P}_{{\rm MT,}i}$ will be large and possibly close to 1 because, {\it different from the non-attack case}, there is a special pattern -- the BP ${\bf v}_0$ -- that could likely be an element of ${\mathcal V}_{\epsilon}({\bf X})$ and ${\mathcal V}_{\epsilon}({\bf Y})$ simultaneously, for ${\bf X}$ and ${\bf Y}$ i.i.d. following $P_i$. In this case, ${\bf Y}\in{\mathcal T}_{\epsilon}({\bf X})$ and ${\bf X}\in{\mathcal T}_{\epsilon}({\bf Y})$ jointly hold (i.e. mutually transferable) by Definition \ref{df:transferable_set}. Accordingly, ${\rm P}_{{\rm NT,}i}$ which is upper bounded by $1-{\rm P}_{{\rm MT,}i}$ will likely be small and possibly close to 0; then, we will have ${\rm P}_{{\rm MT,}i}>{\rm P}_{{\rm NT,}i}$ and consequently, ${\rm ET}_{i, \epsilon}>\frac{1}{2}$ by Thm. \ref{thm:main}.
\vspace{-0.04in}

\begin{algorithm}[t]
	\caption{BA detection using ET statistics.}\label{alg:detection}
	\begin{algorithmic}[1]
		\vspace{-0.05in}
		\State {\bf Input}: Classifier $f:{\mathcal X}\rightarrow{\mathcal C}$ for inspection; clean dataset ${\mathcal D}_i=\{{\bf x}_1^{(i)}, \cdots, {\bf x}_{N_i}^{(i)}\}$ for each $i\in{\mathcal C}$.
		\State {\bf Initialization}: ${\rm attacked}=False$; ${\rm BA\_ targets}=\emptyset$.
		\For{each putative target class $t\in{\mathcal C}=\{0, 1\}$}
		\State {\bf Step 1:} Obtain empirical estimation $\widehat{{\rm ET}}_{i, \epsilon}$ using ${\mathcal D}_i$ for $i=1-t$.
		\For{$n=1:N_i$}
		\State $\widehat{{\mathcal T}}_{\epsilon}({\bf x}_n^{(i)})=\emptyset$; ${\rm converge}=False$
		\While{not converge}
		\State Obtain an empirical solution $\hat{\bf v}({\bf x}_n^{(i)})$ to problem (\ref{eq:pert_est_raw}) using random initialization.
		\State $\widehat{{\mathcal T}}_{\epsilon}({\bf x}_n^{(i)}) \leftarrow \widehat{{\mathcal T}}_{\epsilon}({\bf x}_n^{(i)})\cup\{{\bf x}_m^{(i)}|m\in\{1, \cdots, N_i\}\setminus n, f({\bf x}_m^{(i)}+\hat{\bf v}({\bf x}_n^{(i)}))\neq f({\bf x}_m^{(i)})\}$
		\If{$\widehat{{\mathcal T}}_{\epsilon}({\bf x}_n^{(i)})$ unchanged for $\tau$ iterations}
		\State converge $\leftarrow$ $True$
		\State $p_n^{(i)} = |\widehat{{\mathcal T}}_{\epsilon}({\bf x}_n^{(i)})|/(N_i-1)$
		\EndIf
		\EndWhile
		\EndFor
		\State $\widehat{{\rm ET}}_{i, \epsilon}=\frac{1}{N_i}\sum_{n=1}^{N_i} p_n^{(i)}$
		\State {\bf Step 2:} Determine if class $t$ is a BA target class or not.
		\If{$\widehat{{\rm ET}}_{i, \epsilon}>\frac{1}{2}$}
		\State ${\rm attacked}=True$; ${\rm BA\_ targets}\leftarrow{\rm BA\_ targets}\cup\{t\}$
		\EndIf
		\EndFor
		\State {\bf Output}: ${\rm attacked}$; ${\rm BA\_ targets}$
	\end{algorithmic}
\end{algorithm}

Beyond the intuitive analysis above, the following theorem gives a guaranteed large ET statistic (being exactly 1) in the attack case when the backdoor pattern has a (sufficiently) small norm (which is in fact desired in order to have imperceptibility of the attack).

\begin{theorem}\label{thm:attack_case}
	If class $(1-i)$ is the target class of a successful BA with BP ${\bf v}_0$ such that $f({\bf X}+{\bf v}_0)\neq f({\bf X})$ for all ${\bf X}\sim P_i$, and if $||{\bf v}_0||_2\leq\epsilon$, we will have $P({\mathcal V}_{\epsilon}({\bf X})\cap{\mathcal V}_{\epsilon}({\bf Y})\neq\emptyset)=1$ for ${\bf X}$ and ${\bf Y}$ i.i.d. following $P_i$; and furthermore, ${\rm ET}_{i, \epsilon}=1$.
	\vspace{-0.075in}
\end{theorem}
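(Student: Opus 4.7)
The plan is to exhibit the backdoor pattern ${\bf v}_0$ itself as the common element that lies in the $\epsilon$-solution sets of every pair of class-$i$ samples, and then read off both conclusions directly from the definitions.

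\textbf{Step 1: Show ${\bf v}_0\in{\mathcal V}_{\epsilon}({\bf X})$ almost surely for ${\bf X}\sim P_i$.} Fix any ${\bf X}\sim P_i$. By the successful-BA hypothesis we have $f({\bf X}+{\bf v}_0)\neq f({\bf X})$, so ${\bf v}_0$ is feasible for problem (\ref{eq:pert_est_raw}) with input ${\bf X}$. Feasibility of ${\bf v}_0$ implies $\|{\bf v}^{\ast}({\bf X})\|_2\leq \|{\bf v}_0\|_2\leq\epsilon$, and combining the nonnegativity $\|{\bf v}^{\ast}({\bf X})\|_2\geq 0$ with the hypothesis $\|{\bf v}_0\|_2\leq\epsilon$ gives
\begin{equation*}
\|{\bf v}_0\|_2-\|{\bf v}^{\ast}({\bf X})\|_2 \;\leq\; \|{\bf v}_0\|_2 \;\leq\; \epsilon.
\end{equation*}
Together with $f({\bf X}+{\bf v}_0)\neq f({\bf X})$, Definition~\ref{df:e_solution_set} therefore gives ${\bf v}_0\in{\mathcal V}_{\epsilon}({\bf X})$.

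\textbf{Step 2: Deduce the intersection statement.} Applying Step~1 independently to ${\bf X}$ and ${\bf Y}$ (both drawn from $P_i$), we obtain ${\bf v}_0\in{\mathcal V}_{\epsilon}({\bf X})\cap{\mathcal V}_{\epsilon}({\bf Y})$ almost surely, so $P\bigl({\mathcal V}_{\epsilon}({\bf X})\cap{\mathcal V}_{\epsilon}({\bf Y})\neq\emptyset\bigr)=1$.

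\textbf{Step 3: Conclude ${\rm ET}_{i,\epsilon}=1$.} Under the (standard) assumption that a successful BA preserves clean accuracy, $f({\bf X})=f({\bf Y})=i$ almost surely for ${\bf X},{\bf Y}\sim P_i$. Using the witness ${\bf v}={\bf v}_0$: we have ${\bf v}_0\in{\mathcal V}_{\epsilon}({\bf X})$ by Step~1, and $f({\bf Y}+{\bf v}_0)\neq f({\bf Y})$ by the BA hypothesis applied to ${\bf Y}$. Hence by Definition~\ref{df:transferable_set}, ${\bf Y}\in{\mathcal T}_{\epsilon}({\bf X})$ almost surely, which gives $P({\bf Y}\in{\mathcal T}_{\epsilon}({\bf X})\mid{\bf X})=1$ almost surely and therefore ${\rm ET}_{i,\epsilon}={\mathbb E}[P({\bf Y}\in{\mathcal T}_{\epsilon}({\bf X})\mid{\bf X})]=1$ by Definition~\ref{df:ET_stat}.

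\textbf{Expected difficulty.} Technically the argument is essentially a definition-chase, so there is no real obstacle once one notices that ${\bf v}_0$ itself simultaneously satisfies (i) the misclassification constraint on every class-$i$ sample (by the BA hypothesis) and (ii) the $\epsilon$-closeness condition (because its norm is already $\leq\epsilon$, and $\|{\bf v}^{\ast}(\cdot)\|_2\geq0$). The only subtle point worth stating explicitly is the implicit clean-accuracy assumption $f({\bf X})=i$ almost surely for ${\bf X}\sim P_i$, which is needed to guarantee $f({\bf X})=f({\bf Y})$ in the transferable-set definition; this is standard in the BA literature and consistent with the paper's notion of a ``successful'' BA.
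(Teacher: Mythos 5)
Your proposal is correct, and its key idea --- exhibiting the backdoor pattern ${\bf v}_0$ itself as a common element of ${\mathcal V}_{\epsilon}({\bf X})$ and ${\mathcal V}_{\epsilon}({\bf Y})$ via $\|{\bf v}_0\|_2-\|{\bf v}^{\ast}({\bf X})\|_2\leq\|{\bf v}_0\|_2\leq\epsilon$ --- is exactly the paper's. Where you diverge is the last step: the paper converts $P({\mathcal V}_{\epsilon}({\bf X})\cap{\mathcal V}_{\epsilon}({\bf Y})\neq\emptyset)=1$ into ${\rm P}_{{\rm MT},i}=1$ using the equivalence established in Step~1 of the proof of Thm.~\ref{thm:PMT}, then feeds ${\rm P}_{{\rm MT},i}=1$ and ${\rm P}_{{\rm MT},i}+{\rm P}_{{\rm NT},i}\leq 1$ into the identity of Thm.~\ref{thm:main} to squeeze out ${\rm ET}_{i,\epsilon}=1$. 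You instead verify directly from Definition~\ref{df:transferable_set} that ${\bf Y}\in{\mathcal T}_{\epsilon}({\bf X})$ almost surely (with ${\bf v}_0$ as the witness) and read off ${\rm ET}_{i,\epsilon}=1$ from Definition~\ref{df:ET_stat}. Your route is more elementary and self-contained --- it needs neither Thm.~\ref{thm:main} nor the iff-lemma inside the proof of Thm.~\ref{thm:PMT} --- and it has the further merit of making explicit the clean-accuracy assumption $f({\bf X})=i$ a.s., which is required for the condition $f({\bf Y})=f({\bf X})$ in Definition~\ref{df:transferable_set} and which the paper states only in the surrounding discussion (``for any ${\bf X}\sim P_i$, $f({\bf X})=i$'') rather than inside the proof. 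The paper's detour buys consistency with the ${\rm P}_{{\rm MT}}/{\rm P}_{{\rm NT}}$ machinery used throughout Sec.~\ref{subsec:ET_analysis}, but nothing is lost by your shortcut.
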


\vspace{-0.1in}
\subsection{Detection Procedure}\label{subsec:procedure}
\vspace{-0.05in}

Our detection procedure is summarized in Alg. \ref{alg:detection}. Basically, for each putative target class $t\in{\mathcal C}=\{0, 1\}$, we estimate the ET statistic $\widehat{{\rm ET}}_{i, \epsilon}$ (with a ``hat'' representing empirical estimation) for class $i=1-t$, and claim a detection if $\widehat{{\rm ET}}_{i, \epsilon}>\frac{1}{2}$. In particular, the core to estimating $\widehat{{\rm ET}}_{i, \epsilon}$ is to estimate ${\rm P}({\bf Y}\in{\mathcal T}_{\epsilon}({\bf X}) | {\bf X}={\bf x}_n^{(i)})$ for each clean sample ${\bf x}_n^{(i)}\in{\mathcal D}_i$ used for detection (line 6-12, Alg. \ref{alg:detection}). To do so, we propose to find, for each ${\bf x}_n^{(i)}\in{\mathcal D}_i$, the subset $\widehat{\mathcal T}_{\epsilon}({\bf x}_n^{(i)})$ which contains {\it all} samples in ${\mathcal D}_i\setminus{\bf x}_n^{(i)}$ belonging to the {\it transferable set} ${\mathcal T}_{\epsilon}({\bf x}_n^{(i)})$ of sample ${\bf x}_n^{(i)}$. Then, ${\rm P}({\bf Y}\in{\mathcal T}_{\epsilon}({\bf X}) | {\bf X}={\bf x}_n^{(i)})$ can be estimated by $p_n^{(i)}=|\widehat{{\mathcal T}}_{\epsilon}({\bf x}_n^{(i)})|/(|{\mathcal D}_i|-1)$ (line 12, Alg. \ref{alg:detection}). However, by Def. \ref{df:transferable_set}, a sample ${\bf y}$ is in the {\it transferable set} ${\mathcal T}_{\epsilon}({\bf x})$ of a sample ${\bf x}$ as long as there {\it exists} a practical solution to problem (\ref{eq:pert_est_raw}) (with some intrinsic quality gap $\epsilon$) that induces ${\bf y}$ to be misclassified as well. Thus, it is insufficient to decide whether or not a sample is in the transferable set of another sample according to merely one solution realization to problem (\ref{eq:pert_est_raw}). To address this, for each ${\bf x}_n^{(i)}$, we solve problem (\ref{eq:pert_est_raw}) {\it repeatedly} with random initialization. For each practical solution, we embed it to all elements in ${\mathcal D}_i\setminus{\bf x}_n^{(i)}$ and find those that are misclassified -- these samples are included into the subset $\widehat{\mathcal T}_{\epsilon}({\bf x}_n^{(i)})$ (line 9, Alg. \ref{alg:detection}). Such repetition stops when $\widehat{\mathcal T}_{\epsilon}({\bf x}_n^{(i)})$ stays unchanged for some $\tau$ iterations. This procedure is summarized as the ``while'' loop in Alg. \ref{alg:detection}, which is guaranteed to converge in $(N_i-1)\times\tau$ iterations. Finally, we obtain the estimated ET by averaging $p_n^{(i)}$ -- the empirical estimation of ${\rm P}({\bf Y}\in{\mathcal T}_{\epsilon}({\bf X}) | {\bf X}={\bf x}_n^{(i)})$ -- over all samples in ${\mathcal D}_i$.
\vspace{-0.035in}

Our detection framework has the following generalization capabilities. 1) {\bf BP embedding mechanism.} As mentioned before, our detection rule can be adapted to BPs embedded by Eq. (\ref{eq:patch_replacement}), with the same constant threshold $\frac{1}{2}$ on ET and only little modification to the while loop in Alg. \ref{alg:detection} (see Apdx. \ref{subsec:procedure_PR}). 2) {\bf BP reverse-engineering algorithm.} We do not limit our detector to any specific algorithm when solving problem (\ref{eq:pert_est_raw}) (line 8 in Alg. \ref{alg:detection}) -- existing algorithms proposed by, e.g., \cite{Post-TNNLS, CW}, and even future BP reverse-engineering algorithms can be used. 3) {\bf Adoption for multi-class scenario.} When there are multiple classes, where each can possibly be a BA target, Alg. \ref{alg:detection} can still be used for detection by, for each putative target class $t\in{\mathcal C}$, treating all the classes other than $t$ as a {\it super-class} and estimating ET on $\cup_{i\in{\mathcal C}\setminus t}{\mathcal D}_i$. In our experiments (next), we will evaluate our detection framework considering a variety of these extensions.

\vspace{-0.15in}
\section{Experiments}\label{sec:experiments}
\vspace{-0.075in}

Our experiments involve six common benchmark image datasets with a variety of image size and color scale: CIFAR-10, CIFAR-100 \cite{CIFAR10}, STL-10 \cite{STL10}, TinyImageNet, FMNIST \cite{FashionMNIST}, MNIST \cite{MNIST}. Details of these datasets are in Apdx. \ref{subsec:datasets_details}.

\vspace{-0.075in}
\subsection{Main Experiment: 2-Class, Multi-Attack BA Detection Using ET Statistic}\label{subsec:exp_main}
\vspace{-0.075in}

{\bf Generating 2-class domains.} From CIFAR-10, we generate 45 different 2-class domains (for all 45 unordered class pairs of CIFAR-10). From {\it each} of the other five datasets, we generate 20 different random 2-class domains. More details are provided in Apdx. \ref{subsec:2_class_details}.
\vspace{-0.05in}

{\bf Attack configuration.} Like most related works, we mainly focus on classical BAs launched by poisoning the classifier's training set with a small set of samples embedded with a BP and labeled to some target class \cite{BadNet}. Effectiveness of our detector against another type of clean-label BA is shown in Apdx. \ref{sec:clean-label}. For each 2-class domain generated from CIFAR-10, CIFAR-100, STL-10, and TinyImageNet, we create two {\it attack instances}, one for BA with additive perturbation BP embedded by Eq. (\ref{eq:additive_perturbation}), and the other for BA with patch replacement BP embedded by Eq. (\ref{eq:patch_replacement}). For each 2-class domain generated from FMNIST and MNIST, we create one attack instance with additive perturbation BP\footnote{Images from these two datasets commonly have a large area of ``black'' background. Positively perturbing a few background pixels, which is a common practice to achieve a successful BA \cite{AC}, is {\it equivalent} to replacing these pixels with a gray patch using Eq. (\ref{eq:patch_replacement}).}. For convenience, the six ensembles of attack instances with additive perturbation BP and for 2-class domains generated from CIFAR-10, CIFAR-100, STL-10, TinyImageNet, FMNIST, and MNIST are denoted as {\bf A\textsubscript{1}-\bf A\textsubscript{6}} respectively. The four ensembles of attack instances with patch replacement BP and for 2-class domains generated from CIFAR-10, CIFAR-100, STL-10, and TinyImageNet are denoted as {\bf A\textsubscript{7}-A\textsubscript{10}} respectively. The BPs used in our experiments include many popular ones in the BA literature -- examples of some BPs and images embedded with them are shown in Fig. \ref{fig:BP_example} (details for generating these BPs are deferred to Apdx. \ref{subsec:BP_details}).
\vspace{-0.035in}

\begin{figure*}[t]
	\centering
	\begin{minipage}[b]{.16\linewidth}
		\centering
		\centerline{\includegraphics[width=0.95\linewidth]{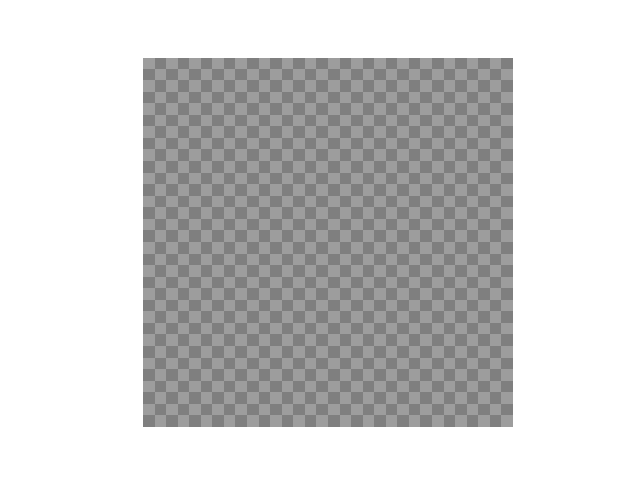}}
		\vspace{-0.09in}
		\centerline{\includegraphics[width=0.95\linewidth]{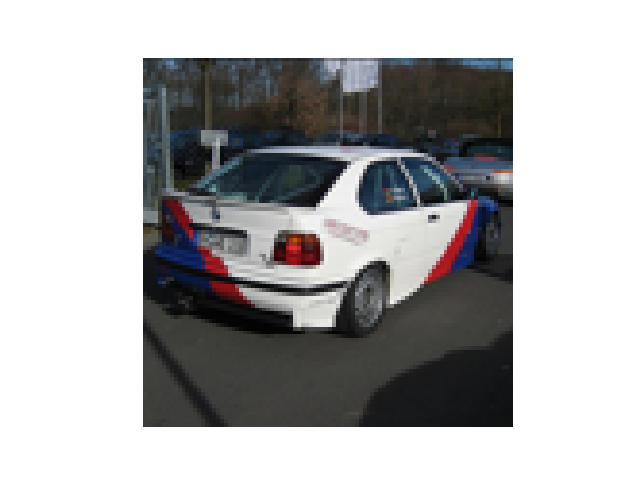}}
		\vspace{-0.1in}
		\subcaption{chessboard}\label{subfig:BP_cb}
		\vspace{-0.125in}
	\end{minipage}
	\begin{minipage}[b]{.16\linewidth}
		\centering
		\centerline{\includegraphics[width=0.95\linewidth]{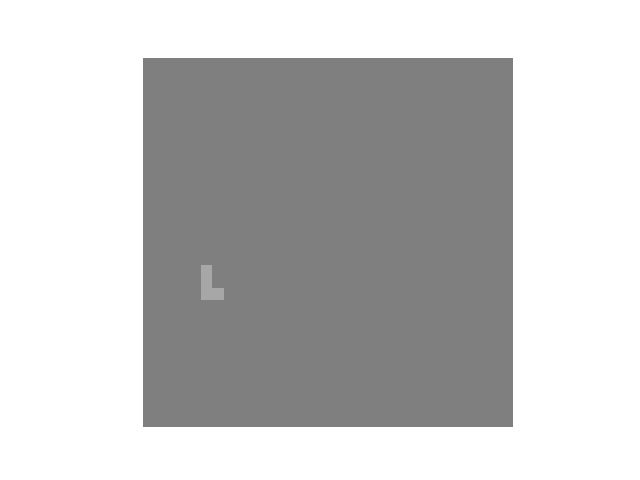}}
		\vspace{-0.09in}
		\centerline{\includegraphics[width=0.95\linewidth]{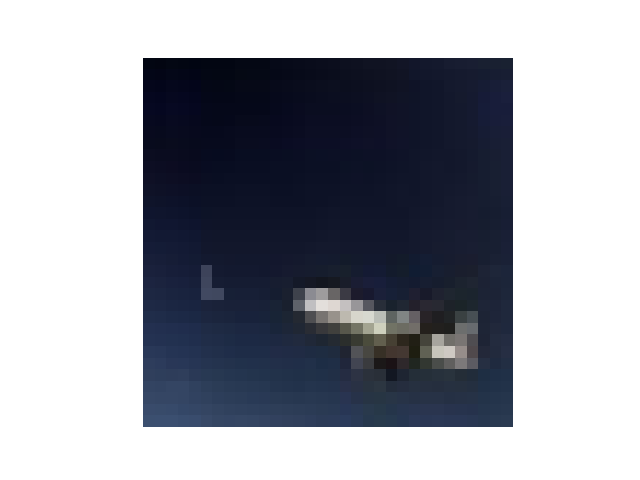}}
		\vspace{-0.1in}
		\subcaption{``L''}\label{subfig:BP_L}
		\vspace{-0.125in}
	\end{minipage}
	\begin{minipage}[b]{.16\linewidth}
		\centering
		\centerline{\includegraphics[width=0.95\linewidth]{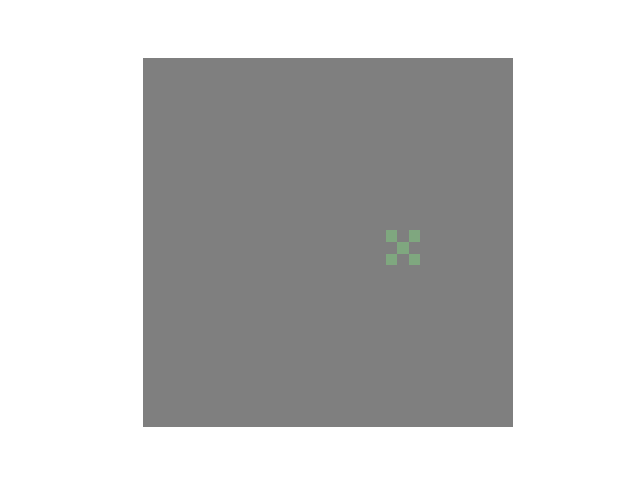}}
		\vspace{-0.09in}
		\centerline{\includegraphics[width=0.95\linewidth]{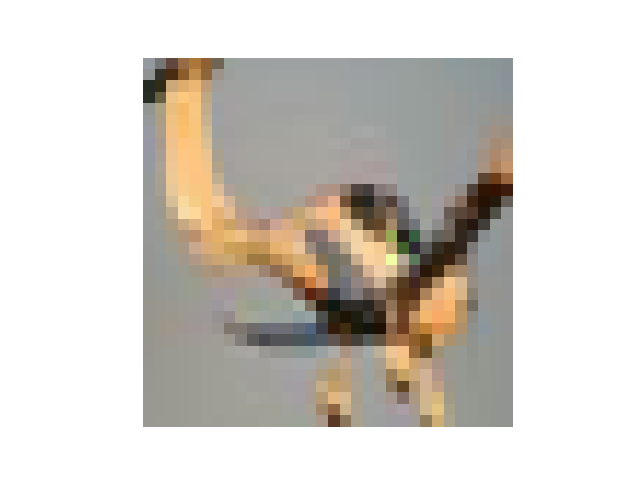}}
		\vspace{-0.1in}
		\subcaption{``X''}\label{subfig:BP_X}
		\vspace{-0.125in}
	\end{minipage}
	\begin{minipage}[b]{.16\linewidth}
		\centering
		\centerline{\includegraphics[width=0.95\linewidth]{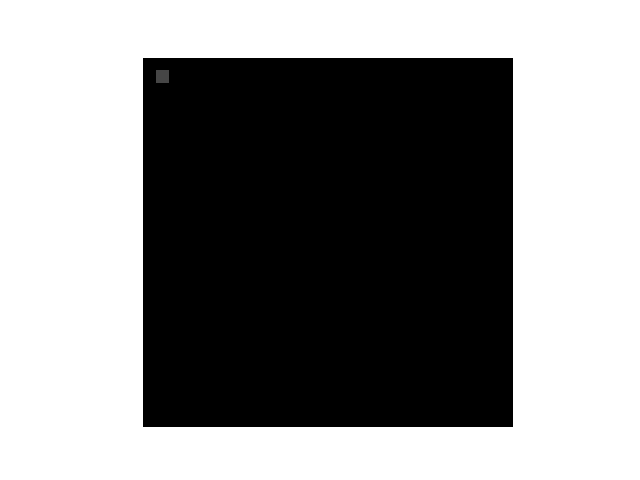}}
		\vspace{-0.09in}
		\centerline{\includegraphics[width=0.95\linewidth]{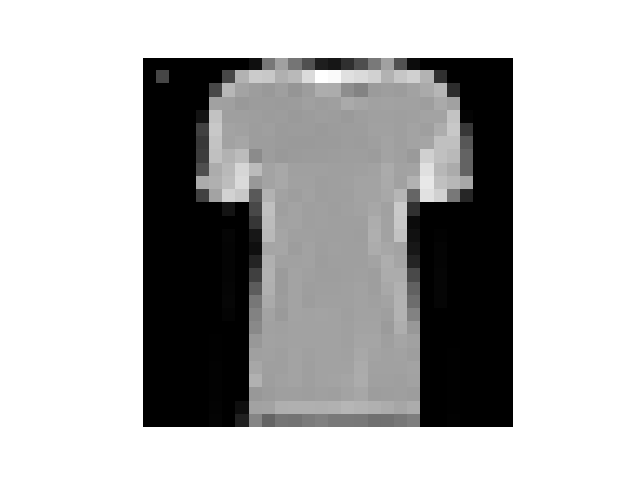}}
		\vspace{-0.1in}
		\subcaption{a pixel}\label{subfig:BP_pixel}
		\vspace{-0.125in}
	\end{minipage}
	\begin{minipage}[b]{.16\linewidth}
		\centering
		\centerline{\includegraphics[width=0.95\linewidth]{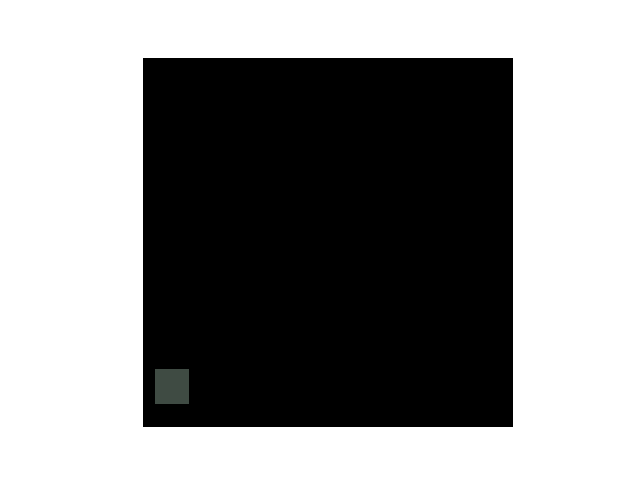}}
		\vspace{-0.09in}
		\centerline{\includegraphics[width=0.95\linewidth]{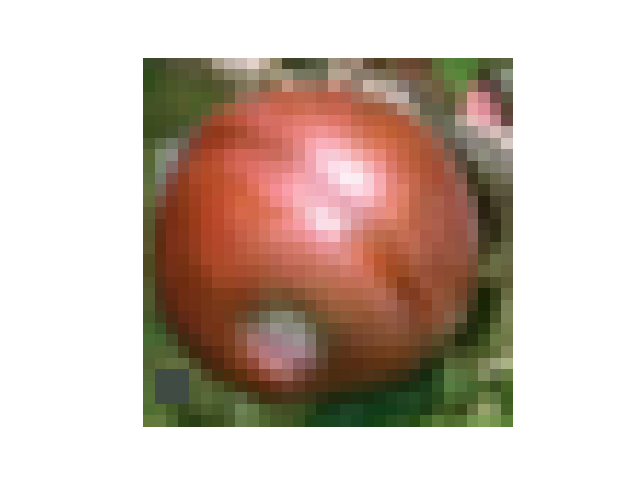}}
		\vspace{-0.1in}
		\subcaption{unicolor patch}\label{subfig:BP_patch_univ}
		\vspace{-0.125in}
	\end{minipage}
	\begin{minipage}[b]{.16\linewidth}
		\centering
		\centerline{\includegraphics[width=0.95\linewidth]{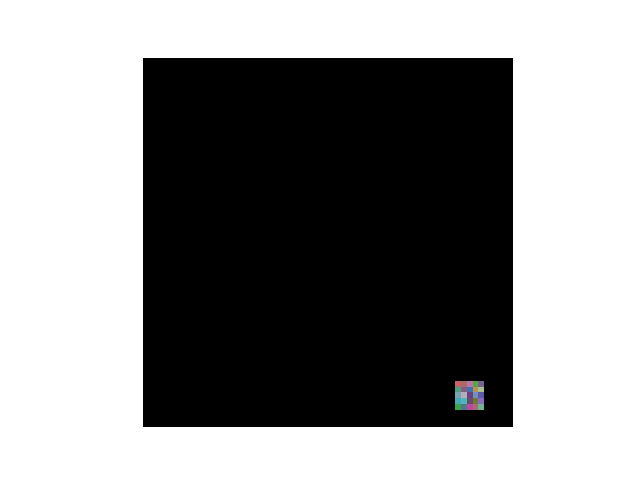}}
		\vspace{-0.09in}
		\centerline{\includegraphics[width=0.95\linewidth]{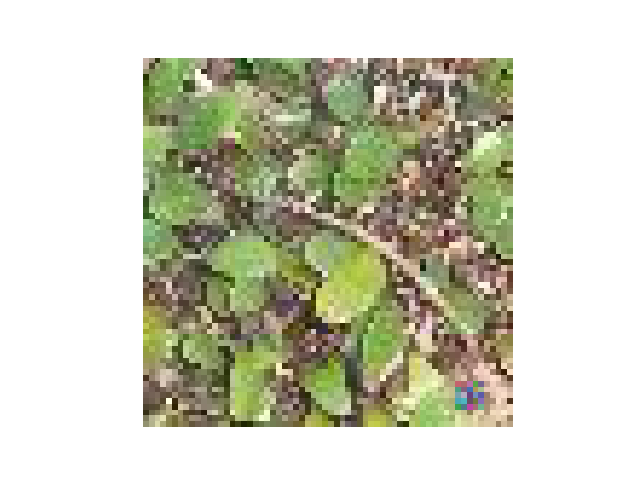}}
		\vspace{-0.1in}
		\subcaption{noisy patch}\label{subfig:BP_patch_noisy}
		\vspace{-0.125in}
	\end{minipage}
	\caption{Part of the BPs used in our experiments (others in Apdx. \ref{subsec:BP_details}) and images with these BPs embedded. (a)-(d) are additive perturbation BPs and (e)-(f) are patch replacement BPs. BP in (a) is amplified for visualization. Spatial locations for BPs in (b)-(f) are randomly selected for each attack.}
	\label{fig:BP_example}
	\vspace{-0.025in}
\end{figure*}

We consider 2-class scenarios where {\it both classes can possibly be a BA target class}. For each attack instance in ensembles A\textsubscript{1}, A\textsubscript{2}, A\textsubscript{3}, A\textsubscript{7}, A\textsubscript{8}, and A\textsubscript{10}, we create two attacks each with one of the two classes being the BA target class. For each instance in ensembles A\textsubscript{4}, A\textsubscript{5}, A\textsubscript{6}, and A\textsubscript{9}, we create one attack with the second class being the BA target class. For each of these attacks, the BP is randomly selected from the candidate BPs (part of which are shown in Fig. \ref{fig:BP_example}), with the specified BP type for the ensemble which the attack is associated with. However, to avoid confusion for learning the backdoor mapping during training, for all 2-attack instances with additive perturbation BPs, we ensure that the BPs for the two attacks have different shapes (e.g., two ``X'' BPs are not allowed). Similarly, for all 2-attack instances where the two attacks both choose to use the unicolor patch BP (Fig. \ref{subfig:BP_patch_univ}), we ensure that the colors for the two BPs are significantly different. Other attack configurations, including e.g., the poisoning rate for each attack are detailed in Apdx. \ref{subsec:attack_configuration_details}.
\vspace{-0.035in}

{\bf Training configurations.} We train one classifier for each attack instance using the poisoned training set. For each 2-class domain, we also train a clean classifier to evaluate false detections. We denote the six ensembles of clean instances for datasets CIFAR-10, CIFAR-100, STL-10, TinyImageNet, FMNIST, and MNIST as {\bf C\textsubscript{1}-C\textsubscript{6}} respectively. For classifier training, we consider a variety of DNN architectures (with two output neurons, one for each of the two classes). For 2-class domains associated with CIFAR-10, CIFAR-100, and STL-10, we use ResNet-18 \cite{ResNet}; for FMNIST, we use VGG-11 \cite{VGG}; for TinyImageNet, we use ResNet-34; and for MNIST, we use LeNet-5 \cite{MNIST}. Other training configurations including the learning rate, the batch size, and the optimizer choice for each 2-class domain are detailed in Apdx. \ref{subsec:training_details}. Moreover, all attacks for all the instances are successful with high {\it attack success rate} and negligible degradation in {\it clean test accuracy} -- these two metrics are commonly used to evaluate the BA effectiveness \cite{DataLimited}. As a defense paper, we defer these attack details to Apdx. \ref{subsec:training_details}.
\vspace{-0.1in}

\begin{table}[t]
	\caption{Detection accuracy for RE-AP and RE-PR on attack ensembles A\textsubscript{1}-A\textsubscript{10}, and on clean ensembles C\textsubscript{1}-C\textsubscript{6}, using {\it the common threshold 1/2 on ET statistic}. ``n/a'' represents ``not applicable''.\vspace{-0.15in}}
	\label{tab:detection_accuracy}
	\begin{center}
		\vspace{-0.1in}
		\renewcommand{\arraystretch}{1.5}
		\resizebox{\textwidth}{!}{
			\begin{tabular}{c|c c c c c c c c c c c c c c c c}
				&A\textsubscript{1} &A\textsubscript{2} &A\textsubscript{3} &A\textsubscript{4} &A\textsubscript{5} &A\textsubscript{6} &A\textsubscript{7} &A\textsubscript{8} &A\textsubscript{9} &A\textsubscript{10} &C\textsubscript{1} &C\textsubscript{2} &C\textsubscript{3} &C\textsubscript{4} &C\textsubscript{5} &C\textsubscript{6}
				\\ \hline
				RE-AP &45/45 &18/20 &16/20 &17/20 &20/20 &20/20 &n/a &n/a &n/a &n/a &45/45 &20/20 &20/20 &20/20 &20/20 &20/20\\
				RE-PR &n/a &n/a &n/a &n/a &n/a &n/a &45/45 &20/20 & 19/20 &19/20 &39/45 &19/20 &20/20 &16/20 &18/20 &19/20
			\end{tabular}}
	\end{center}
	\vspace{-0.05in}
\end{table}

{\bf Defense configurations.} As mentioned in Sec. \ref{subsec:procedure}, our detection framework can be easily generalized to incorporate a variety of algorithms for BP reverse-engineering to detect various types of BP with different embedding mechanisms. Here, we consider the algorithm proposed by \cite{Post-TNNLS} for estimating \underline{A}dditive \underline{P}erturbation BPs (by solving (\ref{eq:pert_est_raw})), and the algorithm proposed by \cite{NC} for estimating \underline{P}atch \underline{R}eplacement BPs (by solving (\ref{eq:pert_est_raw_PR}) in Apdx. \ref{subsec:ET_definition_PR}). For convenience, we denote detection configurations with these two algorithms as {\bf RE-AP} and {\bf RE-PR}, respectively. Details for these two algorithms are both introduced in the original papers and reviewed in Apdx. \ref{subsec:BP_RE_alg}. Irrespective of the BP reverse-engineering algorithm, we use only 20 clean images per class (similar to most existing REDs) for detection. Finally, we set the ``patience'' parameter for determination of convergence in Alg. \ref{alg:detection} to $\tau$=4 -- this choice is independent of the presence of BA; a larger $\tau$ will not change the resulting ET much, but only increase the execution time.
\vspace{-0.035in}

{\bf Detection performance (using the common ET threshold 1/2).} In practice, our detector with RE-AP and with RE-PR can be deployed {\it in parallel} to cover both additive perturbation BPs and patch replacement BPs. Here, for simplicity, we apply our detector with RE-AP to classifiers (with BAs using additive perturbation BP) in ensemble A\textsubscript{1}-A\textsubscript{6}, and apply our detector with RE-PR to classifiers (with BAs using patch replacement BP) in ensemble A\textsubscript{7}-A\textsubscript{10}. For each classifier in clean ensembles C\textsubscript{1}-C\textsubscript{6}, we apply our detector with both configurations. In Tab. \ref{tab:detection_accuracy}, for each ensemble of attack instances, we report the fraction of classifiers such that the attack and all BA target classes are both successfully detected; for each ensemble of clean instances, we report the fraction of classifiers that are inferred to be not attacked. Given the large variety of classification domains, attack configurations, DNN architectures, and defense generalizations mentioned above, {\it using the common ET threshold 1/2}, we successfully detect most attacks with only very few false detections (see Tab. \ref{tab:detection_accuracy}). More discussions regarding the detection performance are deferred to Apdx. \ref{sec:results_discussion}.

\vspace{-0.1in}
\subsection{Compare ET with Other Statistics}\label{subsec:exp_comparison}
\vspace{-0.075in}

The results for existing REDs applying to the attacks we created are neglected for brevity, since these REDs cannot detect BAs for 2-class domains by their design. However, we compare our ET statistic with some popular types of statistic used by existing REDs in terms of their potential for being used to {\bf distinguish BA target classes from non-target classes}. The types of statistic for comparison include: 1) the $l_2$ norm of the estimated additive perturbation used by \cite{Post-TNNLS} (denoted by {\bf L\textsubscript{2}}); 2) the $l_1$ norm of the estimated mask used by \cite{NC} (denoted by {\bf L\textsubscript{1}}); and 3) the (cosine) similarity between the BP estimated group-wise and the BP estimated for each sample in terms of classifier's internal layer representation \cite{DataLimited} (denoted by {\bf CS}).
\vspace{-0.04in}

\begin{figure}[t]
	\centering
	\centerline{\includegraphics[width=0.95\linewidth]{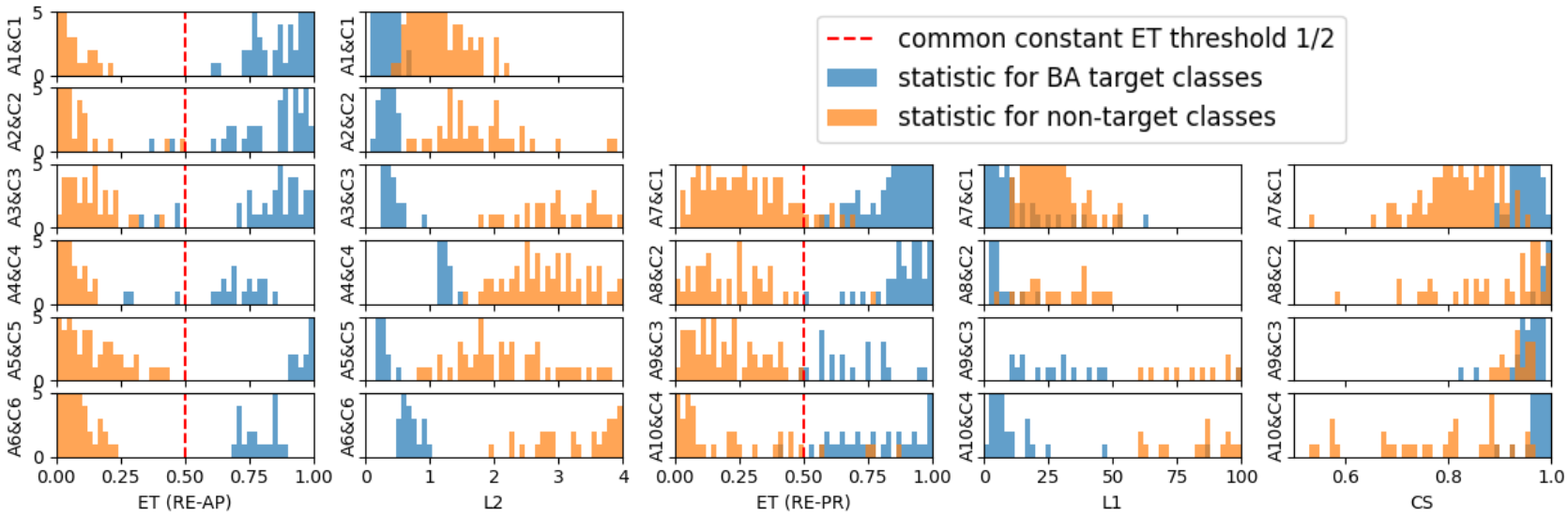}}
	\vspace{-0.05in}
	\caption{Comparison between our ET statistic (for both RE-AP and RE-PR configurations) and statistic types used by existing REDs (L\textsubscript{1}, L\textsubscript{2}, and CS). Only for ET, there is a common range for all 2-class domains for choosing a threshold to distinguish BA target classes (blue) from non-target classes (orange). Such common range also contains the constant threshold 1/2 (red dashed line).}
	\label{fig:stat_comparison}
\end{figure}

For each type of statistic mentioned above and each benchmark dataset, we consider all classifiers (with and without BA) trained for all 2-class domains generated from the dataset, and plot a double histogram for statistics obtained for all BA target classes and all non-target classes across these classifiers respectively. For example, in the 1\textsuperscript{st} column of Fig. \ref{fig:stat_comparison}, for our detector with RE-AP, we plot a double histogram for ET statistics obtained for all BA target classes and all non-target classes from all classifiers in each of A\textsubscript{1}\&C\textsubscript{1}, A\textsubscript{2}\&C\textsubscript{2}, A\textsubscript{3}\&C\textsubscript{3}, A\textsubscript{4}\&C\textsubscript{4}, A\textsubscript{5}\&C\textsubscript{5}, and A\textsubscript{6}\&C\textsubscript{6}. Note that all classifiers in, e.g., A\textsubscript{1}\&C\textsubscript{1} are trained for 2-class domains generated from CIFAR-10. Here, for simplicity, for ET (with RE-AP) and L\textsubscript{2} (both designated for additive perturbation BP), we do not consider BA target classes with patch replacement BP (e.g. associated with classifiers in A\textsubscript{7}-A\textsubscript{10}); for ET (with RE-PR), L\textsubscript{1} and CS (designated for patch replacement BP), we do not consider BA target classes with additive perturbation BP (e.g. associated with classifiers in A\textsubscript{1}-A\textsubscript{6}).
\vspace{-0.04in}


Based on Fig. \ref{fig:stat_comparison}, for {\it all} types of statistics including our ET, statistics obtained for BA target classes are generally separable from statistics obtained for non-target classes for classifiers (with and without BA) trained for 2-class domains generated from the {\it same} benchmark dataset (using same training configurations including DNN architecture). But only for our ET (obtained by both RE-AP and RE-PR), there is a {\it common range} (irrespective of the classification domain, attack configurations, and training configurations) for choosing a detection threshold to effectively distinguish BA target classes from non-target classes for all instances; and such common range clearly includes the {\it constant threshold 1/2} (marked by red dashed lines in Fig. \ref{fig:stat_comparison}) derived mathematically in Sec. \ref{sec:method}. By contrast, for both L\textsubscript{1} and L\textsubscript{2}, a proper choice of detection threshold for distinguishing BA target classes from non-target classes is domain dependent. For example, in the 4\textsuperscript{th} column of Fig. \ref{fig:stat_comparison}, the $l_1$ norm of masks estimated for both BA target classes and non-target classes for domains with larger image size (e.g. $96\times96$ for domains in A\textsubscript{9}\&C\textsubscript{3} generated from STL-10) is commonly larger than for domains with smaller image size (e.g. $32\times32$ for domains in A\textsubscript{7}\&C\textsubscript{1} generated from CIFAR-10). The CS statistic, on the other hand, not only relies on the domain, but also depends on the DNN architecture. More details regarding CS are deferred to Apdx. \ref{subsec:CS_limitations}.
\vspace{-0.04in}

In summary, all the above mentioned types of statistics are suitable for BA detection {\it if there is supervision from the same domain for choosing the best proper detection threshold}. However, in most practical scenarios where such supervision is not available, only our ET statistic with the common detection threshold 1/2 can still be used to achieve a good detection performance. Finally, in Fig. \ref{fig:roc}, we show the receiver operating characteristic (ROC) curves associated with Fig. \ref{fig:stat_comparison} for each of ET, L\textsubscript{1}, L\textsubscript{2}, and SC -- our ET statistic has clearly larger area under the ROC curve (very close to 1) than the other types of statistics.

\vspace{-0.075in}
\subsection{Experimental Verification of Property \ref{prop:PNT}}\label{subsec:exp_verification}
\vspace{-0.05in}

We verify Property \ref{prop:PNT} by showing that, if a class is not a BA target class, for any two clean images from the other class (considering 2-class domains), the minimum additive perturbation required to induce both images to be misclassified has a {\it larger} norm than the minimum perturbation required for each of these two images to be misclassified. Here, we randomly choose one clean classifier from each of C\textsubscript{1}-C\textsubscript{6}. For each classifier, we randomly choose 50 {\it pairs} of clean images from a random class of the associated 2-class domain -- these images are also used for detection in Sec. \ref{subsec:exp_main}. For each pair of images, we apply the same RE-AP algorithm (i.e. the BP reverse-engineering algorithm proposed by \cite{Post-TNNLS} for additive perturbation BPs) on the two images both {\it jointly} (to get a pair-wise common perturbation) and {\it separately} (to get two sample-wise perturbationsfor the two images respectively). Then, we divide the $l_2$ norm of the pair-wise perturbation by the {\it maximum} $l_2$ norm of the two sample-wise perturbations to get a ratio (which is more scale-insensitive than taking absolute difference). In Fig. \ref{fig:l2_ratio}, we plot the histogram of such ratio for all image pairs for all six classifiers -- the ratio for most of the image pairs is greater
than 1 (marked by the red dashed line in Fig. \ref{fig:l2_ratio}). For these pairs, very likely that the perturbation estimated for one sample cannot induce the other sample to be misclassified and vise versa (otherwise their expected ratio will likely be 1).
\vspace{-0.04in}

\begin{figure}[t]
	\vspace{-0.1in}
	\centering
	\begin{minipage}[b]{.23\linewidth}
		\centering
		\centerline{\includegraphics[width=\linewidth]{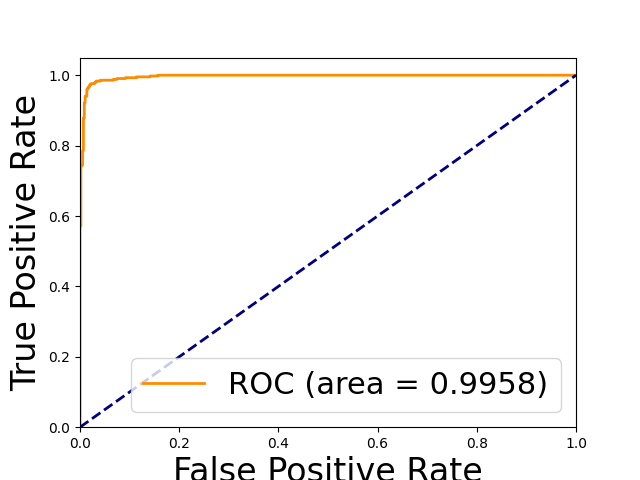}}
		\vspace{-0.04in}
		\subcaption{ET\vspace{-0.1in}}
	\end{minipage}
	\begin{minipage}[b]{.23\linewidth}
		\centering
		\centerline{\includegraphics[width=\linewidth]{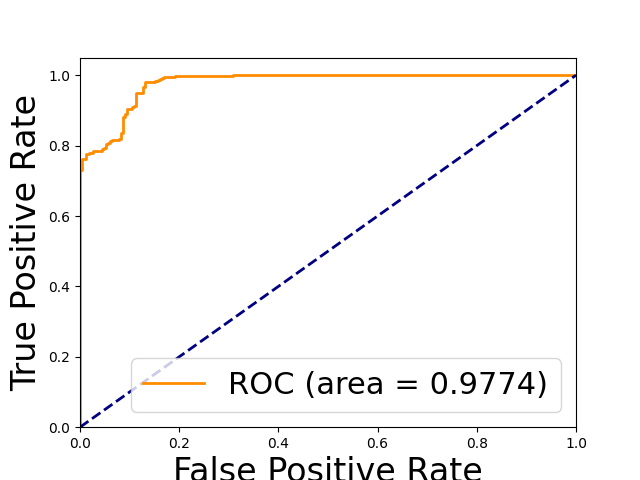}}
		\vspace{-0.04in}
		\subcaption{L\textsubscript{2}\vspace{-0.1in}}
	\end{minipage}
	\begin{minipage}[b]{.23\linewidth}
		\centering
		\centerline{\includegraphics[width=\linewidth]{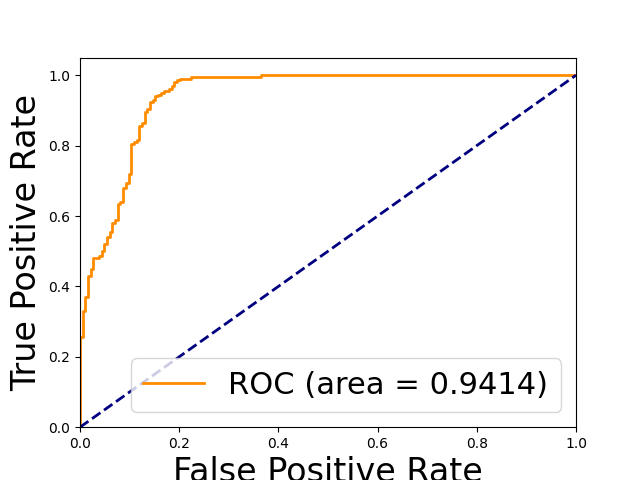}}
		\vspace{-0.04in}
		\subcaption{L\textsubscript{1}\vspace{-0.1in}}
	\end{minipage}
	\begin{minipage}[b]{.23\linewidth}
		\centering
		\centerline{\includegraphics[width=\linewidth]{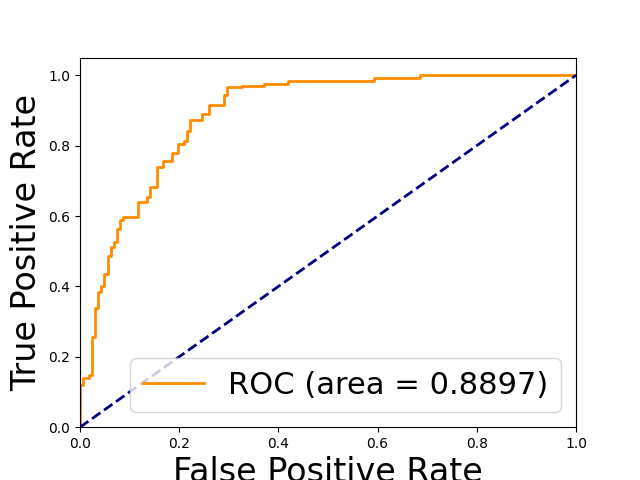}}
		\vspace{-0.04in}
		\subcaption{CS\vspace{-0.1in}}
	\end{minipage}
	\caption{ROC curves for ET, L\textsubscript{1}, L\textsubscript{2}, and SC in distinguishing BA target classes from non-target classes for the large variety of classification domains and attack configurations considered in Fig. \ref{fig:stat_comparison}.}
	\label{fig:roc}
\end{figure}

\begin{figure}
	\vspace{-0.175in}
	\centering
	\begin{minipage}{0.48\textwidth}
		\centering
		\includegraphics[width=0.85\linewidth]{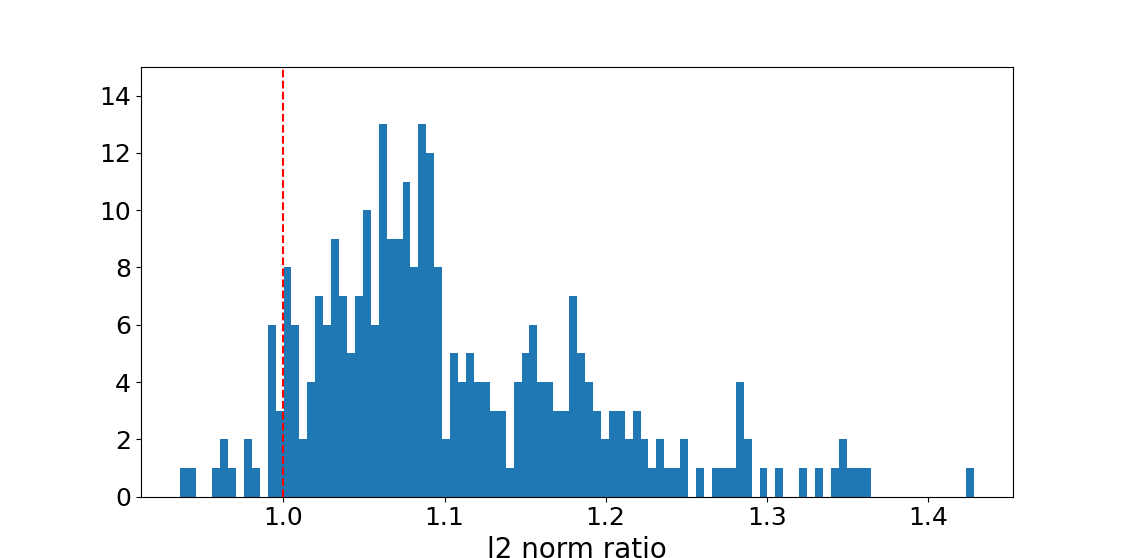}
		\vspace{-0.04in}
		\caption{Histogram of $l_2$ norm ratio between pair-wise additive perturbation and maximum of the two sample-wise perturbations for each random image pair for clean classifiers.}\label{fig:l2_ratio}
	\end{minipage}
	\begin{minipage}{0.46\textwidth}
		\centering
		\captionsetup{type=table} 
		\caption{Maximum ET statistic over all classes for classifiers with one, two, and three attacks respectively, and a clean classifier, for CIFAR-10, CIFAR-100, and STL-10.}\label{tab:ET_multi_class}
		\begin{center}
			\resizebox{0.9\textwidth}{!}{
				\begin{tabular}{c|c c c c}
					& 1 attack & 2 attacks & 3 attacks & clean
					\\ \hline \\
					CIFAR-10 &0.91 &0.92 &0.87 &0\\
					CIFAR-100 &0.95 &0.99 &0.99 &0.27\\
					STL-10 &0.65 &0.83 &0.77 &4.3e\textsuperscript{-3}
			\end{tabular}}
		\end{center}
	\end{minipage}
\end{figure}

\vspace{-0.075in}
\subsection{Multi-Class, Multi-Attack BA Detection Using ET Statistic}\label{subsec:exp_multi_class}
\vspace{-0.05in}

Since our detector inspects each class independently, it can also be used for BA detection for more general scenarios with {\it more than two classes} and {\it arbitrary number of attacks} (and BA target classes). For demonstration, for each of CIFAR-10, CIFAR-100, and STL-10, we create three attack instances with one, two, and three attacks, respectively (on the {\it original} domain). Additional experiments with aggregated results are deferred to Apdx. \ref{sec:exp_multi_class_more} due to space limitations. Here, we consider BAs with additive perturbation BPs for simplicity. The target class and the shape of BP for each attack are randomly selected. We train one classifier for each attack instance (thus, nine classifiers being attacked in total). More details for these attacks and training configurations are in Apdx. \ref{subsec:multi_class_exp_details}. For each domain, we also train a clean classifier (without BA) for evaluating false detections.
\vspace{-0.04in}

Following the description at the end of Sec. \ref{sec:method}, we apply the generalized Alg. \ref{alg:detection} with RE-AP to these classifiers. For CIFAR-10 and STL-10, we use three clean images per class for detection; for CIFAR-100, we use only one clean image per class for detection. Other detection configurations including the {\it detection threshold 1/2} are the same as in Sec. \ref{subsec:exp_main}. Since a classifier is deemed to be attacked if ET obtained for any class is greater than the threshold 1/2, for each classifier, we show the maximum ET over all the classes in Tab. \ref{tab:ET_multi_class}. Clearly, the maximum ET is greater than 1/2 for all classifiers being attacked and less than 1/2 for all clean classifiers. Thus, our detection framework (with the same constant threshold on ET) is also applicable to multi-class, multi-attack scenarios.

\vspace{-0.1in}
\section{Conclusions}\label{sec:conclusions}
\vspace{-0.1in}

We proposed the first BA detector for two-class, multi-attack scenarios without access to the classifier's training set or any supervision from clean reference classifiers trained for the same domain. Our detection framework is based on BP reverse-engineering and a novel ET statistic. Our ET statistic can be used to effectively distinguish BA target classes from non-target classes, with
a particular, theoretical-grounded threshold value 1/2, irrespective of the classification domain, the DNN architecture, and the attack configurations. Our detection framework can also be generalized to incorporate a variety of BP reverse-engineering algorithms to address different BP types, and be applied to multi-class scenarios with arbitrary number of attacks.

\newpage

\section*{Ethics Statement}

The main purpose of this research is to understand the behavior of deep learning systems facing malicious activities, and enhance their safety level by unsupervised means. The backdoor attack considered this paper is well-known, with open-sourced implementation code. Thus, publication of this paper (with code released at \url{https://github.com/zhenxianglance/2ClassBADetection}) will be beneficial to the community in defending backdoor attacks.

\bibliography{iclr2022_conference}
\bibliographystyle{iclr2022_conference}

\newpage

\appendix
\section{Proof of Theorems in the Main Paper}\label{sec:proof_main}

\subsection{Proof of Theorem \ref{thm:main}}\label{subsec:proof_thm1}

For random variables ${\bf X}$ and ${\bf Y}$ i.i.d. following distribution $P_i$, for any realization ${\bf x}$ of ${\bf X}$, we have
\begin{equation}
{\rm P}({\bf Y}\in{\mathcal T}_{\epsilon}({\bf x})) 
= {\rm P}({\bf Y}\in{\mathcal T}_{\epsilon}({\bf X}) | {\bf X} = {\bf x}) 
= {\mathbb E}\big[\mathds{1}({\bf Y}\in{\mathcal T}_{\epsilon}({\bf X})) \big| {\bf X} = {\bf x} \big]
\end{equation}
Thus, ET, the left hand side of Eq. (\ref{eq:thm1_main}), can be written as
\begin{equation}
{\mathbb E}\big[ {\rm P}({\bf Y}\in{\mathcal T}_{\epsilon}({\bf X}) | {\bf X}) \big] 
= {\mathbb E}\big[ \mathds{1}({\bf Y}\in{\mathcal T}_{\epsilon}({\bf X})) \big]
= {\rm P}({\bf Y}\in{\mathcal T}_{\epsilon}({\bf X}))
\end{equation}
Based on the inclusion-exclusion rule
\begin{equation}
{\rm P}({\bf Y}\in{\mathcal T}_{\epsilon}({\bf X})) + {\rm P}({\bf X}\in{\mathcal T}_{\epsilon}({\bf Y})) = 1 + {\rm P}_{{\rm MT,}i} - {\rm P}_{{\rm NT,}i}
\end{equation}
Since, ${\bf X}$ and ${\bf Y}$ are i.i.d. random variables, ${\rm P}({\bf Y}\in{\mathcal T}_{\epsilon}({\bf X})) = {\rm P}({\bf X}\in{\mathcal T}_{\epsilon}({\bf Y}))$; thus, we get Eq. (\ref{eq:thm1_main}).

\subsection{Proof of Theorem \ref{thm:PMT}}\label{subsec:proof_thm2}

Step 1: We show that for any $i\in{\mathcal C}$ and ${\bf X}$, ${\bf Y}$ i.i.d. following distribution $P_i$, ${\bf X}\in{\mathcal T}_{\epsilon}({\bf Y})$ and ${\bf Y}\in{\mathcal T}_{\epsilon}({\bf X})$ {\it if and only if} ${\mathcal V}_{\epsilon}({\bf X})\cap{\mathcal V}_{\epsilon}({\bf Y})\neq\emptyset$.

(a) If ${\mathcal V}_{\epsilon}({\bf X})\cap{\mathcal V}_{\epsilon}({\bf Y})\neq\emptyset$, there exists ${\bf v}$ such that ${\bf v}\in{\mathcal V}_{\epsilon}({\bf X})$ and ${\bf v}\in{\mathcal V}_{\epsilon}({\bf Y})$. By Definition \ref{df:e_solution_set},
\begin{align}
&{\bf v}\in{\mathcal V}_{\epsilon}({\bf X})\Rightarrow f({\bf X}+{\bf v}) \neq f({\bf X})\\
&{\bf v}\in{\mathcal V}_{\epsilon}({\bf Y})\Rightarrow f({\bf Y}+{\bf v}) \neq f({\bf Y})
\end{align}
Then, by Definition \ref{df:transferable_set}, the existence of such ${\bf v}$ yields ${\bf X}\in{\mathcal T}_{\epsilon}({\bf Y})$ and ${\bf Y}\in{\mathcal T}_{\epsilon}({\bf X})$.

(b) We prove the ``only if'' part by contradiction. Given ${\bf X}\in{\mathcal T}_{\epsilon}({\bf Y})$ and ${\bf Y}\in{\mathcal T}_{\epsilon}({\bf X})$, suppose ${\mathcal V}_{\epsilon}({\bf X})\cap{\mathcal V}_{\epsilon}({\bf Y}) = \emptyset$. By Definition \ref{df:transferable_set}, if ${\bf Y}\in{\mathcal T}_{\epsilon}({\bf X})$, there exists ${\bf v}\in {\mathcal V}_{\epsilon}({\bf X})$ such that $f({\bf Y}+{\bf v})\neq f({\bf Y})$. Since ${\mathcal V}_{\epsilon}({\bf X})\cap{\mathcal V}_{\epsilon}({\bf Y}) = \emptyset$, ${\bf v}\notin{\mathcal V}_{\epsilon}({\bf Y})$; hence, by Definition \ref{df:e_solution_set}
\begin{equation}\label{eq:PMT_eq1}
||{\bf v}||_2 - ||{\bf v}^{\ast}({\bf Y})||_2>\epsilon.
\end{equation}
Similarly, there exists ${\bf v}'\in {\mathcal V}_{\epsilon}({\bf Y})$ such that $f({\bf X}+{\bf v}')\neq f({\bf X})$ and ${\bf v}'\notin{\mathcal V}_{\epsilon}({\bf X})$; hence
\begin{equation}\label{eq:PMT_lemma_eq2}
||{\bf v}'||_2 - ||{\bf v}^{\ast}({\bf X})||_2>\epsilon.
\end{equation}
By Definition \ref{df:e_solution_set}, for all ${\bf u}\in{\mathcal V}_{\epsilon}({\bf Y})$, $||{\bf u}||_2 - ||{\bf v}^{\ast}({\bf Y})||_2 < \epsilon$. Thus, we have
\begin{equation}
||{\bf v}||_2 - ||{\bf v}^{\ast}({\bf Y})||_2 > ||{\bf u}||_2 - ||{\bf v}^{\ast}({\bf Y})||_2
\end{equation}
and accordingly $||{\bf v}||_2 > ||{\bf u}||_2$ for all ${\bf u}\in{\mathcal V}_{\epsilon}({\bf Y})$. Similarly, for all ${\bf u}'\in{\mathcal V}_{\epsilon}({\bf X})$, we have
\begin{equation}
||{\bf v}'||_2 - ||{\bf v}^{\ast}({\bf X})||_2 > ||{\bf u}'||_2 - ||{\bf v}^{\ast}({\bf X})||_2
\end{equation}
and thus, $||{\bf v}'||_2 > ||{\bf u}'||_2$ for all ${\bf u}'\in{\mathcal V}_{\epsilon}({\bf X})$. Clearly, there is a contradiction since there cannot exist an element in one set having a larger norm than all elements in another set while vise versa; and therefore, ${\mathcal V}_{\epsilon}({\bf X})\cap{\mathcal V}_{\epsilon}({\bf Y})\neq\emptyset$.

Step 2: We show that the upper bound for ${\rm P}_{{\rm MT,}i}$ goes to 0 when $\epsilon\rightarrow0$. Note that $\epsilon$ is the ``quality gap'' between the practical solution and the optimal solution.

Note that by Definition \ref{df:e_solution_set}, ${\mathcal V}_{\epsilon}({\bf X})\cap{\mathcal V}_{\epsilon}({\bf Y})\neq\emptyset$ {\it only if} $\big| ||{\bf v}^{\ast}({\bf X})||_2 - ||{\bf v}^{\ast}({\bf Y})||_2 \big|\leq\epsilon$; also based on Step 1:
\begin{equation}\label{eq:PMT_upper_bound}
{\rm P}_{{\rm MT,}i} = {\rm P}({\mathcal V}_{\epsilon}({\bf X})\cap{\mathcal V}_{\epsilon}({\bf Y})\neq\emptyset) \leq {\rm P}(\big| ||{\bf v}^{\ast}({\bf X})||_2 - ||{\bf v}^{\ast}({\bf Y})||_2 \big|\leq\epsilon).
\end{equation}
For ${\bf X}$ following continuous distribution $P_i$, $||{\bf v}^{\ast}({\bf X})||_2$ also follows some continuous distribution\footnote{One can construct very extreme cases such that ${\rm P}(||{\bf v}^{\ast}({\bf X})||_2=d)>0$ for some constant $d>0$. In other words, there is a set of samples from class $i$ with non-negligible probability that are {\it equal distant} to the decision boundary. However, the probability for these cases is zero for practical domains and highly non-linear classifiers.} on $(0, +\infty)$. Since ${\bf X}$ and ${\bf Y}$ are i.i.d., the right hand side of Eq. (\ref{eq:PMT_upper_bound}) goes to 0 as $\epsilon\rightarrow0$.

\subsection{Proof of Theorem \ref{thm:attack_case}}\label{subsec:proof_thm3}

For any ${\bf X}\sim P_i$, since ${\bf v}_0$ satisfies $f({\bf X}+{\bf v}_0)\neq f({\bf X})$ and
\begin{equation}
||{\bf v}_0||_2 - ||{\bf v}^{\ast}({\bf X})||_2 < ||{\bf v}_0||_2 \leq \epsilon,
\end{equation}
${\bf v}_0\in{\mathcal V}_{\epsilon}({\bf X})$ by Definition \ref{df:e_solution_set}. Thus, ${\bf v}_0\in{\mathcal V}_{\epsilon}({\bf X})\cap{\mathcal V}_{\epsilon}({\bf Y})$ for ${\bf X}$ and ${\bf Y}$ i.i.d. following $P_i$; and 
\begin{equation}
P({\mathcal V}_{\epsilon}({\bf X})\cap{\mathcal V}_{\epsilon}({\bf Y})\neq\emptyset)=1.
\end{equation}
Thus, based on Step 1 (and Eq. (\ref{eq:PMT_upper_bound})) in Apdx \ref{subsec:proof_thm2}, we have ${\rm P_{MT}}=1$. Since ${\rm P_{MT}}+{\rm P_{NT}}\leq1$, Eq. (\ref{eq:thm1_main}) can be written as
\begin{equation}
{\rm ET}_{i, \epsilon} \geq \frac{1}{2} + \frac{1}{2}({\rm P_{MT}} - 1 + {\rm P_{MT}}) = {\rm P_{MT}}
\end{equation}
Then we have
\begin{equation}
{\rm ET}_{i, \epsilon} = 1.
\end{equation}

\section{Analysis on a Simplified Classification Problem}\label{sec:toy}

In this section, we consider a simplified analogue to practical 2-class classification problems. Although assumptions are imposed for simplicity, we still keep the problem relatively general by allowing freedom on, e.g., the sample distribution in their latent space. With the simplification, we are able to analytically derive the condition for a sample belonging to the transferable set of another sample from the same class. Based on this, we show that $\frac{1}{2}$ is the {\it supremum} of the ET statistic for this problem when there is no attack. Note that we will abuse some notations that appeared in the main paper; {\it the notations in this section are all self-contained}.

\subsection{Problem Settings}\label{subsec:toy_settings}

We consider sample space ${\mathbb R}^n$ with an orthonormal basis $\{\boldsymbol{\alpha}_1, \cdots, \boldsymbol{\alpha}_d, \boldsymbol{\beta}_1, \cdots, \boldsymbol{\beta}_{n-d}\}$. We assume that samples from the two classes are distributed on sub-spaces $V_0 = \{{\bf Ac}|{\bf c}\in{\mathbb R}^d\}$ and $V_1 = \{{\bf Be}|{\bf e}\in{\mathbb R}^{n-d}\}$ respectively, with ${\bf A} = [\boldsymbol{\alpha}_1, \cdots, \boldsymbol{\alpha}_d]$ and ${\bf B} = [\boldsymbol{\beta}_1, \cdots, \boldsymbol{\beta}_{n-d}]$. By the definition of $V_0$ and $V_1$, we have also defined the {\it latent spaces}, i.e. ${\mathbb R}^{d}$ and ${\mathbb R}^{n-d}$, for the two classes. Here, we do not constrain the form or parameters of the distributions for the two classes in both the sample space ${\mathbb R}^n$ and latent spaces, as long as the distributions are continuous. Such an analogue may be corresponding to some simple classification domains in practice. Moreover, the latent space may be corresponding to an internal layer space of some deep neural network (DNN) classifier. For example, for a typical ReLU DNN classifier, it is possible that a subset of nodes in the penultimate layer are mainly activated for one class, while another subset of nodes are mainly activated for the other class.

For this simplified domain, we consider a nearest prototype classifier that is capable of classifying the two classes perfectly for {\it any} continuous sample distributions. To achieve this, any point ${\bf x}\in{\mathbb R}^n$ on the decision boundary of the classifier should have equal distance to $V_0$ and $V_1$, i.e.:
\begin{equation}\label{eq:equal_projection}
||{\bf A}{\bf A}^T{\bf x} - {\bf x}||_2 = ||{\bf B}{\bf B}^T{\bf x} - {\bf x}||_2
\end{equation}
By expanding both sides of Eq. (\ref{eq:equal_projection}) and rearranging terms (using the fact that ${\bf A}^T{\bf A}={\bf I}$ and ${\bf B}^T{\bf B}={\bf I}$), we obtain the decision boundary of the classifier, which is $\{{\bf x}\in{\mathbb R}^n|{\bf x}({\bf A}{\bf A}^T - {\bf B}{\bf B}^T){\bf x} = 0\}$. In other words, the classifier $f:{\mathbb R}^n\rightarrow\{0, 1\}$ is defined by
\begin{equation}\label{eq:nearest_prototype_classifier}
f({\bf x})=
\begin{cases}
0 & {\bf x}({\bf A}{\bf A}^T - {\bf B}{\bf B}^T){\bf x} > 0\\
1 & {\bf x}({\bf A}{\bf A}^T - {\bf B}{\bf B}^T){\bf x} \leq 0\\
\end{cases}
\end{equation}
Note that this classifier is not necessarily linear; and the region for each class may not be convex.

\subsection{Derivation of Transfer Condition}\label{subsec:toy_transfer_condition}

We derive the condition that one sample belongs to the transferable set of another sample from the same class. Since the two classes are symmetric, we focus on class 0 for brevity. Moreover, in this section, we refer to samples by their latent space representation; i.e., instead of ``a sample ${\bf x}\in{\mathbb R}^n$ from class 0'', we use ``a sample ${\bf c}\in{\mathbb R}^d$''.

First, we present the following modified definition of the transferable set (compared with Def. \ref{df:transferable_set} in the main paper) using latent space representation.

\begin{definition}\label{df:toy_transferable_set}
{\bf (Transferable set in latent space)} The transferable set for any sample ${\bf c}\in{\mathbb R}^d$ from class 0 is defined by
\begin{equation}\label{eq:toy_transferable_set}
{\mathcal T}({\bf c}) = \{{\bf c}'\in{\mathbb R}^d \big| f({\bf Ac}')=f({\bf Ac}), f({\bf Ac'}+{\bf v}^{\ast}({\bf c})) \neq f({\bf Ac}')\},
\end{equation}
where ${\bf v}^{\ast}({\bf c})$ is the optimal solution to
\begin{equation}\label{eq:toy_pert_est}
\underset{{\bf v}\in{\mathbb R}^n}{\text{minimize}} \, ||{\bf v}||_2 \quad\quad
\text{subject to} \, f({\bf Ac}+{\bf v})\neq f({\bf Ac}).
\end{equation} 
\end{definition}
Note that the above definition is in similar form to Def. \ref{df:transferable_set} in the main paper, though here, the quality to the solution to problem (\ref{eq:toy_pert_est}) is no longer considered. This is because, different with problem (\ref{eq:pert_est_raw}) in the main paper (which is the prerequisite of Def. \ref{df:transferable_set}), problem (\ref{eq:toy_pert_est}) here can be solved analytically, yielding a close form solution instead of a solution set with some intrinsic quality bound $\epsilon$. Accordingly, we present the following theorem, which gives the condition for one sample belonging to the transferable set of another sample from the same class.

\begin{theorem}\label{thm:toy_transfer_condition}
For any ${\bf c}, {\bf c}'\in{\mathbb R}^d$, ${\bf c}'\in{\mathcal T}({\bf c})$ if and only if $||{\bf c}'-{\bf c}/2||_2 \leq ||{\bf c}/2||_2$.
\end{theorem}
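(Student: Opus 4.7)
The plan is to work entirely in the orthonormal coordinates $\{\boldsymbol{\alpha}_1,\ldots,\boldsymbol{\alpha}_d,\boldsymbol{\beta}_1,\ldots,\boldsymbol{\beta}_{n-d}\}$, decomposing every vector into its ``class-0 component'' (in the span of the $\boldsymbol{\alpha}$'s) and its ``class-1 component'' (in the span of the $\boldsymbol{\beta}$'s), and to reduce everything to an explicit computation of ${\bf v}^{\ast}({\bf c})$.

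First, I would rewrite the classifier (\ref{eq:nearest_prototype_classifier}) in these coordinates. For any ${\bf x}\in\mathbb{R}^n$ written as ${\bf x} = {\bf A}{\bf c}_{\bf x} + {\bf B}{\bf e}_{\bf x}$ with ${\bf c}_{\bf x}\in\mathbb{R}^d$ and ${\bf e}_{\bf x}\in\mathbb{R}^{n-d}$, using $\mathbf{A}^T\mathbf{A}=\mathbf{I}_d$, $\mathbf{B}^T\mathbf{B}=\mathbf{I}_{n-d}$, and $\mathbf{A}^T\mathbf{B}=\mathbf{0}$, the quadratic form expands to ${\bf x}^T(\mathbf{A}\mathbf{A}^T-\mathbf{B}\mathbf{B}^T){\bf x} = \|{\bf c}_{\bf x}\|_2^2 - \|{\bf e}_{\bf x}\|_2^2$. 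Hence $f({\bf x})=0$ iff $\|{\bf c}_{\bf x}\|_2 > \|{\bf e}_{\bf x}\|_2$, and any ${\bf A}{\bf c}$ with ${\bf c}\neq\mathbf{0}$ is classified as $0$. Next I would solve (\ref{eq:toy_pert_est}). Writing ${\bf v} = \mathbf{A}{\bf a}+\mathbf{B}{\bf b}$, the misclassification constraint becomes $\|{\bf c}+{\bf a}\|_2^2 \leq \|{\bf b}\|_2^2$ and the objective $\|{\bf v}\|_2^2 = \|{\bf a}\|_2^2+\|{\bf b}\|_2^2$. At the optimum the constraint is tight, so the problem reduces to minimizing $\|{\bf a}\|_2^2+\|{\bf c}+{\bf a}\|_2^2$ over ${\bf a}\in\mathbb{R}^d$; a first-order condition yields ${\bf a}^{\ast} = -{\bf c}/2$ and therefore $\|{\bf b}^{\ast}\|_2 = \|{\bf c}/2\|_2$. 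Thus ${\bf v}^{\ast}({\bf c}) = -\mathbf{A}{\bf c}/2 + \mathbf{B}{\bf b}^{\ast}$ for some ${\bf b}^{\ast}$ with $\|{\bf b}^{\ast}\|_2 = \|{\bf c}/2\|_2$.

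Finally, I would plug this into Definition \ref{df:toy_transferable_set}. For any ${\bf c}'\in\mathbb{R}^d$ with ${\bf c}'\neq\mathbf{0}$ (so $f(\mathbf{A}{\bf c}')=0$), we have $\mathbf{A}{\bf c}' + {\bf v}^{\ast}({\bf c}) = \mathbf{A}({\bf c}'-{\bf c}/2) + \mathbf{B}{\bf b}^{\ast}$, so by the formula above, $f(\mathbf{A}{\bf c}'+{\bf v}^{\ast}({\bf c}))\neq 0$ iff $\|{\bf c}'-{\bf c}/2\|_2^2 \leq \|{\bf b}^{\ast}\|_2^2 = \|{\bf c}/2\|_2^2$, which is exactly the stated condition. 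The degenerate case ${\bf c}'=\mathbf{0}$ (which makes $f(\mathbf{A}{\bf c}')$ ambiguous) can be handled separately and satisfies the inequality with equality.

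The main obstacle is the non-uniqueness of ${\bf v}^{\ast}({\bf c})$: the direction of ${\bf b}^{\ast}$ inside $\mathbb{R}^{n-d}$ is completely free, so a priori one might worry the set ${\mathcal T}({\bf c})$ depends on this arbitrary choice. The key observation that resolves this is that in the expansion of the transfer criterion the direction of ${\bf b}^{\ast}$ cancels: only $\|{\bf b}^{\ast}\|_2$ appears, because the cross terms between $\mathbf{A}({\bf c}'-{\bf c}/2)$ and $\mathbf{B}{\bf b}^{\ast}$ vanish by $\mathbf{A}^T\mathbf{B}=\mathbf{0}$. Thus the condition $\|{\bf c}'-{\bf c}/2\|_2 \leq \|{\bf c}/2\|_2$ is unambiguous and equivalent to ${\bf c}'\in{\mathcal T}({\bf c})$, completing the ``if and only if''.
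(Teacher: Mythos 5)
Your proposal is correct and follows essentially the same route as the paper's proof: decompose ${\bf v}$ in the $({\bf A},{\bf B})$ basis, reduce the constraint to $\|{\bf v}_a+{\bf c}\|_2\leq\|{\bf v}_b\|_2$, minimize to get ${\bf v}_a^{\ast}=-{\bf c}/2$ and $\|{\bf v}_b^{\ast}\|_2=\|{\bf c}\|_2/2$, then substitute into the classifier's quadratic form to obtain $\|{\bf c}'-{\bf c}/2\|_2\leq\|{\bf c}/2\|_2$. Your explicit remarks on the non-uniqueness of the ${\bf B}$-component of ${\bf v}^{\ast}({\bf c})$ and the degenerate case ${\bf c}'=\mathbf{0}$ are welcome refinements that the paper leaves implicit, but they do not change the argument.
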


\begin{proof}
First, we derive the solution to problem (\ref{eq:toy_pert_est}). Note that ${\bf v}\in{\mathbb R}^n$ can be decomposed (using the orthonormal basis specified by ${\bf A}$ and ${\bf B}$) as ${\bf v} = {\bf A}{\bf v}_a+{\bf B}{\bf v}_b$ with ${\bf v}_a\in{\mathbb R}^d$ and ${\bf v}_b\in{\mathbb R}^{n-d}$. We substitute this decomposition into the constraint of problem (\ref{eq:toy_pert_est}). In words, the constraint means that ${\bf v}$ induces ${\bf Ac}$ to be (mis)classified to class 1; thus, according to the expression of the classifier in Eq. (\ref{eq:nearest_prototype_classifier}), the constraint can be written as
\begin{equation}
({\bf Ac}+{\bf A}{\bf v}_a+{\bf B}{\bf v}_b)^T({\bf A}{\bf A}^T - {\bf B}{\bf B}^T)({\bf Ac}+{\bf A}{\bf v}_a+{\bf B}{\bf v}_b)\leq0.
\end{equation}
Expanding the left hand side of the above inequality and using the fact that ${\bf A}^T{\bf B}=0$ for simplification, we get a much simpler expression of the constraint:
\begin{equation}\label{eq:simplified_constraint}
||{\bf v}_a + {\bf c}||_2 \leq ||{\bf v}_b||_2.
\end{equation}
Thus, a lower bound of the (square of the) objective to be minimized in problem (\ref{eq:toy_pert_est}) can be derived as
\begin{equation}
||{\bf v}||_2^2 = ||{\bf A}{\bf v}_a+{\bf B}{\bf v}_b||_2^2 = ||{\bf v}_a||_2^2 + ||{\bf v}_b||_2^2 \geq ||{\bf v}_a||_2^2 + ||{\bf v}_a + {\bf c}||_2^2,
\end{equation}
with equality holds if and only if $||{\bf v}_b||_2=||{\bf v}_a + {\bf c}||$, i.e. ${\bf Ac}+{\bf v}$ on the decision boundary of the classifier. Note that the right hand side of the inequality above is minimized when ${\bf v}_a=-\frac{\bf c}{2}$. Then, the optimal solution ${\bf v}^{\ast}({\bf c})={\bf A}{\bf v}_a^{\ast}(\bf c)+{\bf B}{\bf v}_b^{\ast}(\bf c)$ to problem (\ref{eq:toy_pert_est}) satisfies:
\begin{align}\label{eq:toy_pert_est_solution}
\begin{cases}
{\bf v}_a^{\ast}(\bf c)=-\frac{\bf c}{2},\\
||{\bf v}_b^{\ast}({\bf c})||_2=\frac{||{\bf c}||_2}{2}.
\end{cases}
\end{align}
Next, for any ${\bf c}, {\bf c}'\in{\mathbb R}^d$, we derive the condition for ${\bf c}'\in{\mathcal T}({\bf c})$. Since $f({\bf Ac}')=f({\bf Ac})$ is already satisfied, by Def. \ref{df:toy_transferable_set}, ${\bf c}'\in{\mathcal T}({\bf c})$ if and only if $f({\bf Ac'}+{\bf v}^{\ast}({\bf c})) \neq f({\bf Ac}')$, which is equivalent to (by Eq. (\ref{eq:nearest_prototype_classifier}))
\begin{equation}\label{eq:misclassification_transferred}
({\bf Ac'}+{\bf v}^{\ast}({\bf c}))^T({\bf A}{\bf A}^T - {\bf B}{\bf B}^T)({\bf Ac'}+{\bf v}^{\ast}({\bf c}))\leq0.
\end{equation}
Using the decomposition of ${\bf v}^{\ast}({\bf c})$, expanding and rearranging terms on the left hand side of the above, we obtain
\begin{equation}
{\bf c}'^T{\bf c}'+{\bf v}_a^{\ast}({\bf c})^T{\bf v}_a^{\ast}({\bf c}) - {\bf v}_b^{\ast}({\bf c})^T{\bf v}_b^{\ast}({\bf c}) + 2{\bf c}'^T{\bf v}_a^{\ast}({\bf c}) \leq 0.
\end{equation}
With the optimal solution in Eq. (\ref{eq:toy_pert_est_solution}) substituted in, we get
\begin{equation}
{\bf c}'^T({\bf c}' - {\bf c})\leq0,
\end{equation}
or, equivalently
\begin{equation}
||{\bf c}'-{\bf c}/2||_2 \leq ||{\bf c}/2||_2.
\end{equation}

\end{proof}

\subsection{Upper Bound on ET Statistic}\label{subsec:toy_ET_upper_bound}

In this section, we show that $\frac{1}{2}$ is the {\it minimum upper bound} on the ET statistic for this problem when there is no BA. To do so, we first present a definition of ET statistic modified from Def \ref{df:ET_stat} in the main paper, merely in adaption to the latent space representation used here (see Def. \ref{df:toy_ET_stat}). Then, we show the tightness of the bound by giving a concrete example where ET equals $\frac{1}{2}$ in Lem. \ref{lm:toy_ET_bound_tightness}. Finally, we prove that the ET statistic cannot be greater than $\frac{1}{2}$.

\begin{definition}\label{df:toy_ET_stat}
For i.i.d. random samples ${\bf C}$ and ${\bf C}'$ following some continuous distribution $G$ on ${\mathbb R}^d$, the ET statistic is defined by
\begin{equation}
{\rm  ET} = {\mathbb E}_{{\bf C}\sim G}\big[ {\rm P}({\bf C}'\in{\mathcal T}({\bf C}) | {\bf C}) \big]
\end{equation}
\end{definition}

\begin{lemma}\label{lm:toy_ET_bound_tightness}
For latent space ${\mathbb R}^d$ with dimension $d=1$ and distribution $G$ continuous on ${\mathbb R}$, the ET statistic satisfies
\begin{equation}
\frac{1}{4} \leq {\rm ET} \leq \frac{1}{2},
\end{equation}
where ${\rm ET} = \frac{1}{2}$ if and only if $G(0)=0$ or $G(0)=1$.
\end{lemma}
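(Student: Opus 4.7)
The plan is to specialize the transfer condition from Theorem~\ref{thm:toy_transfer_condition} to one dimension and then reduce ET to a one-parameter quantity depending only on $p = G(0) = P(C \le 0)$. In $\mathbb{R}$, the condition $|c' - c/2| \le |c/2|$ unfolds into $\mathcal{T}(c) = [0,c]$ when $c>0$, $\mathcal{T}(c) = [c,0]$ when $c<0$, and $\mathcal{T}(0) = \{0\}$; the latter has zero probability under the continuous law $G$, so it can be ignored. The first step is thus to replace the set membership in Definition~\ref{df:toy_ET_stat} by these explicit intervals.

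Next I would compute the conditional probability $P(C' \in \mathcal{T}(C) \mid C = c)$ in terms of the CDF (also written $G$) as $G(c) - G(0)$ for $c>0$ and $G(0) - G(c)$ for $c<0$, and then take the outer expectation, splitting the integral at $c=0$. Applying the probability integral transform $u = G(c)$ (valid because $G$ is continuous) turns the two pieces into $\int_p^1 (u-p)\,du$ and $\int_0^p (p-u)\,du$, which evaluate to $(1-p)^2/2$ and $p^2/2$ respectively. Summing yields the closed form
\begin{equation}
\mathrm{ET} \;=\; \tfrac{1}{2} - p(1-p).
\end{equation}

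From this identity the lemma reads off immediately: since $p \in [0,1]$, the quantity $p(1-p)$ lies in $[0, 1/4]$, so $\mathrm{ET} \in [1/4, 1/2]$. Equality $\mathrm{ET} = 1/2$ holds exactly when $p(1-p) = 0$, i.e.\ when $p = 0$ or $p = 1$, which is precisely the stated condition $G(0) = 0$ or $G(0) = 1$ (the lower extreme $1/4$ is attained at $p = 1/2$, a symmetric-about-zero distribution, though the lemma does not ask for a characterization of that case).

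The computation itself is routine; the only delicate points are justifying that the $c=0$ contribution vanishes (immediate from continuity of $G$) and ensuring the change of variables is applied correctly on each half-line (it is, because $G$ restricted to $(-\infty,0]$ maps onto $[0,p]$ and to $[0,\infty)$ maps onto $[p,1]$, again by continuity). The main conceptual step, rather than any hard estimate, is recognizing that in one dimension Theorem~\ref{thm:toy_transfer_condition} collapses the transferable set to an oriented interval anchored at the origin, which is what lets ET be expressed purely through $p$.
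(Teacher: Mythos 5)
Your proposal is correct and follows essentially the same route as the paper's proof: specialize the transfer condition to the intervals $[0,c]$ (resp. $[c,0]$), split the outer expectation at $c=0$, change variables to $u=G(c)$, and arrive at the closed form $\tfrac{1}{2}-G(0)+G(0)^2$, which is exactly your $\tfrac{1}{2}-p(1-p)$. The only (cosmetic) difference is that factoring the answer as $\tfrac{1}{2}-p(1-p)$ makes the bounds and the equality condition slightly more transparent than the paper's expanded form.
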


\begin{proof}
Based on Thm. \ref{thm:toy_transfer_condition}, for latent space dimension $d=1$ and two scalar\footnote{We do not use bold ${\bf C}$ but $C$ instead, since it is given as a scalar random variable.} i.i.d. random samples $C$ and $C'$ with continuous distribution $G$, $C'\in{\mathcal T}(C)$ if and only if $|C' - C/2| \leq |C/2|$. Thus, by Def. \ref{df:toy_ET_stat}, we have
\begin{equation}\label{eq:toy_ET_tightness}
\begin{aligned}
{\rm ET}_{(d=1)} =\, &{\mathbb E}_{C\sim G}[G(\frac{C}{2}+\frac{|C|}{2}) - G(\frac{C}{2}-\frac{|C|}{2})]\\
= &\int_{-\infty}^{\infty}[G(\frac{c}{2}+\frac{|c|}{2}) - G(\frac{c}{2}-\frac{|c|}{2})]g(c)dc\\
=\, &\int_{-\infty}^0[G(0)-G(c)]g(c)dc + \int_{0}^{\infty}[G(c)-G(0)]g(c)dc\\
=\, &\int_{0}^{G(0)}[G(0)-G(c)]dG(c) + \int_{G(0)}^{1}[G(c)-G(0)]dG(c)\\
=\, & G(0)^2 - \frac{1}{2}G(0)^2 + \frac{1}{2}[1 - G(0)^2] - G(0)[1-G(0)]\\
=\, & \frac{1}{2} - G(0) + G(0)^2,
\end{aligned}
\end{equation}
where $g(\cdot)$ is the density function of distribution $G$. Note that the last line of Eq. (\ref{eq:toy_ET_tightness}) is strictly in the interval $[\frac{1}{4}, \frac{1}{2}]$ for $G(\cdot)$ in range $[0, 1]$. The upper bound of ET when $d=1$ is $\frac{1}{2}$, which is achieved if and only if $G(0)=0$ or $G(0)=1$.
\end{proof}

\begin{theorem}\label{thm:toy_ET_upper_bound}
For arbitrary $d\in{\mathbb Z}_{+}$ and continuous distribution $G$ on ${\mathbb R}^d$
\begin{equation}\label{eq:sup_ET}
\underset{d, G}{\text{sup}} \quad {\rm ET} = \frac{1}{2}.
\end{equation}
\end{theorem}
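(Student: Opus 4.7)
The plan is to isolate the tight upper bound $\mathrm{ET}\le\tfrac{1}{2}$, since the matching lower bound already follows from Lemma B.2: taking $d=1$ and any continuous $G$ on $\mathbb{R}$ with $G(0)=0$ (e.g., an exponential distribution on $(0,\infty)$), Lemma B.2 gives $\mathrm{ET}=\tfrac{1}{2}$, exhibiting a sequence of $(d,G)$ that attains the supremum. Hence the substance is to show $\mathrm{ET}\le\tfrac{1}{2}$ for every $d\in\mathbb{Z}_+$ and every continuous $G$ on $\mathbb{R}^d$. Applying Definition B.2 together with Theorem B.1, I rewrite $\mathrm{ET} = \mathbb{P}(\|{\bf C}'-{\bf C}/2\|_2 \le \|{\bf C}/2\|_2)$ where ${\bf C},{\bf C}'$ are i.i.d.\ from $G$; expanding both squared norms and cancelling $\|{\bf C}/2\|_2^2$ on each side, this is equivalent to $\mathrm{ET} = \mathbb{P}({\bf C}^T{\bf C}' \ge \|{\bf C}'\|_2^2)$.

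The main step exploits the i.i.d.\ symmetry between ${\bf C}$ and ${\bf C}'$. Define the two events $A = \{{\bf C}^T{\bf C}' \ge \|{\bf C}'\|_2^2\}$ (equivalent to ${\bf C}'\in\mathcal{T}({\bf C})$) and $A' = \{{\bf C}^T{\bf C}' \ge \|{\bf C}\|_2^2\}$ (the symmetric event ${\bf C}\in\mathcal{T}({\bf C}')$). Since swapping the roles of ${\bf C}$ and ${\bf C}'$ preserves the joint distribution, $\mathbb{P}(A)=\mathbb{P}(A')=\mathrm{ET}$. Inclusion--exclusion then gives $2\,\mathrm{ET} = \mathbb{P}(A\cup A') + \mathbb{P}(A\cap A') \le 1 + \mathbb{P}(A\cap A')$, so the entire theorem reduces to the claim $\mathbb{P}(A\cap A')=0$.

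To prove that, note that on $A\cap A'$ both $\|{\bf C}\|_2^2$ and $\|{\bf C}'\|_2^2$ are bounded above by ${\bf C}^T{\bf C}'$, which by Cauchy--Schwarz is at most $\|{\bf C}\|_2\|{\bf C}'\|_2$. Chaining these inequalities forces $\|{\bf C}\|_2 = \|{\bf C}'\|_2$, and then equality must hold in Cauchy--Schwarz, giving ${\bf C}' = \lambda{\bf C}$ with $\lambda\ge 0$; combined with equal norms this leaves only the two possibilities ${\bf C}={\bf C}'$ or ${\bf C}={\bf C}'=\mathbf{0}$. Continuity of $G$ makes the diagonal $\{{\bf c}={\bf c}'\}$ a $G\!\otimes\!G$-null set, and $\{\mathbf{0}\}$ is $G$-null, so $\mathbb{P}(A\cap A')=0$. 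Combined with the inclusion--exclusion bound, this yields $\mathrm{ET}\le\tfrac{1}{2}$, and together with tightness from Lemma B.2 proves $\sup_{d,G}\mathrm{ET}=\tfrac{1}{2}$.

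The step I expect to be the main obstacle is not the Cauchy--Schwarz computation itself but carefully verifying that the borderline configurations (equal norms with ${\bf C}\ne{\bf C}'$, or one/both samples at the origin) really fall inside $G$-null sets; this is precisely where the continuity hypothesis on $G$ is indispensable, since a point mass on the diagonal would make $\mathbb{P}(A\cap A')>0$ and break the $\tfrac{1}{2}$ bound.
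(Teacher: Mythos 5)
Your proposal is correct and follows essentially the same route as the paper: reduce to the upper bound via the swap symmetry $\mathbb{P}(A)=\mathbb{P}(A')={\rm ET}$, apply inclusion--exclusion, and show the ``mutual transfer'' event $A\cap A'$ is $G\otimes G$-null, with tightness supplied by Lemma \ref{lm:toy_ET_bound_tightness}. The only (cosmetic) difference is in the null-set step: the paper simply adds the two defining inequalities to get $\|{\bf C}-{\bf C}'\|_2^2\le 0$ directly, whereas you reach the same conclusion ${\bf C}={\bf C}'$ via a Cauchy--Schwarz equality-case argument; both are valid and rely on continuity of $G$ in exactly the same place.
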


\begin{proof}
By Lem. \ref{lm:toy_ET_bound_tightness}, there exists $d=1$ and $G(0)=0$ or $G(0)=1$, such that ${\rm ET} = \frac{1}{2}$. Here, we only need to show that ${\rm ET} \leq \frac{1}{2}$ for any $d$ and $G$. Again, by the definition of ET (Def. \ref{df:toy_ET_stat}) and Thm. \ref{thm:toy_transfer_condition}, we can write ET as
\begin{equation}\label{eq:toy_ET_bound_inter1}
{\rm ET} = {\mathbb E}_{{\bf C}\sim G}\Big[{\rm P}(||{\bf C}'-\frac{\bf C}{2}||_2\leq||\frac{{\bf C}}{2}||_2) | {\bf C}\Big],
\end{equation}
for ${\bf C}$ and ${\bf C}'$ i.i.d. following distribution $G$. Similar to our proof of Thm. \ref{thm:main} in Sec. \ref{subsec:proof_thm1}
\begin{equation}\label{eq:toy_ET_bound_inter2}
{\rm P}(||{\bf C}'-\frac{\bf c}{2}||_2\leq||\frac{{\bf c}}{2}||_2)
= {\mathbb E}_{{\bf C'} \sim G}\Big[\mathds{1}\{||{\bf C}'-\frac{\bf C}{2}||_2\leq||\frac{{\bf C}}{2}||_2\} \Big| {\bf C}={\bf c}\Big].
\end{equation}
Thus, based on Eq. (\ref{eq:toy_ET_bound_inter1}) and (\ref{eq:toy_ET_bound_inter2}), ET can be written as
\begin{equation}\label{eq:toy_ET_bound_inter3}
{\rm ET}
= {\mathbb E}_{{\bf C, C'} \sim G}\Big[\mathds{1}\{||{\bf C}'-\frac{\bf C}{2}||_2\leq||\frac{{\bf C}}{2}||_2)\}\Big]
= {\rm P}(||{\bf C}'-\frac{\bf C}{2}||_2\leq||\frac{{\bf C}}{2}||_2)
\end{equation}
Since ${\bf C}$ and ${\bf C}'$ are i.i.d,
\begin{equation}\label{eq:toy_equal_prob}
{\rm P}(||{\bf C}'-\frac{\bf C}{2}||_2\leq||\frac{{\bf C}}{2}||_2) = {\rm P}(||{\bf C}-\frac{\bf C'}{2}||_2\leq||\frac{{\bf C'}}{2}||_2)
\end{equation}
Also note that for continuous distribution
\begin{equation}\label{eq:toy_ET_bound_inter4}
\begin{aligned}
&{\rm P}(||{\bf C}'-\frac{\bf C}{2}||_2\leq||\frac{{\bf C}}{2}||_2, ||{\bf C}-\frac{\bf C'}{2}||_2\leq||\frac{{\bf C'}}{2}||_2)\\
\leq\, &{\rm P}(||{\bf C}'-\frac{\bf C}{2}||_2^2 + ||{\bf C}-\frac{\bf C'}{2}||_2^2 \leq ||\frac{{\bf C}}{2}||_2^2 + ||\frac{{\bf C'}}{2}||_2^2)\\
=\, &{\rm P}(||{\bf C}-{\bf C}'||_2^2 \leq 0)\\
=\, &0
\end{aligned}
\end{equation}
Then, by the inclusion-exclusion rule
\begin{equation}\label{eq:toy_inc_exc}
\begin{aligned}
&{\rm P}(||{\bf C}'-\frac{\bf C}{2}||_2\leq||\frac{{\bf C}}{2}||_2) + {\rm P}(||{\bf C}-\frac{\bf C'}{2}||_2\leq||\frac{{\bf C'}}{2}||_2)\\
&- {\rm P}(||{\bf C}'-\frac{\bf C}{2}||_2\leq||\frac{{\bf C}}{2}||_2, ||{\bf C}-\frac{\bf C'}{2}||_2\leq||\frac{{\bf C'}}{2}||_2)\leq\,1,
\end{aligned}
\end{equation}
Substituting Eq. (\ref{eq:toy_equal_prob}) and (\ref{eq:toy_ET_bound_inter4}) into Eq. (\ref{eq:toy_inc_exc}), we have
\begin{equation}
{\rm P}(||{\bf C}'-\frac{\bf C}{2}||_2\leq||\frac{{\bf C}}{2}||_2) \leq \frac{1}{2},
\end{equation}
thus, by Eq. (\ref{eq:toy_ET_bound_inter3})
\begin{equation}
{\rm ET} \leq \frac{1}{2}
\end{equation}
which finishes the proof.
\end{proof}

\section{Using ET to Detect BA with Patch Replacement BP}\label{sec:patch_replacement}

In the main paper, we have described our detection framework in details considering additive perturbation BPs embedded by Eq. (\ref{eq:additive_perturbation}). In fact, our detection framework is {\it not} specific to any particular backdoor embedding mechanism. In this section, we repeat our derivation and analysis in Sec. \ref{sec:method} by considering patch replacement BPs embedded by Eq. (\ref{eq:patch_replacement}). Basically, the definitions, theorems and algorithms presented in this section are matched with those in the main paper -- {\it they are under the same framework which is generally independent of the backdoor pattern embedding mechanism.}

Like the organization of Sec. \ref{sec:method}, in this section, we first introduce several definitions related to ET but for patch replacement BPs -- these definitions are similar to those in the main paper and are customized to patch replacement BPs. Especially, the ET statistic is defined in the same fashion as in Def. \ref{df:ET_stat} of the main paper. Then, we show that the same constant detection threshold $\frac{1}{2}$ on the ET statistic for detecting BAs with with additive perturbation BPs can be used for detecting BAs with patch replacement BPs as well. Finally, we present the detailed procedure of our detection for patch replacement BPs (as the counterpart of Alg. \ref{alg:detection} of the main paper). Again, this section can be viewed as an independent section, where {\it the notations are self-contained.}

\subsection{Definition of ET for Patch Replacement BP}\label{subsec:ET_definition_PR}

Consider the same classifier $f:{\mathcal X}\rightarrow{\mathcal C}$ to be inspected (as in the main paper) with the same label space ${\mathcal C}=\{0, 1\}$ and continuous sample distribution $P_i$ on ${\mathcal X}$ for class $i\in{\mathcal C}$. For any sample ${\bf x}$ from any class, the optimal solution to
\begin{equation}\label{eq:pert_est_raw_PR}
\underset{{\bf s}=\{{\bf m}, {\bf u}\}}{\text{minimize}} \, ||{\bf m}||_1 \quad\quad
\text{subject to} \, f(\boldsymbol{\Delta}({\bf x}; {\bf m}, {\bf u}))\neq f({\bf x})
\end{equation} 
is defined as ${\bf s}^{\ast}({\bf x})=\{{\bf m}^{\ast}(\bf x), {\bf u}^{\ast}({\bf x})\}$, where $\boldsymbol{\Delta}({\bf x}; {\bf m}, {\bf u})=(1 - {\bf m})\odot{\bf x} + {\bf m}\odot{\bf u}$ is an alternative expression to the patch replacement embedding formula in Eq. (\ref{eq:patch_replacement}) (for notational convenience only). Again, ${\bm m}\in{\mathcal M}$ is the binary mask and ${\bf u}\in{\mathcal X}$ is the patch for replacement.

Existing methods for solving problem (\ref{eq:pert_est_raw_PR}) including the one proposed by \cite{NC}, where, again, the practical solutions are usually sub-optimal. Thus, similar to Def. \ref{df:e_solution_set} in the main paper, we present the following definition in adaption to patch replacement BPs.

\begin{definition}\label{df:e_solution_set_PR}
{\bf ($\boldsymbol{\epsilon}$-solution set for patch replacement BPs)} For any sample ${\bf x}$ from any class, regardless of the method being used, the $\epsilon$-solution set to problem (\ref{eq:pert_est_raw_PR}), is defined by
\begin{equation}
{\mathcal S}_{\epsilon}({\bf x}) \triangleq \{\{{\bf m}, {\bf u}\} \big| ||{\bf m}||_1-||{\bf m}^{\ast}({\bf x})||_1\leq \epsilon, f(\boldsymbol{\Delta}({\bf x}; {\bf m}, {\bf u}))\neq f({\bf x})\},
\end{equation}
where $\epsilon>0$ is the ``quality'' bound of the solutions, which is usually small for existing methods.
\end{definition}

Similar to Def. \ref{df:transferable_set} in the main paper, we present the following definition (with abused notation) for the transferable set for patch replacement BPs.

\begin{definition}\label{df:transferable_set_PR}
{\bf (Transferable set for patch replacement BPs)} The transferable set for any sample ${\bf x}$ and $\epsilon>0$ is defined by
\begin{equation}
{\mathcal T}_{\epsilon}({\bf x}) \triangleq \{{\bf y}\in{\mathcal X}\big|f({\bf y})=f({\bf x}), \exists\{{\bf m}, {\bf u}\}\in{\mathcal S}_{\epsilon}({\bf x}) \, {\rm s.t.} \, f(\boldsymbol{\Delta}({\bf y}; {\bf m}, {\bf u}))\neq f({\bf y})\}.
\end{equation}
\end{definition}

Finally, we present the following definition of the ET statistic for patch replacement BPs, which looks {\it exactly the same} as Def. \ref{df:ET_stat} in the main paper. However, here, the definition of the transferable set has been customized for patch replacement BPs. Even though, the similarity between these definitions and their counterparts in the main paper has already highlighted the generalization capability of our detection framework.

\begin{definition}\label{df:ET_stat_PR}
{\bf (ET statistic for patch replacement BPs)} For any class $i\in{\mathcal C}=\{0, 1\}$ and $\epsilon>0$, considering i.i.d. random samples ${\bf X}, {\bf Y} \sim P_i$, the ET statistic for class $i$ is defined by ${\rm ET}_{i, \epsilon} \triangleq {\mathbb E}\big[ {\rm P}({\bf Y}\in{\mathcal T}_{\epsilon}({\bf X}) | {\bf X}) \big]$.
\end{definition}

\subsection{Detecting BA with Patch Replacement BP Using ET}\label{subsec:ET_analysis_PR}

For the ET statistic for patch replacement BPs defined above, the same constant detection threshold $\frac{1}{2}$ can be used for distinguish BA target classes from non-target classes. The connection between the ET statistic for patch replacement BPs and the constant threshold $\frac{1}{2}$ is established by the same Thm. \ref{thm:main} in Sec. \ref{subsec:ET_analysis} of the main paper with the same proof in Apdx. \ref{subsec:proof_thm1}. Thus, these details are not included here for brevity. Note that ${\rm P}_{{\rm MT,}i}$ and ${\rm P}_{{\rm NT,}i}$ for class $i$ are defined in the same way as in Thm. \ref{thm:main}, though the transferable set ${\mathcal T}_{\epsilon}({\bf x})$ is customized for patch replacement BPs in the current section -- this is the main reason why we abuse the notation for the transferable set for patch replacement BPs in Def. \ref{df:transferable_set_PR}.

In the following, like in Sec. \ref{subsec:ET_analysis} of the main paper, we discuss the non-attack case and the attack case respectively. Readers should notice that the theorems (and the associated proofs) and discussions are similar to those in the main paper. Such similarity further highlight the generalization capability of our detection framework.

{\bf Non-attack case.} Property \ref{prop:PNT} from the main paper is also applicable here. That is, if class $(1-i)$ where $i\in{\mathcal C}=\{0, 1\}$ is not a BA target class, ${\rm P}_{{\rm NT,}i}$ for class $i$ will likely be larger than $\frac{1}{2}$.

Similar to our verification of Property \ref{prop:PNT} for additive perturbation BPs in Sec. \ref{subsec:exp_verification} of the main paper, we verify Property \ref{prop:PNT} for patch replacement BPs in Apdx. \ref{subsec:exp_verification_PR}. Using similar protocol as in Sec. \ref{subsec:exp_verification} of the main paper, we show that the common mask with the minimum-norm required for two sample to be misclassified usually has a larger norm than mask with the mimimum-norm required for each of them.

As for ${\rm P}_{{\rm MT,}i}$, again, Thm. \ref{thm:PMT} from the main paper is also applicable here, but with a slightly different proof (shown below) in adaption to the modifications to the definitions for the patch replacement BPs in Apdx. \ref{subsec:ET_definition_PR}.

\begin{proof}
Step 1: Similar to the proof in Apdx. \ref{subsec:proof_thm2}, We show that for any $i\in{\mathcal C}$ and ${\bf X}$, ${\bf Y}$ i.i.d. following distribution $P_i$, ${\bf X}\in{\mathcal T}_{\epsilon}({\bf Y})$ and ${\bf Y}\in{\mathcal T}_{\epsilon}({\bf X})$ {\it if and only if} ${\mathcal S}_{\epsilon}({\bf X})\cap{\mathcal S}_{\epsilon}({\bf Y})\neq\emptyset$.

(a) If ${\mathcal S}_{\epsilon}({\bf X})\cap{\mathcal S}_{\epsilon}({\bf Y})\neq\emptyset$, there exists ${\bf s}$ such that ${\bf s}\in{\mathcal S}_{\epsilon}({\bf X})$ and ${\bf s}\in{\mathcal S}_{\epsilon}({\bf Y})$. By Def. \ref{df:e_solution_set_PR},
\begin{align}
&{\bf s}\in{\mathcal S}_{\epsilon}({\bf X})\Rightarrow f(\boldsymbol{\Delta}({\bf X}; {\bf m}, {\bf u})) \neq f({\bf X})\\
&{\bf s}\in{\mathcal S}_{\epsilon}({\bf Y})\Rightarrow f(\boldsymbol{\Delta}({\bf Y}; {\bf m}, {\bf u})) \neq f({\bf Y})
\end{align}
Then, by Definition \ref{df:transferable_set_PR}, the existence of such ${\bf s}$ yields ${\bf X}\in{\mathcal T}_{\epsilon}({\bf Y})$ and ${\bf Y}\in{\mathcal T}_{\epsilon}({\bf X})$.

(b) We prove the ``only if'' part also by contradiction. Given ${\bf X}\in{\mathcal T}_{\epsilon}({\bf Y})$ and ${\bf Y}\in{\mathcal T}_{\epsilon}({\bf X})$, suppose ${\mathcal S}_{\epsilon}({\bf X})\cap{\mathcal S}_{\epsilon}({\bf Y}) = \emptyset$. By Definition \ref{df:transferable_set_PR}, if ${\bf Y}\in{\mathcal T}_{\epsilon}({\bf X})$, there exists ${\bf s}=\{{\bf m}, {\bf u}\}\in {\mathcal S}_{\epsilon}({\bf X})$ such that $f(\boldsymbol{\Delta}({\bf Y}; {\bf m}, {\bf u}))\neq f({\bf Y})$. Since ${\mathcal S}_{\epsilon}({\bf X})\cap{\mathcal S}_{\epsilon}({\bf Y}) = \emptyset$, such ${\bf s}\notin{\mathcal S}_{\epsilon}({\bf Y})$; hence, by Definition \ref{df:e_solution_set_PR}
\begin{equation}\label{eq:PMT_eq1_PR}
||{\bf m}||_1 - ||{\bf m}^{\ast}({\bf Y})||_1>\epsilon.
\end{equation}
Similarly, there exists ${\bf s}'=\{{\bf m}',{\bf u}'\}\in {\mathcal S}_{\epsilon}({\bf Y})$ such that $f(\boldsymbol{\Delta}({\bf X}; {\bf m}', {\bf u}'))\neq f({\bf X})$ and ${\bf s}'\notin{\mathcal S}_{\epsilon}({\bf X})$ (since it is assumed that ${\mathcal S}_{\epsilon}({\bf X})\cap{\mathcal S}_{\epsilon}({\bf Y}) = \emptyset$); hence
\begin{equation}\label{eq:PMT_lemma_eq2_PR}
||{\bf m}'||_1 - ||{\bf m}^{\ast}({\bf X})||_1>\epsilon.
\end{equation}
By Definition \ref{df:e_solution_set_PR}, for all $\tilde{{\bf s}}=\{\tilde{{\bf m}}, \tilde{{\bf u}}\}\in{\mathcal S}_{\epsilon}({\bf Y})$, $||\tilde{\bf m}||_1 - ||{\bf m}^{\ast}({\bf Y})||_1 < \epsilon$. Thus, we have
\begin{equation}
||{\bf m}||_1 - ||{\bf m}^{\ast}({\bf Y})||_1 > ||\tilde{\bf m}||_1 - ||{\bf m}^{\ast}({\bf Y})||_1
\end{equation}
and accordingly $||{\bf m}||_1 > ||\tilde{\bf m}||_1$ for all $\tilde{\bf s}\in{\mathcal S}_{\epsilon}({\bf Y})$. Similarly, for all $\tilde{\bf s}'=\{\tilde{\bf m}', \tilde{\bf u}'\}\in{\mathcal S}_{\epsilon}({\bf X})$, we have
\begin{equation}
||{\bf m}'||_1 - ||{\bf m}^{\ast}({\bf X})||_1 > ||\tilde{\bf m}'||_1 - ||{\bf m}^{\ast}({\bf X})||_1
\end{equation}
and thus, $||{\bf m}'||_1 > ||\tilde{\bf m}'||_1$ for all $\tilde{\bf s}'\in{\mathcal S}_{\epsilon}({\bf X})$. Clearly, there is a contradiction since there cannot exist an element in one set having a larger norm than all elements in another set while vise versa; and therefore, ${\mathcal S}_{\epsilon}({\bf X})\cap{\mathcal S}_{\epsilon}({\bf Y})\neq\emptyset$.

Step 2: We show that the upper bound for ${\rm P}_{{\rm MT,}i}$ goes to 0 when $\epsilon\rightarrow0$.

Note that by Definition \ref{df:e_solution_set_PR}, ${\mathcal S}_{\epsilon}({\bf X})\cap{\mathcal S}_{\epsilon}({\bf Y})\neq\emptyset$ {\it only if} $\big| ||{\bf m}^{\ast}({\bf X})||_1 - ||{\bf m}^{\ast}({\bf Y})||_1 \big|\leq\epsilon$; also based on Step 1:
\begin{equation}\label{eq:PMT_upper_bound_PR}
{\rm P}_{{\rm MT,}i} = {\rm P}({\mathcal S}_{\epsilon}({\bf X})\cap{\mathcal S}_{\epsilon}({\bf Y})\neq\emptyset) \leq {\rm P}(\big| ||{\bf m}^{\ast}({\bf X})||_1 - ||{\bf m}^{\ast}({\bf Y})||_1 \big|\leq\epsilon).
\end{equation}
For ${\bf X}$ following continuous distribution $P_i$, ${\bf m}^{\ast}({\bf X})$ also follows some continuous distribution on $(0, +\infty)$. Since ${\bf X}$ and ${\bf Y}$ are i.i.d., the right hand side of Eq. (\ref{eq:PMT_upper_bound_PR}) goes to 0 as $\epsilon\rightarrow0$.
\end{proof}

Based on the above, we have reached the same conclusions for patch replacement BPs as in the main paper. That is, for the non-attack case, we will likely have ${\rm P}_{{\rm NT,}i}\geq{\rm P}_{{\rm MT,}i}$, and consequently, ${\rm ET}_{i, \epsilon}\leq\frac{1}{2}$ based on Thm. \ref{thm:main}.

{\bf Attack case.} For class $i\in{\mathcal C}=\{0, 1\}$, we consider a successful BA with target class $(1-i)$. The BP used by the attacker is specified by ${\bf s}_0=\{{\bf m}_0, {\bf u}_0\}$ with mask ${\bf m}_0$ and patch ${\bf u}_0$. Thus, for any ${\bf X}\sim P_i$, due to the success of the BA, $f({\bf X})=i$ and $f(\boldsymbol{\Delta}({\bf X}; {\bf m}_0, {\bf u}_0))\neq f({\bf X})$. Similar to our discussion in the main paper, ${\bf s}_0$ will likely be a common element in both ${\mathcal S}_{\epsilon}({\bf X})$ and ${\mathcal S}_{\epsilon}({\bf Y})$ for ${\bf X}$, ${\bf Y}$ i.i.d. following $P_i$. For the same reason, in this case, the ET statistic will likely be greater than $\frac{1}{2}$.

Similar to additive perturbation BPs, for patch replacement BPs, we also have a guarantee for a large ET statistic (${\rm ET}_{i, \epsilon}=1$) when the mask size of the BP used by the attacker is sufficiently small. Such property is summarized in the theorem below.

\begin{theorem}\label{thm:attack_case_PR}
	If class $(1-i)$ it the target class of a BA with patch replacement BP ${\bf s}_0=\{{\bf m}_0, {\bf u}_0\}$ such that $||{\bf m}_0||_1\leq\epsilon$, we will have $P({\mathcal S}_{\epsilon}({\bf X})\cap{\mathcal S}_{\epsilon}({\bf Y})\neq\emptyset)=1$ for ${\bf X}$ and ${\bf Y}$ i.i.d. following $P_i$; and furthermore, ${\rm ET}_{i, \epsilon}=1$.
\end{theorem}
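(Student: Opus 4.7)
The plan is to mirror the proof of Theorem \ref{thm:attack_case} from Appendix \ref{subsec:proof_thm3}, but using the patch-replacement versions of the $\epsilon$-solution set and the transferable set (Definitions \ref{df:e_solution_set_PR} and \ref{df:transferable_set_PR}) in place of their additive-perturbation counterparts. The core observation is that the true BP ${\bf s}_0=\{{\bf m}_0, {\bf u}_0\}$ itself is a feasible (and nearly optimal) solution to problem (\ref{eq:pert_est_raw_PR}) at \emph{every} sample from class $i$, so it necessarily lies in the intersection of the two $\epsilon$-solution sets whenever ${\bf X}$ and ${\bf Y}$ are drawn from class $i$.

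First I would verify that ${\bf s}_0\in{\mathcal S}_{\epsilon}({\bf X})$ almost surely for ${\bf X}\sim P_i$. Since the BA is successful, we have $f({\bf X})=i$ and $f(\boldsymbol{\Delta}({\bf X};{\bf m}_0,{\bf u}_0))\neq f({\bf X})$, so ${\bf s}_0$ satisfies the constraint of (\ref{eq:pert_est_raw_PR}); moreover ${\bf m}^{\ast}({\bf X})$ is the optimal mask, so $0\leq \|{\bf m}_0\|_1 - \|{\bf m}^{\ast}({\bf X})\|_1 \leq \|{\bf m}_0\|_1 \leq \epsilon$ by hypothesis, which is exactly the quality condition in Definition \ref{df:e_solution_set_PR}. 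The same argument applied to ${\bf Y}\sim P_i$ gives ${\bf s}_0\in{\mathcal S}_{\epsilon}({\bf Y})$. Therefore ${\bf s}_0\in{\mathcal S}_{\epsilon}({\bf X})\cap{\mathcal S}_{\epsilon}({\bf Y})$ with probability $1$, which establishes the first claim $P({\mathcal S}_{\epsilon}({\bf X})\cap{\mathcal S}_{\epsilon}({\bf Y})\neq\emptyset)=1$.

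Next I would invoke Step 1 of the proof of Theorem \ref{thm:PMT} adapted to patch replacement (already carried out in Appendix \ref{subsec:ET_analysis_PR}), which gives the equivalence ${\mathcal S}_{\epsilon}({\bf X})\cap{\mathcal S}_{\epsilon}({\bf Y})\neq\emptyset \Longleftrightarrow {\bf X}\in{\mathcal T}_{\epsilon}({\bf Y}) \text{ and } {\bf Y}\in{\mathcal T}_{\epsilon}({\bf X})$. Combined with the previous step, this yields ${\rm P}_{{\rm MT,}i}=1$. Since ${\rm P}_{{\rm MT,}i}+{\rm P}_{{\rm NT,}i}\leq 1$, we then have ${\rm P}_{{\rm NT,}i}=0$, and plugging into Theorem \ref{thm:main} gives ${\rm ET}_{i,\epsilon}=\tfrac{1}{2}+\tfrac{1}{2}(1-0)=1$, completing the proof. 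I do not expect any genuine obstacle here; the only subtlety is making sure that the equivalence from Step 1 of Appendix \ref{subsec:proof_thm2} has been correctly re-proved for the patch-replacement setting (which it is, via the same contradiction argument with $\|{\bf m}\|_1$ replacing $\|{\bf v}\|_2$), so that it can be cited cleanly in the final chain of implications.
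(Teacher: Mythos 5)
Your proposal is correct and follows essentially the same route as the paper: show ${\bf s}_0$ lies in ${\mathcal S}_{\epsilon}({\bf X})$ for every ${\bf X}\sim P_i$ since it is feasible for (\ref{eq:pert_est_raw_PR}) with $\|{\bf m}_0\|_1\leq\epsilon$, conclude ${\rm P}_{{\rm MT,}i}=1$ via the Step-1 equivalence adapted to patch replacement, and finish with Theorem \ref{thm:main}. The only cosmetic difference is that you deduce ${\rm P}_{{\rm NT,}i}=0$ explicitly before substituting, whereas the paper bounds ${\rm ET}_{i,\epsilon}\geq{\rm P}_{{\rm MT,}i}$; these are trivially equivalent.
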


\begin{proof}
	For any ${\bf X}\sim P_i$, since ${\bf s}_0=\{{\bf m}_0, {\bf u}_0\}$ satisfies $f(\boldsymbol{\Delta}({\bf X}; {\bf m}_0, {\bf u}_0))\neq f({\bf X})$, and also because
	\begin{equation}
	||{\bf m}_0||_1 - ||{\bf m}^{\ast}({\bf X})||_1 < ||{\bf m}_0||_1 \leq \epsilon,
	\end{equation}
	by Definition \ref{df:e_solution_set_PR}, we have ${\bf s}_0\in{\mathcal S}_{\epsilon}({\bf X})$. Thus, ${\bf s}_0\in{\mathcal S}_{\epsilon}({\bf X})\cap{\mathcal S}_{\epsilon}({\bf Y})$ for ${\bf X}$ and ${\bf Y}$ i.i.d. following $P_i$; and
	\begin{equation}
	P({\mathcal S}_{\epsilon}({\bf X})\cap{\mathcal S}_{\epsilon}({\bf Y})\neq\emptyset)=1.
	\end{equation}
	The rest of the proof (showing that ${\rm ET}_{i, \epsilon} = 1$ for this attack scenario) is exactly the same as the proof of Thm. \ref{thm:attack_case} shown in Apdx. \ref{subsec:proof_thm3}, thus is neglected here.
\end{proof}

\subsection{Detection Procedure for Patch Replacement BP}\label{subsec:procedure_PR}

The same procedure, i.e. Alg. \ref{alg:detection} in the main paper, can be used for detecting BAs with patch replacement BP, with only the following two modifications. We only need to first replace line 8 of Alg. \ref{alg:detection} by:\\
``Obtain an empirical solution $\hat{\bf s}({\bf x}_n^{(i)}) = \{\hat{\bf m}({\bf x}_n^{(i)}), \hat{\bf u}({\bf x}_n^{(i)})\}$ to problem (\ref{eq:pert_est_raw_PR}) using random initialization.''\\
and then change line 9 of Alg. \ref{alg:detection} to:\\
``$\widehat{{\mathcal T}}_{\epsilon}({\bf x}_n^{(i)}) \leftarrow \widehat{{\mathcal T}}_{\epsilon}({\bf x}_n^{(i)})\cup\{{\bf x}_k^{(i)}|k\in\{1, \cdots, N_i\}\setminus n, f(\boldsymbol{\Delta}({\bf x}_k^{(i)}; \hat{\bf m}({\bf x}_n^{(i)}), \hat{\bf u}({\bf x}_n^{(i)})))\neq f({\bf x}_k^{(i)})\}$''.\\
Such a simple ``module-based'' modification allows our detection framework to be applicable to a variety of BP embedding mechanisms, which again, shows the generalization capability of our detection framework.

\section{Details of Experiment Settings}\label{sec:exp_details}

\subsection{Details of Datasets}\label{subsec:datasets_details}

Our experiments are conducted on six popular benchmark image datasets. They are CIFAR-10, CIFAR-100 \cite{CIFAR10}, STL-10 \cite{STL10}, TinyImageNet, FMNIST \cite{FashionMNIST} and MNIST \cite{MNIST}. All the datasets are associated with the {\it torchvision} package, except for that STL-10 is downloaded from the official website \url{https://cs.stanford.edu/~acoates/stl10/}. Though the details of these datasets can be easily found online, we summarize them in Tab. \ref{tab:datasets_details}.

\begin{table}[t]
	\caption{Details of CIFAR-10, CIFAR-100, STL-10, TinyImageNet, FMNIST, MNIST datasets.}
	\label{tab:datasets_details}
	\begin{center}
		\begin{tabular}{c|c c c c c}
			\multicolumn{1}{c}{\bf }  &\multicolumn{1}{c}{\bf Color} &\multicolumn{1}{c}{\bf Image size} &\multicolumn{1}{c}{\bf \# Classes} &\multicolumn{1}{c}{\bf \# Images/class} &\multicolumn{1}{c}{\bf \# Training images/class}
			\\ \hline \\
			CIFAR-10		&\cmark &$32\times32$ &10  &6000  &5000\\
			CIFAR-100		&\cmark &$32\times32$ &100 &600   &500\\
			STL-10			&\cmark &$96\times96$ &10  &1300  &500\\
			TinyImageNet	&\cmark &$64\times64$ &200 &600   &500\\
			FMNIST			&\xmark &$28\times28$ &10  &7000  &6000\\
			MNIST			&\xmark &$28\times28$ &10  &7000  &6000
		\end{tabular}
	\end{center}
\end{table}

\subsection{Details for Generating the 2-Class Domains}\label{subsec:2_class_details}

In Sec. \ref{subsec:exp_main}, we generate 45 2-class domains from CIFAR-10, and 20 2-class domains from each of CIFAR-100, STL-10, TinyImageNet, FMNIST, and MNIST. Here we provide more details about how these 2-class domains are generated.

As mentioned in Sec. \ref{subsec:exp_main}, for CIFAR-10, the 45 2-class domains are corresponding to the 45 unordered class pairs of CIFAR-10 respectively. For each of CIFAR-100, FMNIST, and MNIST, we randomly sample 20 unordered class pairs, each forming a 2-class domain. For TinyImageNet, due to high image resolution and data scarcity, we generate 20 ``super class'' pairs -- for each pair, we randomly sample 20 classes from the original category space and then evenly assign them to the two super classes (each getting 10 classes from the original category space). Similarly, for STL-10 with 10 classes, we generate 20 super class pairs by randomly and evenly dividing the 10 classes into two groups (of 5 classes from the original category space) for each pair. For each generated 2-class domain, we use the subset of data associated with these two (super) classes from the original dataset, with the original train-test split.

\subsection{Details of BPs}\label{subsec:BP_details}

In this paper, we consider both additive perturbation BPs embedded by Eq. (\ref{eq:additive_perturbation}) and patch replacement BPs embedded by Eq. (\ref{eq:patch_replacement}) that are frequently used in existing backdoor papers. Despite the BPs (with images embedded with them) illustrated in Fig. \ref{fig:BP_example} in the main paper, here, in Fig. \ref{fig:BP_example_more}, we show the BPs used in our experiments that are not included in the main paper due to space limitations. In the following, we also provide details for each of these BPs (in both Fig. \ref{fig:BP_example} and Fig. \ref{fig:BP_example_more}).

\begin{figure*}[t]
	\centering
	\begin{minipage}[b]{.22\linewidth}
		\centering
		\centerline{\includegraphics[width=.8\linewidth]{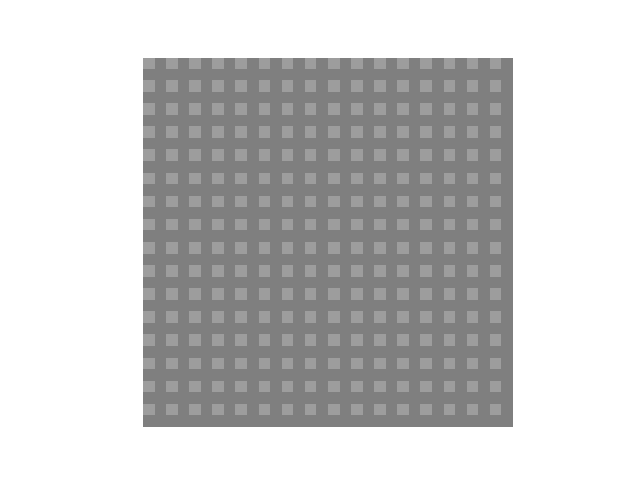}}
		\centerline{\includegraphics[width=.8\linewidth]{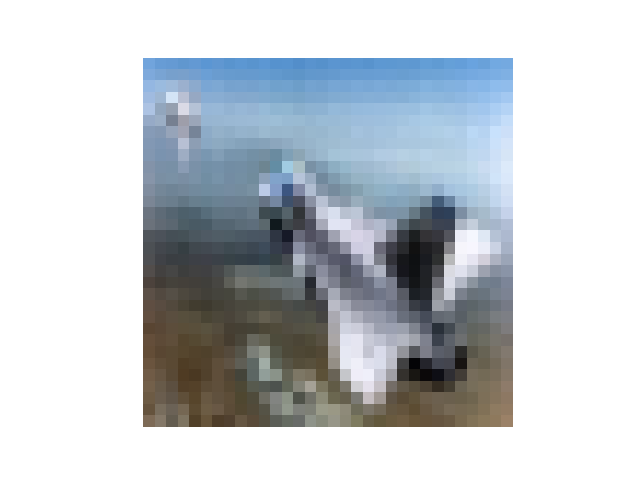}}
		\subcaption{``static'' pattern}\label{subfig:BP_static}
	\end{minipage}
	\begin{minipage}[b]{.22\linewidth}
		\centering
		\centerline{\includegraphics[width=.8\linewidth]{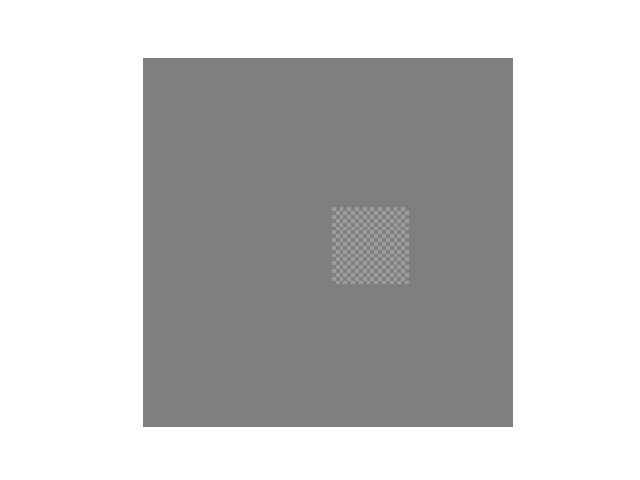}}
		\centerline{\includegraphics[width=.8\linewidth]{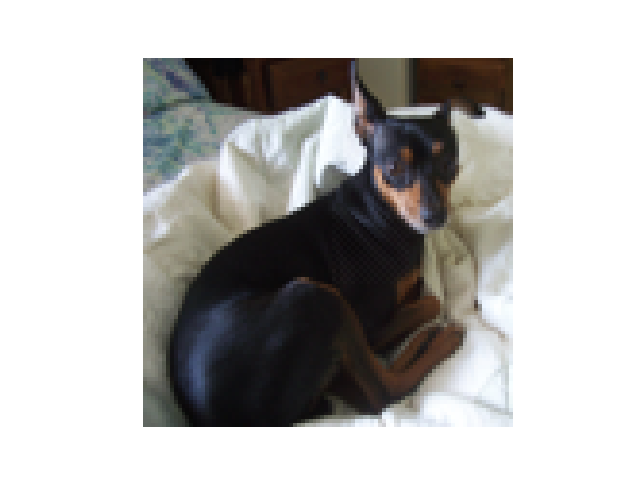}}
		\subcaption{chessboard patch}\label{subfig:BP_cb_patch}
	\end{minipage}
	\begin{minipage}[b]{.22\linewidth}
		\centering
		\centerline{\includegraphics[width=.8\linewidth]{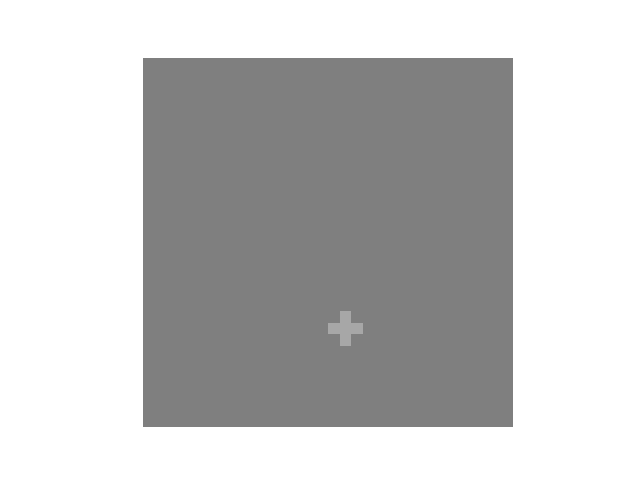}}
		\centerline{\includegraphics[width=.8\linewidth]{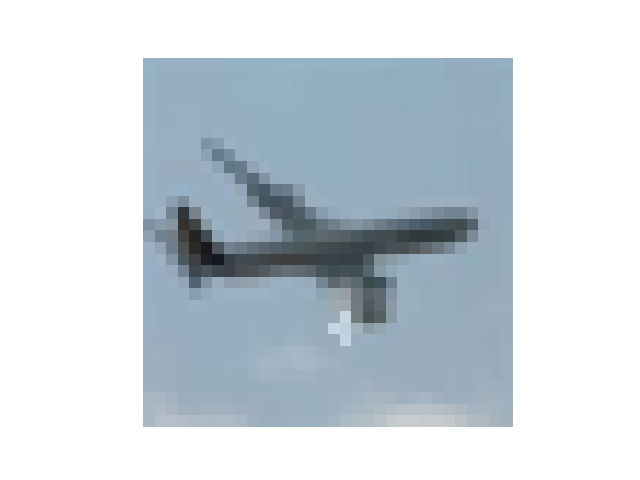}}
		\subcaption{cross}\label{subfig:BP_cross}
	\end{minipage}
	\begin{minipage}[b]{.22\linewidth}
		\centering
		\centerline{\includegraphics[width=.8\linewidth]{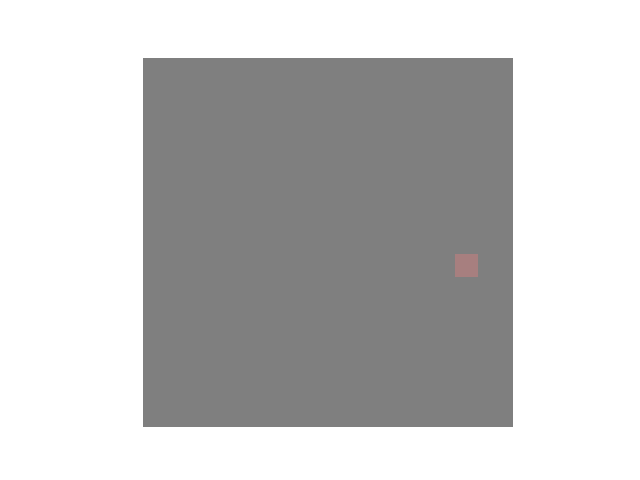}}
		\centerline{\includegraphics[width=.8\linewidth]{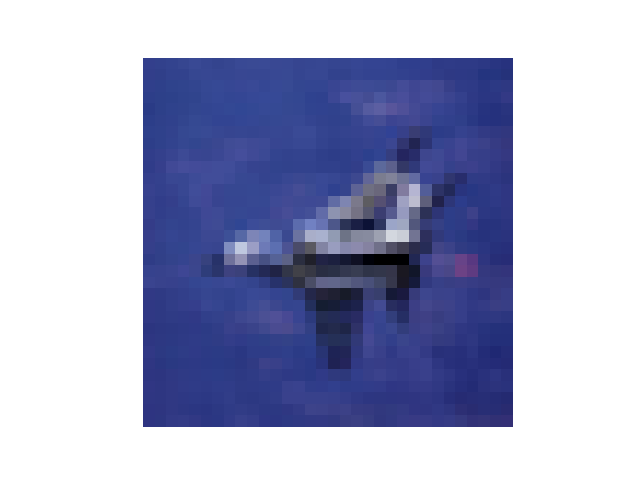}}
		\subcaption{square}\label{subfig:BP_square}
	\end{minipage}
	\caption{BPs used in our experiments that are not shown in Fig. \ref{fig:BP_example} of the main paper due to space limitations; and images with these BPs embedded. BPs in (a) and (b) are amplified for visualization.}
	\label{fig:BP_example_more}
\end{figure*}

First, we provide the details for all the additive perturbation BPs. The ``chessboard'' pattern in Fig. \ref{subfig:BP_cb} is a ``global'' pattern that has been used by \cite{Post-TNNLS}. Here, one and only one of two adjacent pixels are perturbed positively by 3/255 in all color channels. Another global pattern is the ``static'' pattern in Fig. \ref{subfig:BP_static} considered by both \cite{Haoti, CandS}. For pixel indices starting from 0, a pixel $(i, j)$ is perturbed positively if and only if $i$ and $j$ are both even numbers. Again, the perturbation size is 3/255 for all pixels being perturbed.

Other additive perturbation BPs are all ``localized'' patterns. The ``L'' pattern in Fig. \ref{subfig:BP_L} and the ``X'' pattern in Fig. \ref{subfig:BP_X} have been used by both \cite{SS} and \cite{DataLimited}. For the ``L'' pattern, we perturb all the color channels by 50/255. For the ``X'' pattern, for each attack, we randomly choose a channel (for all images to be embedded in for this particular attack) and perturb the associated pixels positively by 50/255. The ``pixel'' BP in Fig. \ref{subfig:BP_pixel} has been used by \cite{SS, AC}, where a single pixel is perturbed in all channels by 50/255 for color images and 70/255 for gray-scale images. The ``chessboard patch'' pattern in Fig. \ref{subfig:BP_cb_patch}, the ``cross'' in Fig. \ref{subfig:BP_cross}, and the ``square'' in Fig. \ref{subfig:BP_square} have all been previously considered.
For the cross and the chessboard patch, the perturbation sizes for each pixel being perturbed are 50/255 and 5/255, respectively; and perturbation is applied to all channels. For the square pattern, one channel is randomly selected for each attack and the perturbation size is 50/255. The spatial location of all these localized patterns are randomly selected over the entire image (and fixed for all images to be embedded in) for each attack. Only for gray-scale images, the pixels being perturbed are restricted to one of the four corners, such that these pixels will likely be black (with pixel value close to 0) originally.

Next, we provide the details for the two patch replacement BPs considered in our experiments. The BP in Fig. \ref{subfig:BP_patch_univ} is a small, monochromatic patch located near the margin of the images to be embedded in. Similar BPs have been considered by \cite{BadNet} and \cite{NC}. The color is randomly chosen and fixed for each attack. The BP in Fig. \ref{subfig:BP_patch_noisy} is a small noisy patch located near the margin of the images to be embedded in. Similar BPs have been considered by \cite{Clean_Label_BA} and \cite{Hidden-trigger}. For both BPs, once the location is selected, the same location will be applied to all images to be embedded in for the same attack. Also, for both BPs, the size of the patch is $3\times3$ for the 2-class domains generated from CIFAR-10 and CIFAR-100; $4\times4$ for the 2-class domains generated from TinyImageNet; and $10\times10$ for the 2-class domains generated from STL-10.

\begin{table}[t]
	\caption{Short hand ``code'' for each ensemble of attack instances based on both the BP being used and the dataset where the associated 2-class domains are generated from. ``n/a'' represents ``not applicable''.}
	\label{tab:ensemble_code}
	\begin{center}
			\begin{tabular}{c|c c}
				&Additive perturbation BP &Patch replacement BP
				\\ \hline \\
				CIFAR-10		&A\textsubscript{1} &A\textsubscript{7}\\
				CIFAR-100		&A\textsubscript{2} &A\textsubscript{8}\\
				STL-10			&A\textsubscript{3} &A\textsubscript{9}\\
				TinyImageNet	&A\textsubscript{4} &A\textsubscript{10}\\
				FMNIST			&A\textsubscript{5} &n/a\\
				MNIST			&A\textsubscript{6} &n/a
			\end{tabular}
	\end{center}
\end{table}

\begin{table}[t]
	\caption{Summary of attack configurations for instances in each of ensembles A\textsubscript{1}-A\textsubscript{10}. For each ensemble, we show the number of attacks for each instance in this ensemble. We also show, for each ensemble, the number of samples (embedded with BP and labeled to the target class) used for poison the training set for each attack associated with this ensemble, as well as the corresponding poisoning rate. Poisoning rate is defined as the number of samples inserted into the training set by the attacker divided by the total number of training samples from the target class after poisoning.}
	\label{tab:attack_configurations_details}
	\begin{center}
		\resizebox{\textwidth}{!}{
		\begin{tabular}{c|c c c c c c c c c c}
			&A\textsubscript{1} &A\textsubscript{2} &A\textsubscript{3} &A\textsubscript{4} &A\textsubscript{5} &A\textsubscript{6} &A\textsubscript{7} &A\textsubscript{8} &A\textsubscript{9} &A\textsubscript{10}
			\\ \hline \\
			\# Attacks		&2&2&2&1&1&1&2&2&1&2\\
			\thead{\# Poison\\samples}		&500 &150 &1000 &1500 &1000 &1000 &500 &50 &750 &500\\
			\thead{Poisoning\\rate} &9.1\% &23.0\% &28.6\% &23.0\% &14.3\% &14.3\% &9.1\% &9.1\% &23.1\% &9.1\%
		\end{tabular}
		}
	\end{center}
\end{table}

\subsection{Other Attack Configurations}\label{subsec:attack_configuration_details}

In the main paper, we defined a ``code'' for each ensemble of attack instances based on both the type of BP being used and the dataset from which the associated 2-class domains are generated. Here, we summarize these codes in Tab. \ref{tab:ensemble_code} for reference. Also in the main paper, we have described most of the attack configurations for the attacks instances in each ensemble. For some ensembles, each attack instance is associated with two attacks, each with one of the two class being the target class. Also, the BPs used by the two attacks, though of the same type, are guaranteed to be sufficiently different in shapes (for additive perturbation BP) or colors (for patch replacement BP). For other ensembles, there is only one attack for each attack instance. Here, we provide a summary for these configurations in Tab. \ref{tab:attack_configurations_details}. Also, in Tab \ref{tab:attack_configurations_details}, for attack instances in each ensemble, we summarize the number of training samples embedded with the BP and labeled to the target class that are used for poisoning the classifier's training set for the associated attacks.

\begin{table}[t]
	\caption{Training details, including learning rate, batch size, number of epochs, whether or not using training data augmentation, choice of optimizer (Adam \cite{Adam} or stochastic gradient descent (SGD)), for 2-class domains generated from CIFAR-10, CIFAR-100, STL-10, TinyImageNet, FMNIST, and MNIST, respectively.}
	\label{tab:training_details}
	\begin{center}
		\begin{tabular}{c|c c c c c}
			&Learning rate &Batch size &\# Epochs &Data augmentation &Optimizer
			\\ \hline \\
			CIFAR-10		&0.001 &32 &150  &\xmark &Adam\\
			CIFAR-100		&0.001 &32 &150 &\xmark &Adam\\
			STL-10			&0.001 &128 &150  &\xmark &Adam\\
			TinyImageNet	&0.001 &128 &150 &\cmark &Adam\\
			FMNIST			&0.01 &256 &100  &\xmark &SGD\\
			MNIST			&0.01 &256 &100  &\xmark &SGD
		\end{tabular}
	\end{center}
\end{table}

\subsection{Training Details}\label{subsec:training_details}

Here we provide the training details that are not included in the main paper due to space limitations. For each generated 2-class domain, we use the same training configuration irrespective of the existence of BA. In Tab. \ref{tab:training_details}, we show the training details including learning rate, batch size, number of epochs, whether or not using training data augmentation, choice of optimizer (Adam \cite{Adam} or stochastic gradient descent (SGD)) for 2-class domains generated from CIFAR-10, CIFAR-100, STL-10, TinyImageNet, FMNIST, and MNIST, respectively. Training data augmentations for 2-class domains generated from TinyImageNet include random cropping and random horizontal flipping -- these augmentations are helpful for the backdoor mapping to be learned without compromising the classifier's accuracy on clean test samples. Otherwise, we may not easily produce an effective attack to evaluate the performance of our defense.

\begin{table}[t]
	\caption{Average clean test accuracy (ACC, in percentage) over all classifiers being attacked, average and minimum attack success rate (ASR, in percentage) over all attacks, for each of ensemble A\textsubscript{1}-A\textsubscript{10} of attack instances.}
	\label{tab:ASR_ACC_details}
	\begin{center}
		\resizebox{\textwidth}{!}{
			\begin{tabular}{c|c c c c c c c c c c}
				&A\textsubscript{1} &A\textsubscript{2} &A\textsubscript{3} &A\textsubscript{4} &A\textsubscript{5} &A\textsubscript{6} &A\textsubscript{7} &A\textsubscript{8} &A\textsubscript{9} &A\textsubscript{10}
				\\ \hline \\
				\thead{ASR\\(average)} &95.7$\pm$3.6 &91.6$\pm$4.5 &98.8$\pm$1.0 &93.9$\pm$2.8 &96.8$\pm$3.3 &99.8$\pm$0.4 &99.2$\pm$1.1 &96.4$\pm$4.0 &97.9$\pm$1.4 &94.5$\pm$4.6\\
				\thead{ASR\\(minimum)} &82.3 &80.0 &95.9 &88.0 &87.6 &98.4 &92.5 &82.0 &95.1 &83.2\\
				\thead{ACC\\(average)} &94.6$\pm$4.2 &90.7$\pm$4.5 &79.6$\pm$2.9 &77.0$\pm$3.2 & 99.0$\pm$1.1 &99.8$\pm$0.1 &96.7$\pm$2.5 &93.2$\pm$4.1 &78.7$\pm$3.4 &77.2$\pm$3.5
			\end{tabular}
		}
	\end{center}
\end{table}

\begin{table}[t]
	\caption{Average clean test accuracy (ACC, in percentage) over the classifiers for the clean instances in each of ensemble C\textsubscript{1}-C\textsubscript{6}.}
	\label{tab:ACC_ref_details}
	\begin{center}
			\begin{tabular}{c|c c c c c c}
				&C\textsubscript{1} &C\textsubscript{2} &C\textsubscript{3} &C\textsubscript{4} &C\textsubscript{5} &C\textsubscript{6}
				\\ \hline \\
				\thead{ACC\\(average)} &95.4$\pm$3.6 &93.3$\pm$3.3 &80.5$\pm$3.2 &78.2$\pm$3.0 &99.3$\pm$1.0 &99.8$\pm$0.1
			\end{tabular}
	\end{center}
\end{table}

We also show the effectiveness of the attacks we created for evaluating our defense. Commonly, the effectiveness of a BA is evaluated by {\it attack success rate} (ASR) and {\it clean test accuracy} (ACC) \cite{Post-TNNLS, DataLimited}. ASR is defined (for each attack) as the probability that a test image from the source class is (mis)classified to the target class of BA when the BP is embedded. ACC is defined (for each classifier being attack regardless of the number of attacks) as the classification accuracy on test samples with no BP. In our experiments, we evaluate ASR and ACC using images from the test set associated with each 2-class domain -- these images are not used during training. For each attack instance, we evaluate ASR for each attack (since there can be either one attack or two attacks with different BA target class) separately. In Tab. \ref{tab:ASR_ACC_details}, for each of ensembles A\textsubscript{1}-A\textsubscript{10}, we show the average ACC for the classifier being attacked over all instances in the ensemble; we also show the mean and minimum ASR over all attacks of all instances in the ensemble. As a reference, in Tab. \ref{tab:ACC_ref_details} for each of ensemble C\textsubscript{1}-C\textsubscript{6} of clean instances, we show the average ACC over all the clean classifiers for the ensemble. Based on the results in both Tab. \ref{tab:ASR_ACC_details} and Tab. \ref{tab:ACC_ref_details}, all the attacks we created are successful with high ASR and almost no degradation in ACC.

\subsection{BP Reverse-Engineering Algorithms}\label{subsec:BP_RE_alg}

In the main paper, we evaluate our detection framework with two BP reverse-engineering algorithms, which are denoted as RE-AP and RE-PR, respectively. RE-AP is proposed by \cite{Post-TNNLS} for reverse-engineering additive perturbation BPs. The general form of RE-AP estimates a common perturbation that induces a group of images to be misclassified to a common target class. When there is a single image in such a group, and when there are only two classes, the optimization problem solved by RE-AP is reduced to (\ref{eq:pert_est_raw}). To solve this problem for some target class $i\in{\mathcal C}$ and image ${\bf x}\in{\mathcal X}$ from the class other than $i$, RE-AP minimizes the following surrogate objective function:
\begin{equation}
L_{\rm AP}({\bf v}) = - \log p(i|{\bf x} + {\bf v}),
\end{equation}
using gradient descent with ${\bf v}$ initialized from ${\mathbf 0}$, until the constraint of (\ref{eq:pert_est_raw}) is satisfied. Here, $p(i|{\bf x})$ denotes the classifier's posterior of class $i$ give any input sample ${\bf x}\in{\mathcal X}$. The step size for minimization is set small to ensure a good ``quality'' for the solution; otherwise, the resulting perturbation may have a much larger norm than the minimum norm perturbation required for inducing a misclassification. Moreover, for each domain and each classifier to be inspected, choosing a proper step size can be done based on the norm of the solution and without any knowledge of the presence of BA.

Another BP reverse-engineering algorithm, RE-PR is proposed by \cite{NC} for patch replacement BPs. Similarly, RE-PR solves problem (\ref{eq:pert_est_raw_PR}) in Apdx. \ref{subsec:ET_definition_PR}, which is the counterpart of problem (\ref{eq:pert_est_raw}) for patch replacement BPs. Formally, for some target class $i\in{\mathcal C}$ and image ${\bf x}\in{\mathcal X}$ from the class other than $i$, RE-PR minimizes the following surrogate objective function:
\begin{equation}
L_{\rm PR}({\bf m}, {\bf u}) = - \log p(i|\boldsymbol{\Delta}({\bf x}; {\bf m}, {\bf u})) + \lambda ||{\bf m}||_1,
\end{equation}
using gradient descent, where $\lambda$ is the Lagrange multiplier and $\boldsymbol{\Delta}({\bf x}; {\bf m}, {\bf u})$ is the alternative expression to Eq. \ref{eq:patch_replacement} (see the description below Eq. (\ref{eq:pert_est_raw_PR})) for patch replacement BP embedding.

\subsection{Limitations of the Cosine Similarity Statistic in BA Detection}\label{subsec:CS_limitations}

In the main paper, we compared our ET statistic with other three types of detection statistics including a cosine similarity (CS) statistic proposed by \cite{DataLimited}. As an important work addressing unsupervised backdoor detection without access to the training set, this method can effectively detect backdoor attacks when their are multiple classes with only few of them are backdoor target classes.

For general classification domains with arbitrary number of classes, the CS statistic is obtained for each putative target class $t\in{\mathcal C}$ as following. First, a {\it common} (patch replacement) BP is estimated to: a) induces a group of images from classes other than $t$ to be misclassified in an untargeted fashion (i.e., to any class other than their originally labeled classes); b) not induce any class $t$ images to be misclassified; and c) have as small mask size (measured by $l_1$ norm) as possible. Accordingly, \cite{DataLimited} proposed to minimize the following loss:
\begin{equation}
\begin{aligned}
L_{\rm CSC}({\bf m}, {\bf u}) &= \sum_{i\in{\mathcal C}\setminus t} \sum_{{\bf x}\in{\mathcal D}_i} \max \{ h_i(\boldsymbol{\Delta}({\bf x}; {\bf m}, {\bf u})) - \max_{j\neq i} h_j(\boldsymbol{\Delta}({\bf x}; {\bf m}, {\bf u})), -\kappa \}\\
& + \sum_{{\bf x}\in{\mathcal D}_t} \max \{ \max_{j\neq t} h_j(\boldsymbol{\Delta}({\bf x}; {\bf m}, {\bf u})) - h_t(\boldsymbol{\Delta}({\bf x}; {\bf m}, {\bf u})) , -\kappa \} + \lambda ||{\bf m}||_1,
\end{aligned}
\end{equation}
where $h_i(\cdot):{\mathcal X}\rightarrow{\mathbb{R}}$ is the logit (right before softmax) of class $i\in{\mathcal C}$ \cite{CW}. We denote the estimated (common) BP for class $t$ as ${\bf s}^{\ast}_t=\{{\bf m}^{\ast}_t, {\bf u}^{\ast}_t\}$.

Then, for each image not from class $t$, a sample-wise BP is estimated to: a) induce the image to be misclassified to class $t$; and b) have as small mask size (measured by $l_1$ norm) as possible. Thus, the following loss is minimized for each ${\bf x}\in\cup_{i\in{\mathcal C}\setminus t}{\mathcal D}_i$:
\begin{equation}
L_{\rm CSS}({\bf m}, {\bf u}) = \max \{ \max_{j\neq t} h_j(\boldsymbol{\Delta}({\bf x}; {\bf m}, {\bf u})) - h_t(\boldsymbol{\Delta}({\bf x}; {\bf m}, {\bf u})) , -\kappa \} + \lambda ||{\bf m}||_1.
\end{equation}
We denote the sample-wise BP estimated for class $t$ and sample ${\bf x}$ as $\tilde{\bf s}^{\ast}_t({\bf x})=\{\tilde{\bf m}^{\ast}_t({\bf x}), \tilde{\bf u}^{\ast}_t({\bf x})\}$.

Finally, the cosine similarity statistic for class $t$ is computed by:
\begin{equation}
{\rm CS}_{\rm t} = \frac{1}{|\cup_{i\in{\mathcal C}\setminus t}{\mathcal D}_i|} \sum_{{\bf x}\in\cup_{i\in{\mathcal C}\setminus t}{\mathcal D}_i} {\rm cos} ({\bf z}(\boldsymbol{\Delta}({\bf x}; {\bf m}^{\ast}_t, {\bf u}^{\ast}_t)), {\bf z}(\boldsymbol{\Delta}({\bf x}; \tilde{\bf m}^{\ast}_t({\bf x}), \tilde{\bf u}^{\ast}_t({\bf x})))),
\end{equation}
where ${\bf z}(\cdot):{\mathcal X}\rightarrow{\mathbb{R}}^d$ is the mapping from input layer to the penultimate layer with some dimension $d$. ${\rm cos}(\cdot):{\mathbb{R}}^d\times {\mathbb{R}}^d\rightarrow[-1, 1]$ is the cosine similarity between two real vectors.

Based on our results in Fig. \ref{fig:stat_comparison}, CS for BA target classes and non-target classes are separable for most domains. This is not surprising because when class $t\in{\mathcal C}$ is a BA target class, the estimated common BP will likely be highly correlated with the sample-wise BPs estimated for images from classes other than $t$; thus, the resulting CS will be large and possibly close to 1. If class $t\in{\mathcal C}$ is not a BA target class, the estimated common BP may induce images from classes other than $t$ to be misclassified to some arbitrary classes (possibly the ``semantically'' closest class for each individual image). Thus, the common BP may be very different from sample-wise BP estimated to only induce misclassifications to class $t$. Consequently, the CS statistic will likely be small.

However, considering our 2-class problem with no sufficient number of statistics to inform estimation of a null distribution, and also our assumption that there is no domain-specific supervision (e.g. using clean classifiers trained for the same domain) for setting a proper detection threshold, CS statistic may not be effective since it is sensitive to domains and DNN architectures.

First, when there is no BA, for ReLU networks, the penultimate layer features are always non-negative, such that the cosine similarity is guaranteed to be non-negative. However, for DNN with sigmoid or leaky ReLU activation functions, the penultimate layer features can be negative. Thus, the CS statistic may be distributed in the entire interval of $[-1, 1]$. Such large difference in the null distribution of CS statistic makes the choice of a detection threshold very difficult without domain-specific knowledge.

Second, CS is also sensitive to the classification domain, especially the number of classes. Considering some putative target class $t\in{\mathcal C}$, when there are a large number of classes in the domain, the common BP estimated for a group of images from classes other than $t$ will likely be very different with the sample-wise BP estimated for each individual in the group. In particular, most images will likely be misclassified to some class other than $t$ when the common BP is embedded; but the sample-wise BP is estimated for each of these images to induce them to be misclassified to class $t$. Thus, the penultimate layer feature associated with the common BP and the sample-wise BP will likely be different for most images. However, when there are only two classes, i.e. ${\mathcal C}=\{0, 1\}$ in our case, the images used for BP estimation for class $t$ are all from class $(1-t)$. Moreover, both common BP and samples-wise BP estimated for these images will induce them to be misclassified to class $t$. Thus, CS obtained for this case will likely be larger than for the cases where there are a large number of classes when there is no BA. We demonstrate this phenomenon in the following.

\begin{figure}[t]
	\centering
	\begin{minipage}[b]{0.5\linewidth}
		\centering
		\centerline{\includegraphics[width=\linewidth]{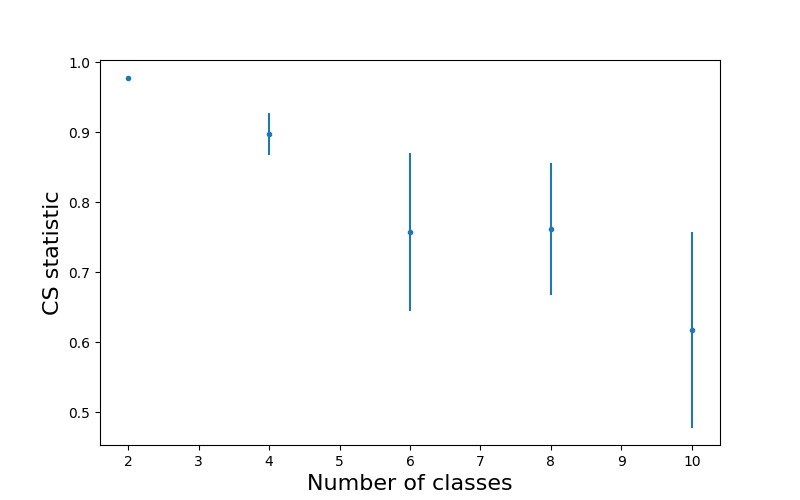}}
	\end{minipage}
	\caption{Average CS statistic versus the number of classes in the domain.}
	\label{fig:CS_vs_NC}
\end{figure}

We construct five domains from CIFAR-10. The first four domains contain 2, 4, 6, 8 classes randomly selected from the 10 classes of CIFAR-10 respectively, and the the fifth domain is the original CIFAR-10 with 10 classes. We train a classifier without BA for each domain using the same configurations as in Sec. \ref{subsec:exp_main}. For each classifier, we obtain CS statistics for all classes. In Fig. \ref{fig:CS_vs_NC}, we show the average CS statistic over all classes for the five classifiers. In general, CS statistic decreases as the number of classes grows; thus, it is highly domain-dependent.

\subsection{Details for Multi-Class Experiments}\label{subsec:multi_class_exp_details}

In Sec. \ref{subsec:exp_multi_class}, we evaluate the performance of our detection framework for multi-class scenarios with arbitrary number of attacks. In other words, the classifier has more than two classes, and each class can possibly be a BA target class.

For each of CIFAR-10, CIFAR-100, and STL-10, we create three attack instances with one, two, and three attacks, respectively. Like the attacks we created for the 2-class domains, the attacks here are created following the same data poisoning protocol that has been widely considered in existing works. That is, we create backdoor training images by embedding a BP into a small set of images from classes other than the target class. These backdoor training images are labeled to a target class and inserted into the training set of the classifier \cite{BadNet}. In our experiment here, for each attack of each instance, we randomly select a target class. For simplicity, we consider only additive perturbation BP here. We randomly select a shape (and location for localized BP) for the BP to be used from our pool of candidate BPs. Note that for any two attacks of the same attack instance, the target classes and the BPs should both be different from each other. For attacks on CIAFR-10, CIFAR-100, and STL-10, the backdoor training images are created using 60, 10, and 100 clean images per class (not including the target class), respectively.

For each domain, the three attack instances and the clean classifier use the same training configuration. For CIFAR-10 and STL-10, we use ResNet-18 as the DNN architecture; for CIFAR-100, we use ResNet-34 architecture. For all three domains, training data augmentations including random cropping and random horizontal flipping are adopted. Other configurations for classifier training for these domains are the same as for the 2-class domains generated from these original domains (shown in Tab. \ref{tab:training_details}). Using these training configurations, the resulting nine classifiers being attacked (three classifiers for the three attack instances respectively for each domain) all have high attack success rate (ASR). Compared with the three clean classifiers (without BA) trained for the three domains respectively, there is also no significant degradation in clean test accuracy (ACC) for the classifiers being attacked. ASR and ACC for these classifiers are shown in Tab. \ref{tab:ASR_ACC_multi_class}.

\begin{table}[t]
	\caption{Attack success rate (ASR) and clean test accuracy (ACC) for the classifiers being attacked (with one, two, and three attacks/target classes) for CIAFR-10, CIFAR-100, and STL-10, respectively; and ACC for the clean classifiers trained for the three domains respectively.}
	\label{tab:ASR_ACC_multi_class}
	\begin{center}
		\begin{tabular}{c|c c c}
			&CIFAR-10 &CIFAR-100 &STL-10
			\\ \hline \\
			one attack &\thead{ASR: 97.0\\ACC: 93.7} &\thead{ASR: 97.7\\ACC: 71.5} &\thead{ASR: 95.1\\ACC: 79.7}\\
			two attacks &\thead{ASR=96.0, 93.7\\ACC: 92.5} &\thead{ASR=89.6, 95.3\\ACC: 71.9} &\thead{ASR=99.6, 96.0\\ACC: 80.8}\\
			three attacks &\thead{ASR: 94.5, 80.2, 97.9\\ACC: 92.8} &\thead{ASR: 95.7, 74.7, 97.4\\ACC: 70.5} &\thead{ASR: 78.1, 99.0, 95.1\\ACC: 79.1}\\
			\hline
			no attack &ACC: 92.5 & ACC: 70.4 &ACC: 78.8
		\end{tabular}
	\end{center}
\end{table}

\section{Using ET to Detect Clean-Label BAs}\label{sec:clean-label}

In this section, we demonstrate the effectiveness of our detection framework against a recent clean-label BA proposed by \cite{Clean_Label_BA}. Clean-label BAs are motivated by the possible human/machine inspection of the training set. For example, backdoor training samples labeled to some target class inserted by a typical backdoor attacker are originally from classes other than the target class. Such ``mislabeling'' may be noticed by a human expert who inspects the training set manually, or may be detected by a shallow neural network trained on a small held-out validation set that is guaranteed to be clean. Thus, \cite{Clean_Label_BA} proposed to create backdoor training samples (that will be inserted into the classifier's training set) by embedding the BP only to target class samples. However, for target class samples embedded with the BP, there is no guarantee that it is the BP but not the features associated with the target class that will be learned by the classifier. Thus, \cite{Clean_Label_BA} proposed to ``destroy'' these target class features before embedding the BP when creating backdoor training samples. Then, the classifier will learn the BP and classify any test sample embedded with the BP to the target class.

\begin{figure}[t]
	\centering
	\begin{minipage}[b]{.16\linewidth}
		\centering
		\centerline{\includegraphics[width=\linewidth]{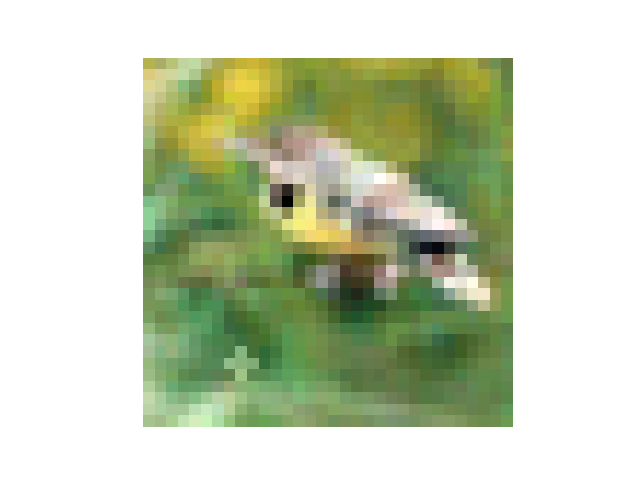}}
		\subcaption{bird}
	\end{minipage}
	\begin{minipage}[b]{.16\linewidth}
		\centering
		\centerline{\includegraphics[width=\linewidth]{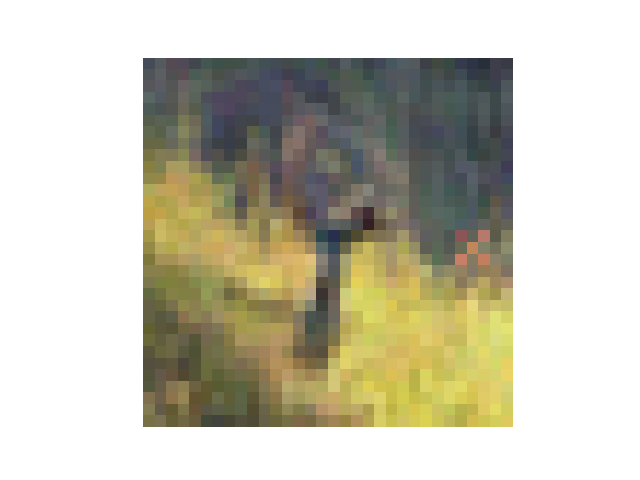}}
		\subcaption{deer}
	\end{minipage}
	\begin{minipage}[b]{.16\linewidth}
		\centering
		\centerline{\includegraphics[width=\linewidth]{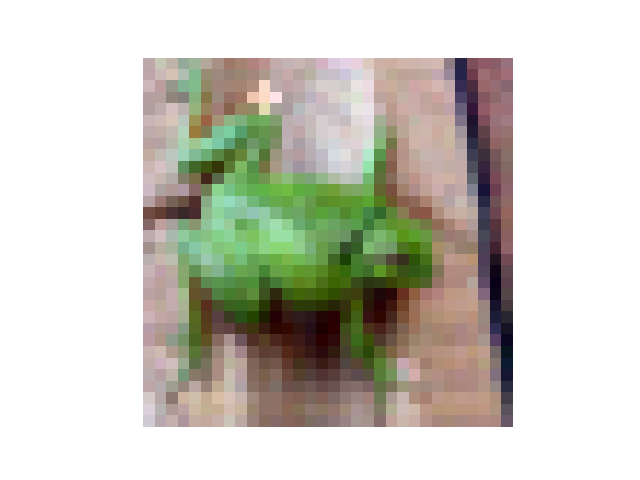}}
		\subcaption{frog}
	\end{minipage}
	\begin{minipage}[b]{.16\linewidth}
		\centering
		\centerline{\includegraphics[width=\linewidth]{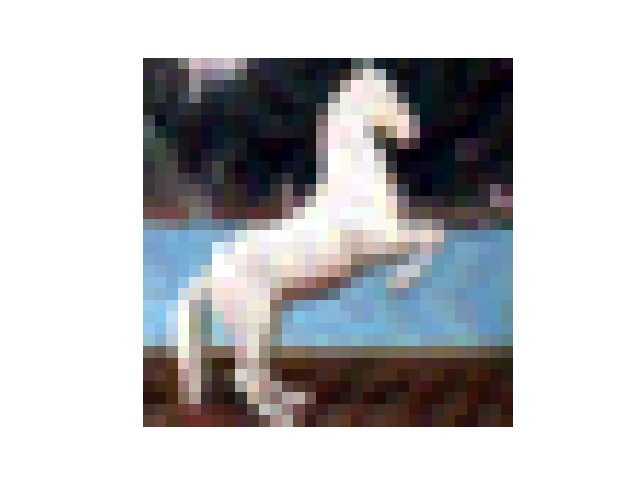}}
		\subcaption{horse}
	\end{minipage}
	\begin{minipage}[b]{.16\linewidth}
		\centering
		\centerline{\includegraphics[width=\linewidth]{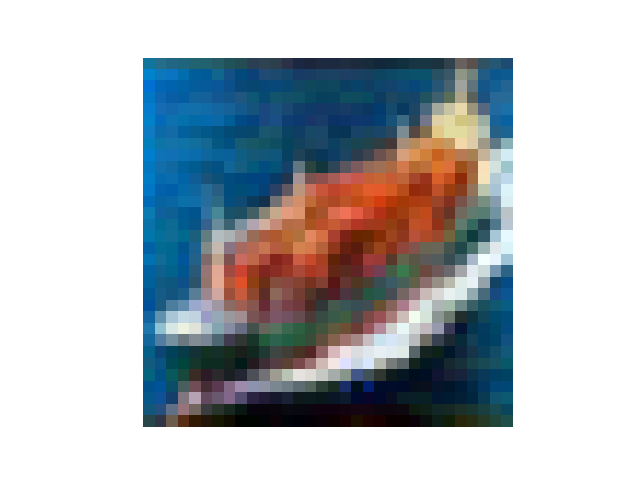}}
		\subcaption{ship}
	\end{minipage}
	\begin{minipage}[b]{.16\linewidth}
		\centering
		\centerline{\includegraphics[width=\linewidth]{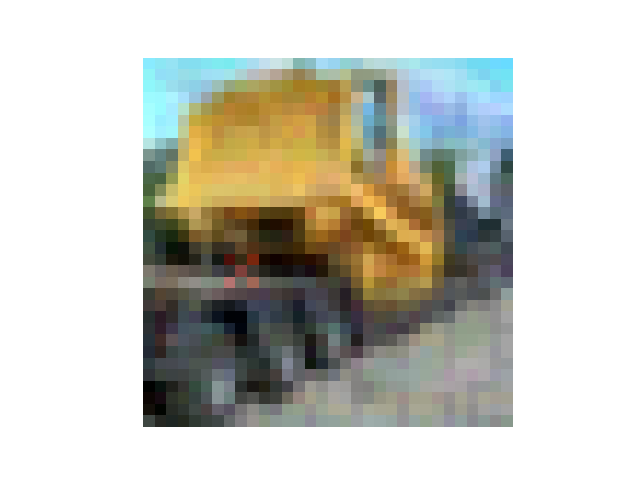}}
		\subcaption{truck}
	\end{minipage}
	\caption{Examples for backdoor training images for clean-label BAs. These images are originally from the BA target class, perturbed (in human-imperceptible fashion) to be misclassified by a surrogate classifier, embedded with the BP, and are still labeled to the target class.}
	\label{fig:clean_label_example}
\end{figure}

\begin{figure}[t]
	\centering
	\begin{minipage}[b]{0.6\linewidth}
		\centering
		\centerline{\includegraphics[width=\linewidth]{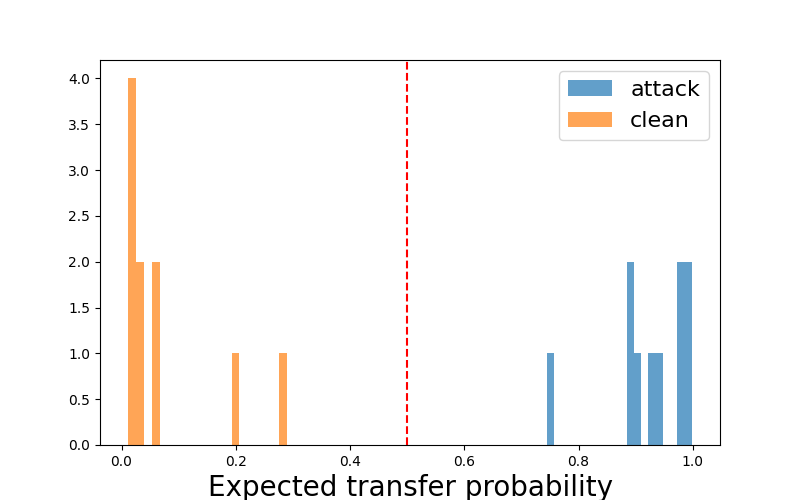}}
	\end{minipage}
	\caption{Effectiveness of our defense, with the constant ET threshold 1/2, against the clean-label BA proposed by \cite{Clean_Label_BA}. Classifiers being attacked have a maximum ET (over the two classes) greater than 1/2; clean classifiers have a maximum ET (over the two classes) less than 1/2.}
	\label{fig:clean_label_hist}
\end{figure}

One simple yet effective approach proposed by \cite{Clean_Label_BA} to destroy the features in the backdoor training samples associated with the target class is inspired by a method for creating adversarial examples. The backdoor attacker needs to first train a surrogate classifier using an independently collected clean dataset. Then, for each of a small set of samples from the target class used for creating backdoor training samples, the attacker independently launches a projected gradient descent (PGD) attack (\cite{PGD}) to have the sample be predicted to any class other than its original class (i.e. the target class) by the surrogate classifier. These samples with the target class features destroyed are then embedded with the BP, are still labeled to the target class, and are inserted into the classifier's training set.

In our experiment here, we randomly generate ten 2-class domains from CIFAR-10 following its original train-test split. For each 2-class domain, we first train a surrogate classifier using a subset of the training set (2000 training images per class). The remaining samples (3000 per class) are assumed to be possessed by the trainer for training the victim classifier. For each 2-class domain, we create one attack instance with one BA targeting on the second class and using an additive perturbation BP. The candidate BPs to be used are the same as in our experiments in the main paper. Here, for each BA, we use 1500 images from the target class (i.e. the second class of the associated 2-class domain) to create backdoor training images. These images are randomly sampled from the images used for training the surrogate classifier. For each of these 1500 images, we independently generate an adversarial perturbation using the surrogate classifier following the standard protocol of PGD \cite{PGD}. In particular, we set the maximum perturbation size as 8/255, the number of perturbation steps as 10, and the step size as 1/255. Most of the perturbed images are misclassified by the surrogate classifier. Then we embed the BP randomly selected from the candidates for each attack into these images with the adversarial perturbation and still label them to the target class, where they are originally from. The created backdoor training images are inserted into the training set of the victim classifier. They will be barely noticeable to human inspectors since they visually look like standard target class images and the embedded BP is almost imperceptible by humans. Some examples of backdoor training images are shown in Fig. \ref{fig:clean_label_example}.

For each attack instance, we use the same training configurations as in Sec. \ref{subsec:exp_main} of the main paper to train the victim classifier. We also train a clean classifier for each instance to evaluate false detections. Moreover, we apply our detection framework with the BP reverse-engineering algorithm RE-AP and the same defense configurations as in the main paper to both the classifiers being attacked and the clean classifiers. In Fig. \ref{fig:clean_label_hist}, we show the maximum ET (over the two classes) for all these classifiers. Using ET and the constant detection threshold 1/2, we perfectly detect all clean-label BAs with no false detections.

\section{Experimental Verification of Property \ref{prop:PNT} for Patch Replacement BPs}\label{subsec:exp_verification_PR}

We verify Property \ref{prop:PNT} for patch replacement BPs embedded by Eq. (\ref{eq:patch_replacement}). Similar to Sec. \ref{subsec:exp_verification} of the main paper, here, we show that when class $(1-i)$ for $i\in{\mathcal C}=\{0, 1\}$ is not a BA target class, for any two samples from class $i$, the minimum $l_1$ norm for any common mask that induces both samples to be misclassified will likely be larger than the minimum $l_1$ norm for the masks inducing each individual sample to be misclassified. The masks are obtained by solving problem (\ref{eq:pert_est_raw}).

\begin{figure}[t]
	\centering
	\centerline{\includegraphics[width=0.6\linewidth]{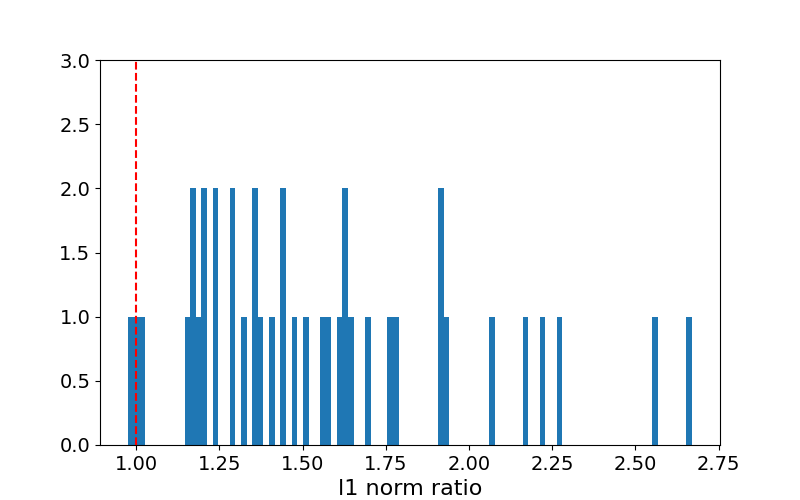}}
	\caption{Histogram of $l_1$ norm ratio between pair-wise common mask and maximum of the two sample-wise masks for each random image pair for clean classifiers.}
	\label{fig:l1_ratio}
\end{figure}

We randomly choose one clean classifier from each of C\textsubscript{1}-C\textsubscript{4}. For each classifier, we randomly choose 10 pairs of clean images from a random class of the associated 2-class domain. For each pair of images, we apply the RE-PR algorithm (for reverse-engineering patch replacement BPs) on the two images jointly (to get a pair-wise common mask) as well as separately (to get two sample-wise masks for the two images respectively). Again, we divide the $l_1$ norm of the pair-wise (common) mask by the maximum $l_1$ norm of the two sample-wise masks to get a ratio. In Fig. \ref{fig:l1_ratio}, we plot the histogram of such ratio for all image pairs for all four classifiers – the ratio for most of the image pairs is greater than 1 (marked by the red dashed line in Fig. \ref{fig:l1_ratio}). For these pairs, it is very likely that the BP (in particular, the mask) estimated for one sample cannot induce the other sample to be misclassified and vise versa. Note that if for any image pair, the mask (and some associated pattern) estimated for one sample can induce the other to be misclassified, such mask (and the associated pattern) will be a common mask (and pattern) that induces both images to be misclassified. Then, we would expect the ratio computed above to be very close to 1 for such an image pair.

\section{Influence of the Number of Clean Images for Detection}\label{subsec:num_of_images}

The core of our detector is to estimate the ET statistic for each class. Note that the ET statistic is in fact an expectation. In principle, with fewer clean images per class for ET estimation, the variance of the estimated ET will be larger, though the execution time for ET estimation may be smaller. Thus, we would expect that, sometimes, the ET estimated using only a few clean images may be smaller than $\frac{1}{2}$ for the attack case; or larger than $\frac{1}{2}$ for the clean case.

\begin{figure}[t]
	\centering
	\begin{minipage}[b]{.45\linewidth}
		\centering
		\centerline{\includegraphics[width=\linewidth]{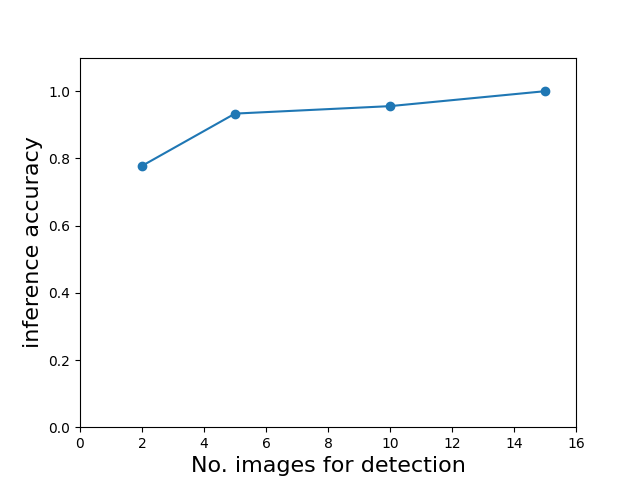}}
		\subcaption{attack instances (A\textsubscript{1})}
	\end{minipage}
	\begin{minipage}[b]{.45\linewidth}
		\centering
		\centerline{\includegraphics[width=\linewidth]{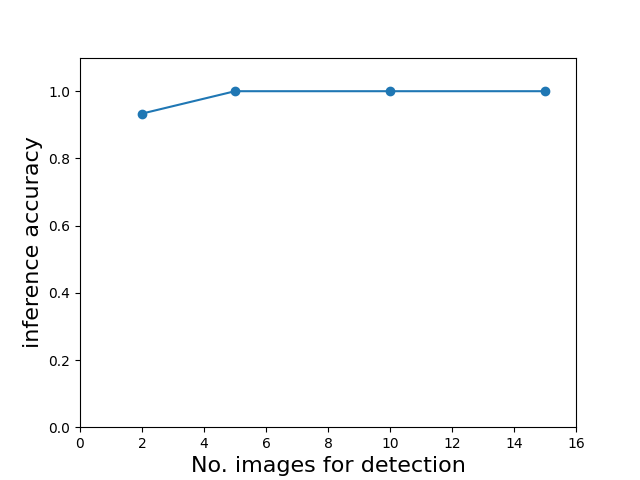}}
		\subcaption{clean instances (C\textsubscript{1})}
	\end{minipage}
	\caption{Accuracy of detection inference on the ensemble of attack instances A\textsubscript{1} and the ensemble of clean instances C\textsubscript{1}, when the number of images used for detection varies in [2, 5, 10, 15].}
	\label{fig:influence_num_of_images}
\end{figure}

In Sec. \ref{subsec:exp_main}, we used 20 images per class (40 images in total) for backdoor detection for all attack instances and all clean instances, and achieved good detection accuracy. Here, we show the influence of the number of clean images per class on detection accuracy. In particular, for all 45 attack instances and all 45 clean instances of two-class domains generated from CIFAR-10 (i.e. A\textsubscript{1} and C\textsubscript{1}), we apply the same detector in Sec. \ref{subsec:exp_main}  with detection threshold $\frac{1}{2}$, but varying the number of clean images per class (in [2, 5, 10, 15]) used for detection (i.e. ET estimation). As shown in Fig. \ref{fig:influence_num_of_images}, with 5 clean images per class (10 images in total), our method achieves relatively good detection accuracy. Even with only 2 clean images per class (which is the minimum sample size for empirical ET estimation), our detector catches $\sim80\%$ of attacks with less than 10\% false detection rate\footnote{Here, we claim a failure in detection if ET is exactly $\frac{1}{2}$ for both clean instances and attack instances. Thus, the actual detection accuracy should be higher than those in Fig. \ref{fig:influence_num_of_images} if we either do or do not trigger an alarm when ET is exactly $\frac{1}{2}$.}.

\section{Choice of the Patience Parameter $\tau$}\label{sec:patience_parameter}

In all experiments in the main paper, we set the patience parameter in Alg. \ref{alg:detection} to $\tau=4$ and claim that larger $\tau$ will not induce much change to the estimated ET. Here, we provide some empirical evidence to support this claim.

\begin{figure}
	\centering
	\begin{minipage}[b]{.42\linewidth}
		\centering
		\centerline{\includegraphics[width=\linewidth]{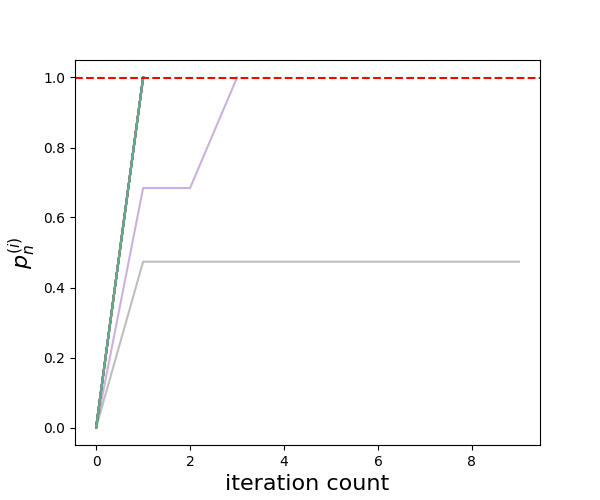}}
		\subcaption{an attack instance in A\textsubscript{1} (RE-AP)}
	\end{minipage}
	\begin{minipage}[b]{.42\linewidth}
		\centering
		\centerline{\includegraphics[width=\linewidth]{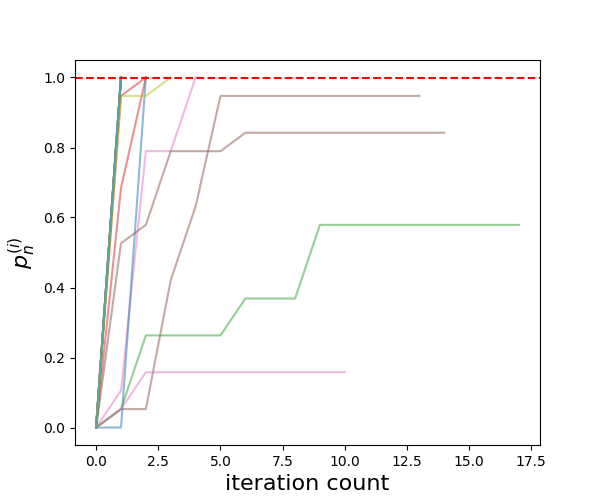}}
		\subcaption{an attack instance in A\textsubscript{7} (RE-PR)}
	\end{minipage}
	\begin{minipage}[b]{.42\linewidth}
		\centering
		\centerline{\includegraphics[width=\linewidth]{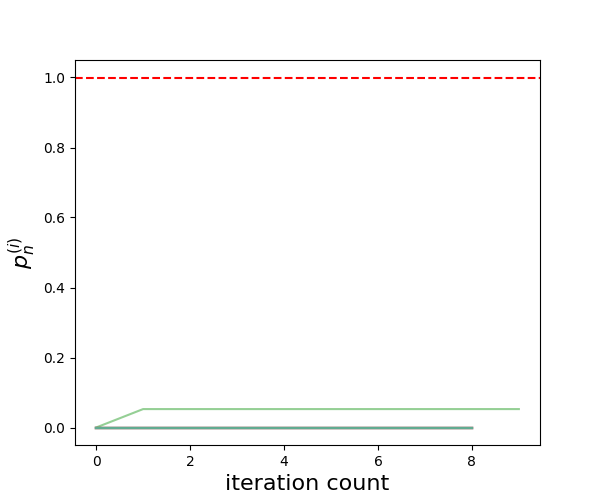}}
		\subcaption{a clean instance in C\textsubscript{1} (RE-AP)}
	\end{minipage}
	\begin{minipage}[b]{.42\linewidth}
		\centering
		\centerline{\includegraphics[width=\linewidth]{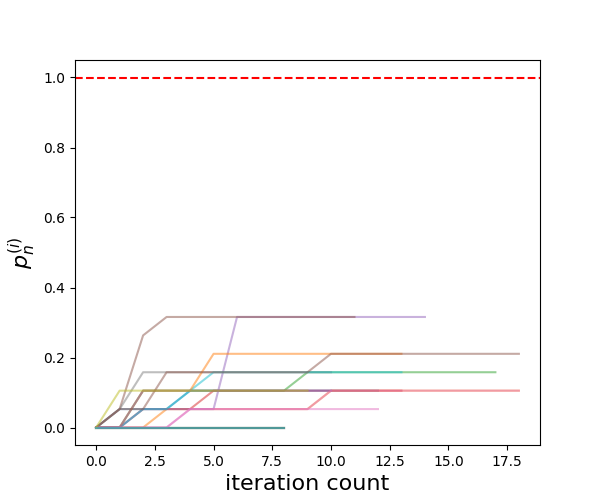}}
		\subcaption{a clean instance in C\textsubscript{1} (RE-PR)}
	\end{minipage}
	\begin{minipage}[b]{.42\linewidth}
		\centering
		\centerline{\includegraphics[width=\linewidth]{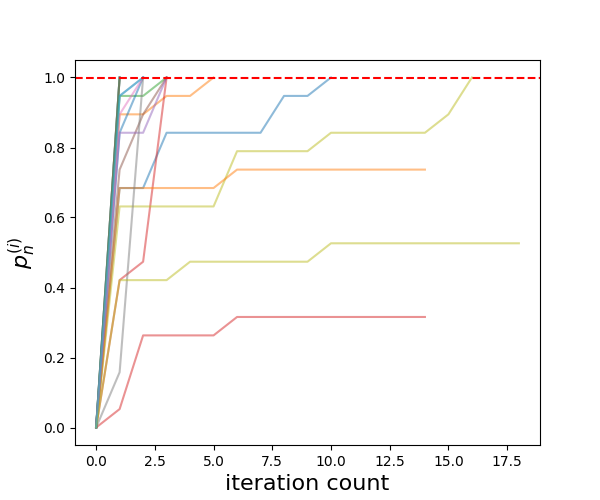}}
		\subcaption{an attack instance in A\textsubscript{2} (RE-AP)}
	\end{minipage}
	\begin{minipage}[b]{.42\linewidth}
		\centering
		\centerline{\includegraphics[width=\linewidth]{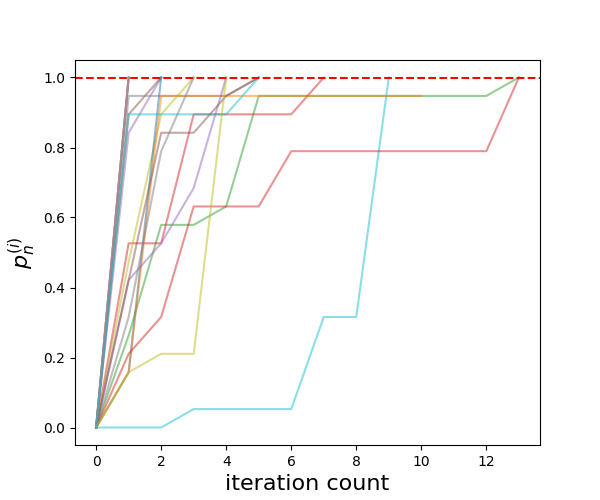}}
		\subcaption{an attack instance in A\textsubscript{8} (RE-PR)}
	\end{minipage}
	\begin{minipage}[b]{.42\linewidth}
		\centering
		\centerline{\includegraphics[width=\linewidth]{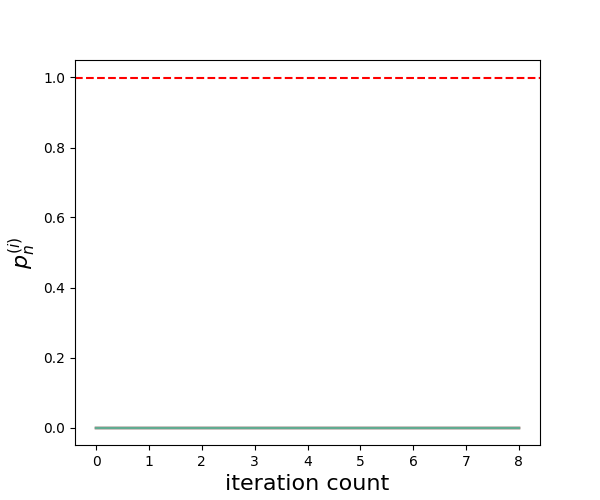}}
		\subcaption{a clean instance in C\textsubscript{2} (RE-AP)}
	\end{minipage}
	\begin{minipage}[b]{.42\linewidth}
		\centering
		\centerline{\includegraphics[width=\linewidth]{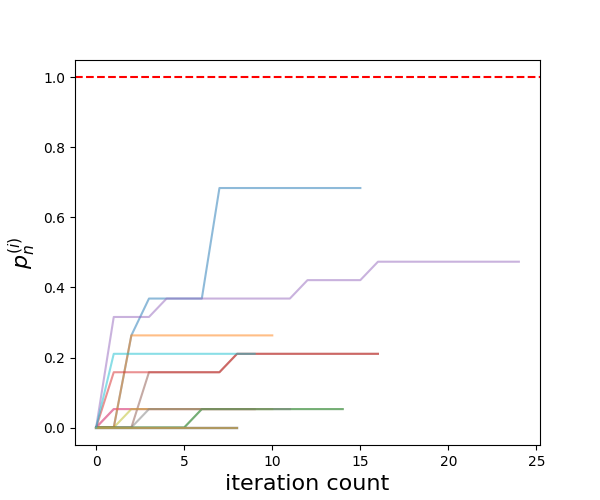}}
		\subcaption{a clean instance in C\textsubscript{2} (RE-PR)}
	\end{minipage}
	\caption{Example growing curves of $p_n^{(i)}$ (with patience $\tau=8$). In each figure, there are 20 curves, each corresponding to a clean sample used for detection. ET is the average final $p_n^{(i)}$ over these 20 samples.}
	\label{fig:influence_tau}
\end{figure}

We apply Alg. \ref{alg:detection} with RE-AP to a classifier being attacked in A\textsubscript{1} and a classifier being attacked in A\textsubscript{2}. We also apply Alg. \ref{alg:detection} with RE-PR to a classifier being attacked in A\textsubscript{7} and a classifier being attacked in A\textsubscript{8}. Moreover, Alg. \ref{alg:detection} with both RE-AP and RE-PR are applied to a clean classifier in C\textsubscript{1} and a clean classifier in C\textsubscript{2}. Note that classification domains associated with A\textsubscript{1}, A\textsubscript{7}, and C\textsubscript{1} are generated from CIFAR-10; while classification domains associated with A\textsubscript{2}, A\textsubscript{8}, and C\textsubscript{2} are generated from CIFAR-100.

For all experiments in this section, we set the patience parameter to $\tau=8$ instead of $\tau=4$ used in the main paper. The purpose is to get a better observation of the asymptotic behavior of $p_n^{(i)} = |\widehat{{\mathcal T}}_{\epsilon}({\bf x}_n^{(i)})|/(N_i-1)$ during ET estimation for each clean sample ${\bf x}_n^{(i)}$ used for detection (see line 7-12 of Alg. \ref{alg:detection} for the definition of related quantities). The number of clean samples used for detection is 20.

As shown in Fig. \ref{fig:influence_tau}, when applying our method to classifiers being attacked, $p_n^{(i)}$ (for some class $i$) quickly grows to 1 (in very few iterations) for most clean samples used for detection (see (a)(b)(e)(f) of Fig. \ref{fig:influence_tau}). For a few clean samples, $p_n^{(i)}$ quickly grows to a large value close to 1 and then slowly reaches 1 (see (e)(f) of Fig. \ref{fig:influence_tau}). Only for very few samples, $p_n^{(i)}$ stays at some value in between 0 and 1 (see (b)(e) of Fig. \ref{fig:influence_tau}). Based on these observations, which are generally true for other domains we investigated, $\tau=4$ is not a critical choices to our detection performance. The estimated ET, which is the average $p_n^{(i)}$ for all clean samples used for detection, will likely be greater than $\frac{1}{2}$, as determined by the majority of clean samples used for detection.

On the other hand, when applying our method to clean classifiers, with RE-AP, $p_n^{(i)}$ stays at 0 (or some small values close to 0) for all clean samples used for detection (see (c)(g) of Fig. \ref{fig:influence_tau}). When applying our method with RE-PR to the same classifiers, $p_n^{(i)}$ stays at 0 or some small values close to 0 for most samples and shows a trend of convergence (see (d)(h) of Fig. \ref{fig:influence_tau}). Again, reducing $\tau$ from 8 to 4 or further increasing $\tau$ will not change the estimated ET much -- the estimated ET for these clean instances will still be clearly less than $\frac{1}{2}$.

\section{Using Synthesized Images for Backdoor Detection}\label{sec:deepinspect}

For most REDs, the defender is assumed to possess a small, clean dataset (collected independently) for detection \cite{NC, Post-TNNLS, Tabor, DataLimited}. Although this assumption is relatively mild and feasible in most practical scenarios, it may be unnecessary if the defender is able to synthesize the images used for detection.

The first trial was made by \cite{DeepInspect}, where on simple datasets like MNIST \cite{MNIST}, images for backdoor pattern reverse-engineering are synthesized by model inversion \cite{model_inversion}. However, images generated in such a way are not guaranteed to be visually typical to their designated classes, especially for complicated domains like ImageNet. Thus we ask the following question:\\
{\it Can our ET framework be generalized to involve backdoor pattern reverse-engineering using synthesized images?}

\begin{figure}[t]
	\centering
	\begin{minipage}[b]{.23\linewidth}
		\centering
		\centerline{\includegraphics[width=\linewidth]{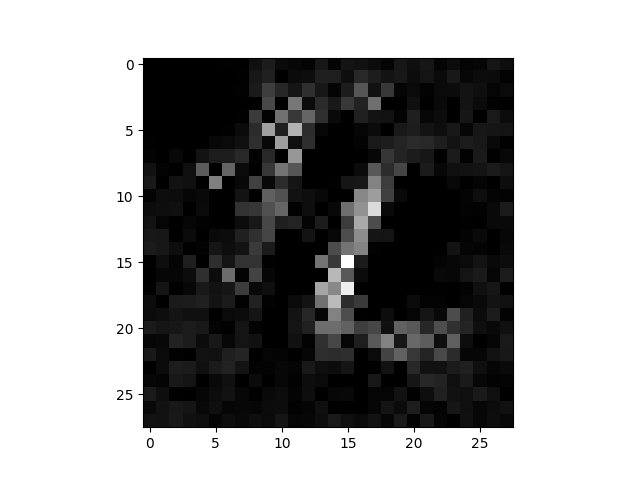}}
		\subcaption{``1''}
	\end{minipage}
	\begin{minipage}[b]{.23\linewidth}
		\centering
		\centerline{\includegraphics[width=\linewidth]{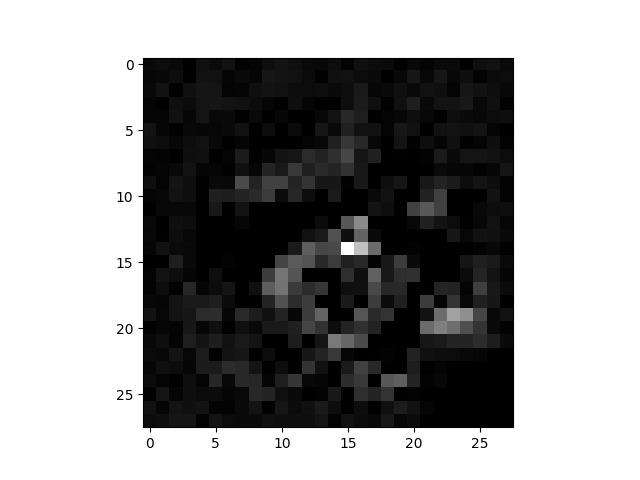}}
		\subcaption{``3''}
	\end{minipage}
	\begin{minipage}[b]{.23\linewidth}
		\centering
		\centerline{\includegraphics[width=\linewidth]{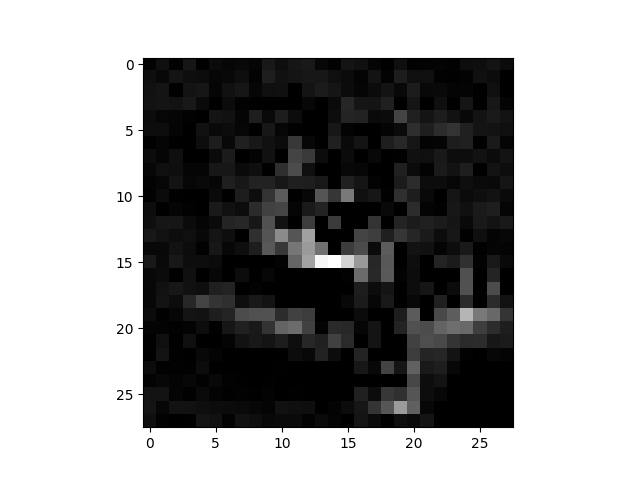}}
		\subcaption{``4''}
	\end{minipage}
	\begin{minipage}[b]{.23\linewidth}
		\centering
		\centerline{\includegraphics[width=\linewidth]{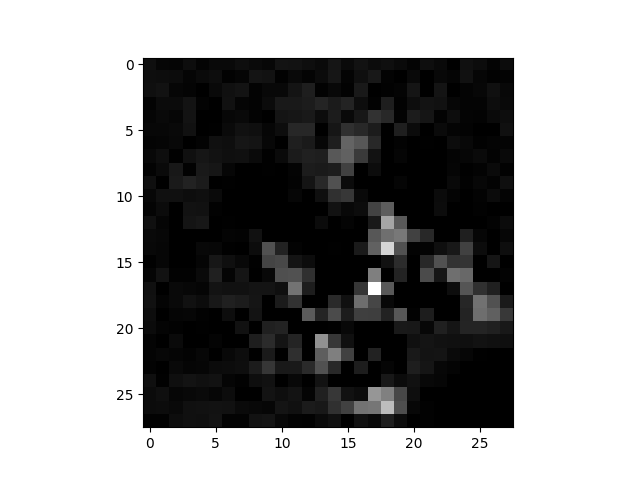}}
		\subcaption{``5''}
	\end{minipage}
	\caption{Images synthesized using a simpler version of the model inversion method used by \cite{DeepInspect}.}
	\label{fig:image_synthesized}
\end{figure}

\begin{figure}[t]
	\centering
	\begin{minipage}[b]{.6\linewidth}
		\centering
		\centerline{\includegraphics[width=\linewidth]{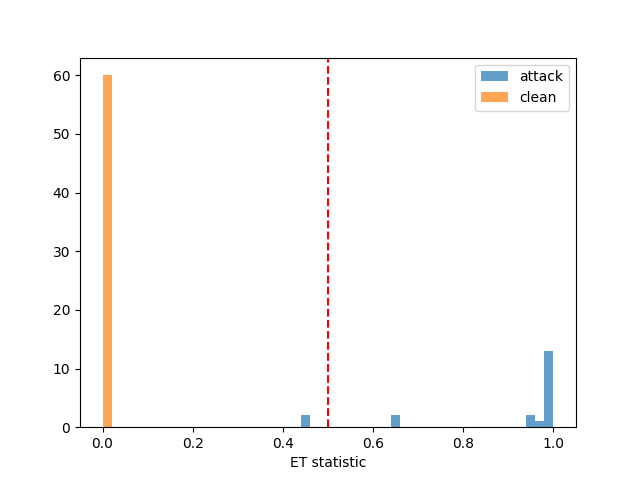}}
	\end{minipage}
	\caption{Histogram of ET statistics for classifiers in A\textsubscript{6} and C\textsubscript{6}, when the images for backdoor pattern reverse-engineering are synthesized.}
	\label{fig:ET_data_free}
\end{figure}

Here, we consider the 20 two-class domains generated from MNIST due to this domain's simplicity. We apply RE-AP to the 20 classifiers being attacked in A\textsubscript{6} and the 20 clean classifiers in C\textsubscript{6}, respectively. The clean images used for detection are generated using a simpler version of the model inversion method used by \cite{DeepInspect}. To synthesize an image for detection, we first initialize an all-zero image added with some small random positive noise. Then, we maximize (over the image values) the posterior of the designated class of the image using the classifier to be inspected, until the posterior is greater than 0.9. Examples for our synthesized images are shown in Fig. \ref{fig:image_synthesized}. Note that without the ``auxiliary constraints'' on images during their generation process (\cite{DeepInspect}), the generated images are usually atypical to their designated classes.

In Fig. \ref{fig:ET_data_free}, we show the ET statistic for the 80 classes associated with the 40 binary classifiers (20 classifiers being attacked and 20 clean classifiers). Among these 80 classes, there are 60 backdoor target classes and 20 non-target classes (please see the settings in Sec. \ref{subsec:exp_main}). Despite one backdoor target class evading our detection and one ET for a backdoor target class lying close to the threshold $\frac{1}{2}$, the separation of ET for backdoor target classes and non-target classes is even better than that in the bottom left figure in Fig. \ref{fig:stat_comparison} (where backdoor pattern reverse-engineering is performed on typical MNIST images). The possible reasons maybe:
\begin{itemize}
\item For non-attack cases, two independently synthesized samples predicted to the same class will likely be both atypical to their predicted class. They will likely share fewer features associated with their predicted class than two independent typical samples from this class. They can also be located anywhere in the input space. Thus, they will be more likely to be ``mutually not transferable''.
\item The synthesized samples are also far from the data manifold of classes other than the class they are predicted to. So the minimum perturbation size required to induce them to be misclassified will have a large norm -- possibly larger than the norm of the backdoor pattern when there is an attack. If this is the case, the ET statistic will be exactly 1 based on Thm. \ref{thm:attack_case}.
\end{itemize}
Future works along this research line include further investigation of the phenomenon we observed, and also improvement to the model inversion techniques, such that more complicated domains can be handled.

\section{Computational Complexity}\label{sec:complexity}

Here, we discuss the computational complexity of Alg. \ref{alg:detection}. More generally, we consider Alg. \ref{alg:detection} for multi-class scenarios. Let $K$ be the number of classes, $N$ be the number of samples per class for detection, and $T$ be the maximum number of forward/backward propagations for backdoor pattern reverse-engineering (i.e. ``solving problem (\ref{eq:pert_est_raw})'' in line 8 of Alg. \ref{alg:detection}). For each of the $K$ classes, we compute the $p_n^{(i)}$ quantity (line 12 of Alg. \ref{alg:detection}) for each of the $KN$ samples used for detection. To compute each $p_n^{(i)}$, the transferable set (line 9 of Alg. \ref{alg:detection}) is updated {\it at most} $(KN-1)\times\tau$ times; and each updating involves {\it at most} $T$ forward/backward propagations. Thus, the theoretical upper bound of the number of forward/backward propagations for Alg. \ref{alg:detection} is of the order ${\mathcal O}(K^3N^2T\tau)$ where $\tau$ is the patience parameter.

However, the actual complexity of our method in practice is much lower than the theoretical bound. First, the purpose of using a sufficiently large number of samples for detection is to reduce the variance of the estimated ET (see Apdx. \ref{subsec:num_of_images} for more discussion and empirical results). Thus, for sufficiently large $K$, we can set $N=1$ and use only $K$ samples for detection; or even use fewer samples randomly selected from these $K$ samples.

Second, the actual number of iterations for updating the transferable set is much smaller than $(KN-1)\times\tau$ in practice. In Apdx. \ref{sec:patience_parameter}, we discussed the influence of the choice of $\tau$ on the estimated ET statistic and provided empirical results. For attack cases, $p_n^{(i)}$ reaches 1 (and thus terminates the updating of the transferable set) very quickly (even in one or two iterations) for most samples (see (a)(b)(e)(f) of Fig. \ref{fig:influence_tau}). For non-attack cases, convergence is also reached quickly for most samples (see (c)(d)(g)(h) of Fig. \ref{fig:influence_tau}). In these examples, 20 images are used for estimating the ET statistic with patience $\tau=8$. Thus the theoretical maximum number of iterations for the transferable set updating for an image is $(20-1)\times8=152$, which is several times larger than the {\it actual} maximum number of iterations ($<$25).

\begin{figure}[t]
	\centering
	\begin{minipage}[b]{.6\linewidth}
		\centering
		\centerline{\includegraphics[width=\linewidth]{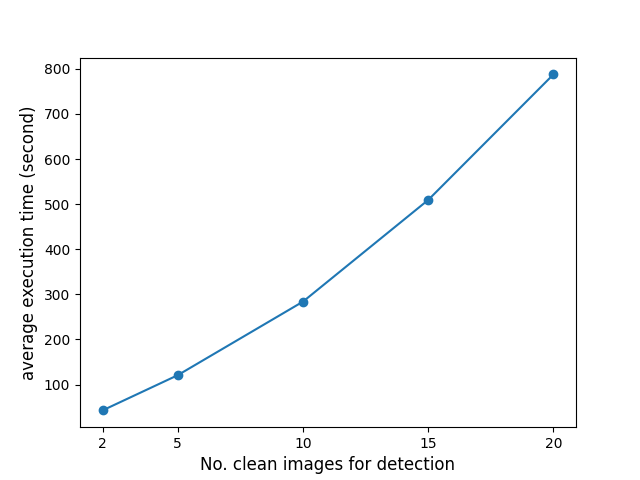}}
	\end{minipage}
	\caption{Execution time versus the number of samples for detection.}
	\label{fig:time}
\end{figure}

In Fig. \ref{fig:time}, we show the curve of the execution time growing with the number of samples used for detection. Specifically, we apply our detector to the clean binary classifiers in C\textsubscript{1} and record the average execution time, with the number of images for detection varying in [2, 5, 10, 15, 20]. Execution time is measured on a dual card RTX2080-Ti (11GB) GPU. Comparing with Fig. \ref{fig:influence_num_of_images} where we show the effectiveness of our detector with only a few samples for detection, the actual time required for our detector to achieve good performance is only a few minutes. Moreover, for attack cases, the total number of iterations (for all samples used for detection) required for ET estimation is generally much smaller than for clean cases (as shown in Fig. \ref{fig:influence_tau}). Thus, the actual execution time when there is an attack should be much smaller than the time shown in Fig. \ref{fig:time}.

\section{Influence of Backdoor Pattern Reverse-Engineering on Detection Performance}\label{sec:results_discussion}

Like most existing REDs, our method cannot achieve 100\% detection accuracy in practice. As shown in both Tab. \ref{tab:detection_accuracy} and Fig. \ref{fig:stat_comparison}, out method, though achieving generally good detection accuracy, suffers from a few false negatives and false positives.

\begin{figure}[t]
	\centering
	\begin{minipage}[b]{.48\linewidth}
		\centering
		\centerline{\includegraphics[width=.45\linewidth]{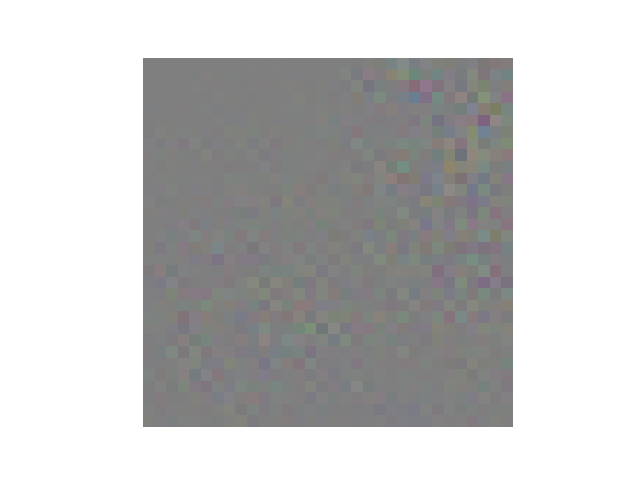}}
		\centerline{\includegraphics[width=.45\linewidth]{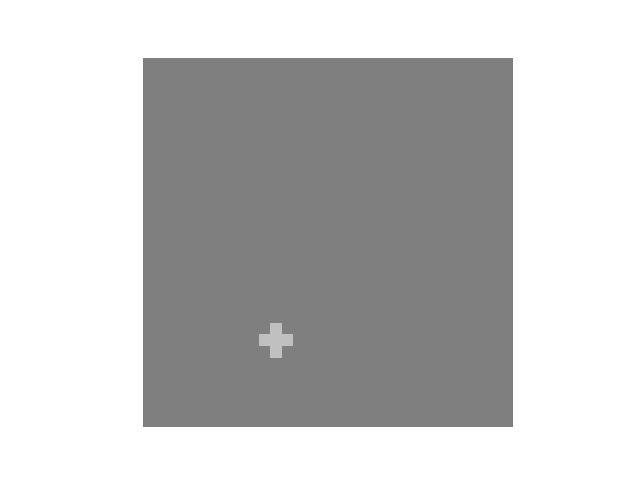}}
		\subcaption{failed BP reverse-engineering}\label{fig:BP_RE_fail}
	\end{minipage}
	\begin{minipage}[b]{.48\linewidth}
		\centering
		\centerline{\includegraphics[width=.45\linewidth]{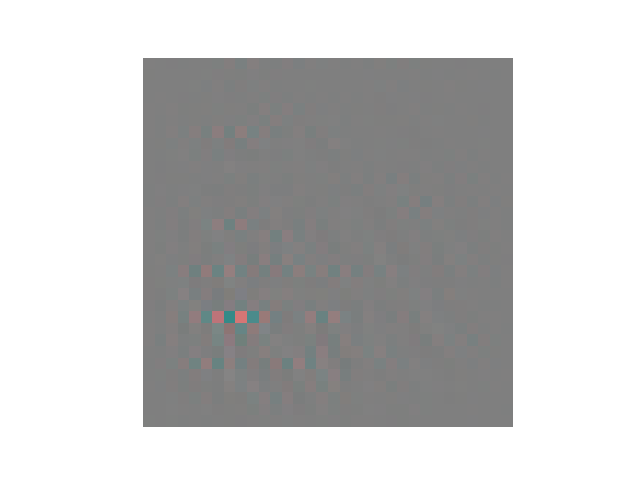}}
		\centerline{\includegraphics[width=.45\linewidth]{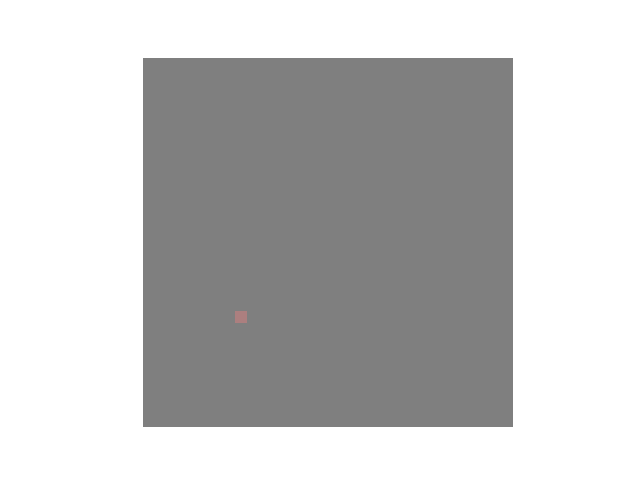}}
		\subcaption{successful BP reverse-engineering}\label{fig:BP_RE_succ}
	\end{minipage}
	\caption{Example of (a) a failed BP reverse-engineering; and (b) a successful BP reverse-engineering. For both examples, the estimated BP is on the top, while the true BP used by the attacker is at the bottom.}
	\label{fig:BP_RE}
\end{figure}

One main reason for the false negatives is that the existing BP reverse-engineering techniques used in our detection framework cannot not always recover the key features of the true BP used by the attacker\footnote{Possible reasons can be related to the design of the BP reverse-engineering algorithm, the attack configurations (which, e.g., cause a low attack success rate), etc.}. In such cases, the ``non-transfer probability'' for a backdoor target class will be large, as if there were no attacks; and ET will likely be less than $\frac{1}{2}$. In Fig. \ref{fig:BP_RE_fail}, we show an example of such failure of BP reverse-engineering. We consider a classifier being attacked associated with a two-class domain generated from CIFAR-100 that evaded our detection (from a blue bar less than $\frac{1}{2}$ in the second figure of the left row in Fig. \ref{fig:stat_comparison}). For a random sample, we observed that the estimated BP is visually uncorrelated with the true BP used by the attacker. For comparison, we also show an example for a successful BP reverse-engineering in Fig. \ref{fig:BP_RE_succ}, where the estimated pattern contains some key features of the true BP used by the attacker; the ET statistic for this class is larger than $\frac{1}{2}$ and the attack is successfully detected.

We notice that most false positives happen when applying RE-PR (i.e. the reverse-engineering method in \cite{NC} for patch replacement BPs) to clean classifiers. As introduced in Sec. \ref{subsec:BP_RE_alg}, RE-PR searches for a small image patch that induces high group misclassification to a putative target class. But it is possible for some domains and some classes, there are {\it common} key features associated with the class that is easy to be reverse-engineered on a small spatial support/mask. Such features, if reverse-engineered on one sample, will likely also induce another sample to be misclassified. Although this hypothesis requires further validation in the future, we have shown empirically that the false detections have a low frequency, given that our experiments are performed on a large number of different two-class domains generated from six benchmark datasets.


\section{Additional Experiments for Multi-Class Scenarios with Arbitrary Number of Attacks}\label{sec:exp_multi_class_more}

In Sec. \ref{subsec:exp_multi_class}, we showed the performance of our ET framework against BAs for multi-class scenarios with arbitrary number of attacks. We considered the original domains of CIFAR-10, CIFAR-100, and STL-10. For each domain, we showed that the maximum ET statistic over all classes is larger than $\frac{1}{2}$ if there is an attack, and less than $\frac{1}{2}$ if there is no attack.

\begin{table}[t]
	\caption{Detection accuracy of ET with RE-AP on the 10 five-class domains with 1 attack, 2 attacks, and no attack, compared with RED-AP (original) and RED-AP (MAD).}
	\label{tab:detection_accuracy_multi_class}
	\begin{center}
		\renewcommand{\arraystretch}{1.5}
		\resizebox{.5\textwidth}{!}{
			\begin{tabular}{c|c c c}
				&1 attack &2 attacks &clean
				\\ \hline
				ET (RE-AP) &10/10 &10/10 &10/10\\
				RED-AP (original) &10/10 &0/10 &9/10\\
				RED-AP (MAD) &6/10 &2/10 &7/10
		\end{tabular}}
	\end{center}
\end{table}

Here, we further investigate the capability of our ET framework on more multi-class domains. In particular, we generate 10 five-class domains from CIFAR-10, with the five classes for each domain randomly selected from the original ten classes of CIFAR-10. For each domain, we create an attack instance with one attack, an attack instance with two attacks, and a clean instance. The protocols for attack creation, classifier training, and defense configurations are the same as in Sec. \ref{subsec:exp_multi_class}. Moreover, we compare our ET (with RE-AP) with the existing RED proposed by \cite{Post-TNNLS} (with its original protocol including a confidence threshold 0.05 (indicating a confidence 0.95)), as well as the same RED proposed by \cite{Post-TNNLS} but with the anomaly detection method changed to the one based on median absolute deviation (MAD) proposed by \cite{NC} (with the same threshold 2 (also indicating a confidence 0.95) used by \cite{NC}). For simplicity, we name these two methods as ``RED-AP (original)'' and ``RED-AP (MAD)'' respectively. All three methods use 3 clean images per class for detection. 

\begin{figure}[t]
	\centering
	\begin{minipage}[b]{.32\linewidth}
		\centering
		\centerline{\includegraphics[width=\linewidth]{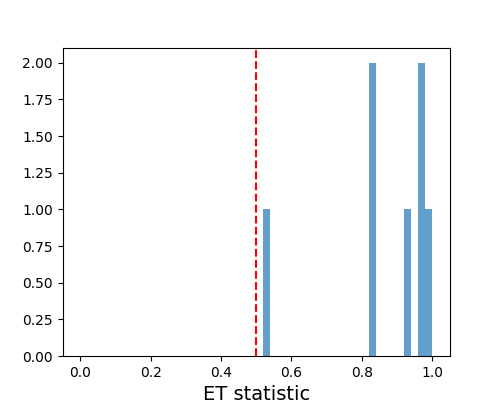}}
		\subcaption{1 attack}
	\end{minipage}
	\begin{minipage}[b]{.32\linewidth}
		\centering
		\centerline{\includegraphics[width=\linewidth]{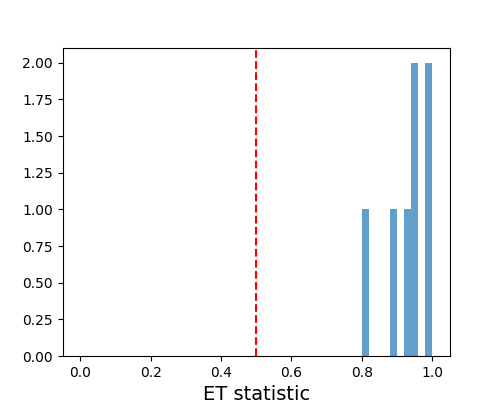}}
		\subcaption{2 attacks}
	\end{minipage}
	\begin{minipage}[b]{.32\linewidth}
		\centering
		\centerline{\includegraphics[width=\linewidth]{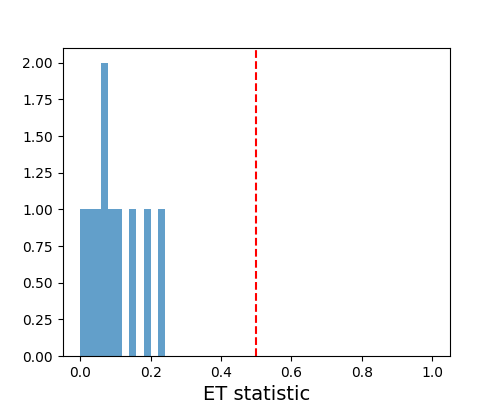}}
		\subcaption{clean}
	\end{minipage}
	\caption{Histogram of the maximum ET over the five classes for classifier ensembles with (a) 1 attack, (b) 2 attacks, and (c) no attack.}
	\label{fig:multi_class_hist}
\end{figure}

In Tab. \ref{tab:detection_accuracy_multi_class}, we show the detection accuracy of our ET compared with RED-AP (original) and RED-AP (MAD). Our ET detects all attacks with no false detections. For each of the three ensembles with ten classifiers, the histogram of the maximum ET statistic over all the five classes is shown in Fig. \ref{fig:multi_class_hist}. On the other hand, RED-AP (original) achieved good accuracy to detect classifiers with 1 attack, with very low false detection rate, but it fails to detect any classifiers with 2 attacks. This is because the anomaly detection setting of RED-AP (original) relies on the 1-attack assumption to estimate a null distribution for non-backdoor class pairs. For RED-AP (MAD), there is also a clear gap in performance compared with our ET. This is because the median of the five statistics is largely biased by statistics associated with backdoor target classes.

\end{document}